\documentclass[a4paper,UKenglish, numberwithinsect, cleveref, autoref, thm-restate]{lipics-v2021}

\pdfoutput=1 
\hideLIPIcs  

\title{A Simple Obligation to Metric Interval Temporal Logic} 
\date{}

\usepackage{amsmath, amssymb}
\usepackage{algorithm}
\usepackage{algpseudocode}
\usepackage{mathtools}
\usepackage{thmtools}
\usepackage{thm-restate}
\usepackage{stmaryrd}
\usepackage{enumitem}
\usepackage{tikz}
\usetikzlibrary{automata, arrows.meta, positioning, shapes.geometric}
\usepackage{caption}
\usepackage{enumerate}
\usepackage{todonotes}

\newcommand{\incl}{\subseteq}
\newcommand{\Rpos}{\mathbb{R}_{\geq 0}}
\newcommand{\xra}[1]{\xrightarrow{#1}}

\newcommand{\N}{\mathbb{N}}

\newcommand{\U}{\mathcal{U}}
\newcommand{\weakuntil}{\mathcal{WU}}
\newcommand{\X}{\mathbf{X}}
\newcommand{\G}{\mathbf{G}}

\newcommand{\obred}{\mathsf{Obl}^{\mathsf{red}}}
\newcommand{\Thetaopt}{\Theta_{\mathsf{opt}}}
\newcommand{\idopt}{\id_{\mathsf{opt}}}

\newcommand{\Ii}{\mathcal{O}}
\newcommand{\Oo}{\mathcal{O}}

\newcommand{\F}{\mathbf{F}}

\newcommand{\range}{\mathsf{range}}

\newcommand{\init}{\operatorname{init}}
\newcommand{\ob}{\mathsf{Obl}}
\newcommand{\id}{\mathsf{id}}
\newcommand{\identifiers}{\mathsf{Identifiers}}
\newcommand{\Acc}{\mathsf{Acc}}

\newcommand{\optimize}{\mathsf{Optimize}}

\newcommand{\type}{\mathsf{type}}
\newcommand{\modifyold}{\textsf{Modify-old}}
\newcommand{\removenew}{\textsf{Remove-new}}
\newcommand{\mmerge}{\textsf{Merge-overlapping}}
\newcommand{\keepboth}{\textsf{Keep-both}}

\newcommand{\regeq}{\sim}
\newcommand{\integraltimer}[1]{\lceil #1 \rceil}
\newcommand{\integralclock}[1]{\lfloor #1 \rfloor}
\newcommand{\fractional}[1]{\{ #1 \}}

\let\regeq\simeq
\let\sat\models
\let\dual\widetilde

\author{Patricia Bouyer}{Universit\'e Paris-Saclay, CNRS, ENS Paris-Saclay, LMF, Gif-sur-Yvette,
France}{bouyer@lmf.cnrs.fr}{}{}
\author{B Srivathsan}{Chennai Mathematical Institute, India \and CNRS, ReLaX, IRL 2000, Siruseri, India}{sri@cmi.ac.in}{}{}

\author{Vaishnavi Vishwanath}{Chennai Mathematical Institute, India}{vaishnaviv@cmi.ac.in}{}{}

\authorrunning{Patricia Bouyer, B Srivathsan,
Vaishnavi Vishwanath} 
\Copyright{Patricia Bouyer, B Srivathsan,
Vaishnavi Vishwanath}

\ccsdesc[500]{Theory of computation~Formal languages and automata theory}

\keywords{timed logics, satisfiability, metric temporal logic, symbolic procedures}
\nolinenumbers

\begin{document}

\maketitle
\begin{abstract}
Satisfiability of Metric Interval Temporal Logic (MITL) is a widely investigated subject. In this work, we present a new, and arguably simpler, approach for MITL satisfiability, based on an idea of tracking time-constrained obligations along a word.

To check whether a Linear Temporal Logic (LTL) formula is true at a position of a word, it is natural to generate certain obligations that need to be satisfied at a later point. For instance, $a~\U~b$ (with strict Until semantics) is true at position $i$ if either $b$ or the set $\{a, a~\U~b\}$ is true at $i+1$. We enhance this idea in the context of MITL by introducing a notion of time inside these obligations. However, a naïve procedure could lead to more and more obligations getting generated along the word, with no bound on the number. We propose a simple mechanism to eliminate or merge redundant obligations. For MITL, this mechanism ensures that only a bounded number of obligations are maintained along the entire timed word. We develop this observation into a symbolic procedure for MITL satisfiability using regions.
\end{abstract}

\section{Introduction}
\label{sec:intro}

Metric Temporal Logic~\cite{koymans1990specifying} is an extension of Linear Temporal Logic (LTL) to reason about timing constraints between events. Syntactically, it is obtained by adding intervals to the temporal operators, for instance the Until operator $\U$ of LTL becomes $\U_I$ in MTL, with $I$ being an interval over the reals that is bounded by natural numbers or $\infty$. MTL can be interpreted over timed words (pointwise semantics), or over timed signals (continuous semantics). For comparisons between these semantics we refer the reader to~\cite{DBLP:journals/sttt/DSouzaP07,DBLP:journals/corr/abs-2603-15379}. \emph{In this work, we will consider only the pointwise semantics.}

Over the years, it has been established that MTL satisfiablity is undecidable over infinite words~\cite{OW06}, whereas it is decidable over finite words, albeit with non-primitive recursive complexity~\cite{DBLP:conf/lics/OuaknineW05,ouaknine2007decidability}. A restriction of MTL, called \emph{Metric Interval Temporal Logic (MITL)} limits the syntax by disallowing singleton intervals, i.e., intervals of the form $[l,l]$. In a fundamental work~\cite{alur1996benefits} it was shown that MITL satisfiability is EXPSPACE-complete for both finite and infinite words. This positive result has led to several improvements and the development of tools for MITL satisfiability and model-checking~\cite{brihaye2013mitl,DBLP:conf/formats/BrihayeEG14,DBLP:conf/cav/BrihayeGHM17,DBLP:conf/tacas/HoKMMP26,DBLP:conf/concur/0001G0S24,DBLP:conf/tacas/AkshayCGGS26}.

Satisfiability procedures typically involve a translation of MITL to some form of automata: the very first work~\cite{alur1996benefits} translates MITL formulas to equivalent timed automata~\cite{alur1994theory}; subsequent work \cite{brihaye2013mitl} and its extensions translate MITL to a model of 1-clock alternating timed automata, which are then compiled to a network of timed automata; the recent work \cite{DBLP:conf/concur/0001G0S24} translates MITL to generalized timed automata~\cite{DBLP:conf/cav/AkshayGGJS23}. A common perception in this line of work is that the translations can be difficult to understand, as they require a careful case analysis and/or rely on sophisticated machinery. This is particularly the case for the Until operator $\U_I$ when $I$ is a two-sided interval, with none of the end-points being $0$ or $\infty$.  

In this paper, we present a new technique for MITL satisfiability, which incidentally does not go via an automaton translation. We believe our technique is simpler to understand as it does not require any intricate transformation 
or heavy machinery. 
Our approach stays close to mimicking the semantics of satisfaction of formulas. To assert whether a formula is true at a point in the word, one can generate an \emph{obligation} which needs to be settled along the word. For instance, the LTL formula $a~\U~b$ (with strict Until semantics) is true at position $i$ if either $b$ or the set $\{a, a~\U~b\}$ is true at $i+1$: put in other words, the obligation $a~\U~b$ at position $i$ generates either an obligation $b$ or a set of obligations $\{a, a~\U~b\}$ at position $i+1$. We enrich this notion of obligations by storing some kind of timing information and build a transition system whose accepting runs precisely capture the set of timed words satisfying the formula (Section~\ref{sec:obligations}). Furthermore, for MITL, we prove that it is sufficient to maintain a bounded set of obligations at each point in the word: the key contrast w.r.t. the existing approaches is the fact that we arrive to this bound by simple rules to merge or eliminate redundant obligations (Section~\ref{sec:removing-redundant-obligations}). Finally, we develop a region-abstraction over these obligations which leads to a symbolic algorithm for MITL satisfiability (Section~\ref{sec:symbolic}). After recalling preliminaries in Section~\ref{sec:mitl-prelims}, we present in Section~\ref{sec:overview} a detailed overview of the existing approaches followed by a brief introduction to our approach.

\section{Preliminaries}
\label{sec:mitl-prelims}
Let $\Sigma$ be a finite alphabet. Let $\N$ and $\Rpos$ denote the set of natural numbers (including $0$) and non-negative reals, respectively. A \emph{timed word} is an infinite sequence $w = (a_1,\tau_1)(a_2,\tau_2)\dots$ where $a_i \in \Sigma$ for each $i \in \N$, and $\tau_i \in \Rpos$ with $\tau_i \le \tau_{i+1}$ for all $i \in \N$. The number $\tau_i$ denotes the timestamp of occurrence of the action $a_i$.  A timed word is \emph{non-Zeno} if the sequence $\{\tau_i\}_{i \ge 1}$ of timestamps is diverging. Therefore, in a non-Zeno timed word, every unit interval contains only finitely many actions.

\smallskip 
\noindent \textbf{\emph{Metric Temporal Logic (MTL).}} The syntax of MTL is described using the grammar:  
\begin{align*}
  \varphi:= \top \mid \bot \mid a \mid \lnot a \mid \varphi \land \varphi \mid \varphi \lor \varphi \mid \lnot \varphi \mid \varphi ~\U_{I}~ \varphi \mid \varphi ~\dual{\U}_{I}~\varphi
\end{align*} 
where $a \in \Sigma$, $I \incl \Rpos$ is a closed, open or half-closed interval with endpoints in $\N \cup \{\infty\}$; $\U$ is called the \emph{Until} operator, and $\dual{\U}$ is called the \emph{dual Until} operator. 
Given a timed word $w = (a_1,\tau_1)(a_2,\tau_2)\dots$, a position $i \ge 1$ and an MTL formula $\varphi$, we define when $w$ satisfies $\varphi$ at position $i$, written as $(w,i) \sat \varphi$, as follows:
\begin{itemize}
    \item $(w, i) \sat \top$, $(w, i) \not \sat \bot$
\item $(w, i) \sat a$ if $a_{i} = a$, and $(w,i) \sat \lnot a$ if $a_{i} \neq a$
\item $(w,i) \sat \varphi_{1} \land \varphi_{2}$ if $(w,i) \sat \varphi_{1}$ and $(w,i) \sat \varphi_{2}$
\item $(w,i) \sat \varphi_{1} \lor \varphi_{2}$ if $(w,i) \sat \varphi_{1}$ or $(w,i) \sat \varphi_{2}$ 
\item $(w,i) \sat \lnot \varphi$ if $(w,i) \not\sat \varphi$
\item $(w,i) \sat \varphi_1 ~\U_{I}~ \varphi_{2}$ if there is some $j > i$ such that $(w,j) \sat \varphi_{2}$, $\tau_j - \tau_i \in I$ and for all $i < k < j$, we have $(w,k) \sat \varphi_1$
\item $(w,i) \sat \varphi_1 ~\dual{\U}_I~ \varphi_2$ if for all $j$ s.t. $j > i$ and $\tau_j - \tau_i \in I$, either $(w,j) \sat \varphi_{2}$, or there is some $k$ with $i < k < j$ with $(w,k) \sat \varphi_{1}$
\end{itemize}
Observe that we make use of a strict semantics for the $\U_I$ operator as in~\cite{DBLP:conf/lics/OuaknineW05,DBLP:conf/concur/Henzinger98,DBLP:conf/tacas/HoKMMP26}: we say $j > i$ (strictly greater) and $i < k < j$. The weak semantics (with $j \ge i$ and $i \le k < j$) can be recovered using the strict Until operator. For example, writing $\weakuntil$ for the until operator with a non-strict semantics, we see that $(a ~\weakuntil_{[0,5]}~ b) = b \lor  (a \land (a~\U_{[0, 5]}~b))$. 

Notice from the semantics that: $\neg (\varphi_1~\U_I~\varphi_2) = (\neg \varphi_1)~\dual{\U}_I~(\neg \varphi_2)$, and $\neg (\varphi_1~\dual{\U}_I~\varphi_2) = (\neg \varphi_1)~\U_I~(\neg \varphi_2)$. Hence any formula of the form $\neg \varphi$ can be converted into a negation normal form where all the negations appear at the atomic level and whose size is linear in the size of the original formula. From now on, we assume formulas are in negation normal form wlog. 
Here are some derived operators for convenience~\cite{DBLP:conf/lics/OuaknineW05}:  constrained next time operator $\X_I\varphi \equiv \bot~\U_I~\varphi$, constrained eventually $\F_I \varphi \equiv \top \U_I \varphi$, and constrained always $\G_I \varphi \equiv \neg (\top ~\U_I~\varphi) \equiv \bot~\dual{\U}_I~(\neg \varphi)$. Furthermore, if $I$ is $\Rpos$, we might omit the subscript.

The \emph{satisfiability} question for MTL is as follows: given an MTL formula $\varphi$ does there exist a non-Zeno timed word $w$ such that $w \sat \varphi$. This problem is known to be undecidable~\cite{OW06}.

\smallskip

\noindent \textbf{\emph{Metric Interval Temporal Logic (MITL).}} An interval $I$ is said to be \emph{singular} if it contains a single point, i.e., it is of the form $[l,l]$ for some $l \in \N$. MITL is a restriction of MTL where all intervals used are \emph{nonsingular}: for e.g., $a ~\U_{[2,2]} b$ is not an MITL formula. 
For MITL, the satisfiablity problem was shown to be EXPSPACE-complete in~\cite{alur1996benefits}. Their algorithm converts an MITL formula into an equivalent timed automaton, equipped with a generalized B\"uchi condition.  
Although the procedure is described for the continuous semantics (that uses timed state sequences instead of timed words), it is mentioned in~\cite[page 119]{alur1996benefits} that the result also holds for the pointwise semantics (that uses timed words as we do in this paper).

\section{Existing constructions for MITL and overview of our approach}
\label{sec:overview}

Let us start with the untimed LTL formula of the form $\varphi_1 \U \varphi_2$, and an untimed word $a_1 a_2 \cdots$. To assert $\varphi_1 \U \varphi_2$ at position $i$, the procedure needs to check if there exists a $j > i$ that satisfies $\varphi_2$ and $\varphi_1$ holds at all positions $k$ such that $i < k < j$ (recall we are using strict semantics for Until). Automata constructions typically maintain the formula $\varphi_1 \U \varphi_2$ as an obligation in the state or configuration that reads $a_i$. During the transition on $a_i$, the obligation $\varphi_1 \U \varphi_2$ at $i$ is transformed into one of two possible ways at $i+1$ non-deterministically: it either becomes an obligation $\varphi_2$ or a set of obligations $\{ \varphi_1, \varphi_1 \U \varphi_2 \}$. This construction is continued, leading to a set of obligations being maintained at each configuration along the word. Since the obligations are simply subformulas of the main formula being checked, and since there is no need to maintain multiple copies of the same subformula in a configuration, the state-space thus generated is finite and leads to an automaton construction. 

Let us now see what happens in the timed case. We consider the case $\varphi_1 \U_I \varphi_2$ where $I$ is a bounded interval of the form $[l, u]$, as this is the most complex case. Timing information needs to be encoded into the configurations maintaining the obligations. Let $w = (a_1, \tau_1) (a_2, \tau_2) \cdots$ be a timed word. Once again, we assume we need to assert $\varphi_1 \U_I \varphi_2$ at position $i$.

\subparagraph*{The 1-ATA method~\cite{ouaknine2007decidability}.} A method using 1-clock alternating timed automata (1-ATA) was proposed in \cite{ouaknine2007decidability}. On reading the letter $a_i$, the automaton generates an obligation of the form $(\varphi_1 \U_I \varphi_2, 0)$, where $0$ denotes the time since the obligation has been generated. On reading $a_{i+1}$ at the next position $i+1$, the obligation is non-determinitically transformed to either $(\varphi_2, 0)$ provided $\tau_{i+1} - \tau_i \in I$, or a set $\{(\varphi_1, 0), (\varphi_1 \U_I \varphi_2, \tau_{i+1} - \tau_i ) \}$.  In the second case, notice that there is a fresh obligation on $\varphi_1$ that is generated. When these transition rules are continued further down the run, more obligations on $\varphi_1$, with varying values of the associated timing component could be generated. Hence the configurations generated have unbounded ``width'': there is no bound on the number of obligations generated inside a configuration. A region-like abstraction was proposed in~\cite{ouaknine2007decidability} and it was shown that for \emph{finite timed words}, the abstraction along with a suitable entailment relation leads to a finite transition system. This was proposed in the context of full Metric Temporal Logic, and no customization for MITL was studied.

\subparagraph*{The MightyL approach~\cite{brihaye2013mitl,DBLP:conf/formats/BrihayeEG14,DBLP:conf/cav/BrihayeGHM17}.} Later methods concentrated on controlling the number of obligations generated per subformula when the logic is restricted to the MITL fragment. The first line of work in this direction is \cite{brihaye2013mitl} which developed an idea of 1-clock alternating timed automata with \emph{an interval semantics}. Intuitively, the interval semantics was used to abstract a set of obligations by considering the earliest and latest generated within that set. While~\cite{brihaye2013mitl} was restricted to finite words, it was extended to infinite words in~\cite{DBLP:conf/formats/BrihayeEG14}. These works laid the foundation for the tool MightyL~\cite{DBLP:conf/cav/BrihayeGHM17} which presented a full conversion from MITL to a network of timed automata, going via the abstraction described above. However, the number of discrete states generated for an until formula $\varphi_1 \U_I \varphi_2$ was exponential in a constant $\kappa = 1 + \lceil \dfrac{l}{u - l}\rceil$. A follow-up to MightyL, avoiding this blow-up, is the very recent work that we describe next.

\newcommand{\Pn}{\mathsf{Pn}}

\subparagraph*{The MightyPPL approach~\cite{DBLP:conf/tacas/HoKMMP26}.} This is a  different approach to MightyL where the novelty lies in encoding $\varphi_1 \U_I \varphi_2$ using a ``Pnueli operator'' $\Pn$: a formula $\Pn_I(\phi_1, \phi_2, \dots, \phi_n)$ holds at position $i$ if there exist $i < j_1 < j_2 < \cdots < j_n$ such that for each $1\le m \le n$, $\tau_{j_m} - \tau_i \in I$ and $(w, j_m) \models \phi_m$. Such a formula cannot be written in plain MITL. It is argued that $\varphi_1 \U_{(k, k+1)} \varphi_2$ is equivalent to the following formula (see Lemma~3 of~\cite{DBLP:conf/tacas/HoKMMP26}, Lemma 9 of~\cite{DBLP:journals/corr/abs-2510-01490}):
\begin{small}
\begin{align*}
 \varphi_1 \U_{(k, \infty)} \varphi_2 \land \bigvee_{i \in \{0, \dots, k+1\}} \left( \Pn_{[0,k+1)}(\underbrace{\phi^{\ge 1}, \phi^{\ge 1}, \cdots, \phi^{\ge 1}}_{i}, \varphi_2) \land \neg \Pn_{[0, k]}(\underbrace{\phi^{\ge 1}, \phi^{\ge 1}, \dots}_{i}, \phi^{\ge 1}) \right)
\end{align*}
\end{small}
where $\phi^{\ge 1} = \varphi_2 \land (\neg \varphi_2 \U_{[1, \infty]} \varphi_2)$. The above observation extends to intervals of the form $[l, u]$. Subsequently, the Pnueli operator $\Pn(\phi_1, \dots, \phi_n)$ is translated into a network of timed automata. Careful optimizations ensure that the additional blowup as in MightyL is avoided.

\subparagraph*{The TEMPORA approach~\cite{DBLP:journals/corr/abs-2510-14699, DBLP:conf/tacas/AkshayCGGS26}.} Another recent approach to MITL satisfiability uses the model of Generalized Timed Automata (GTA)~\cite{DBLP:conf/cav/AkshayGGJS23}. The novelty in this approach is the use of \emph{future clocks} (called \emph{timers} here) to guess times to certain good witnesses. Intuitively, a timer is assigned (non-deterministically) a non-negative value and then decreases towards $0$; technically and equivalently, a timer can be assigned a non-positive value and made to increase towards $0$ as in~\cite{DBLP:conf/cav/AkshayGGJS23}. When the timer hits $0$, we say it \emph{times out}. To assert $\varphi_1 \U_I \varphi_2$ at position $i$, this approach makes two guesses (their approach uses a weak Until semantics, which we reformulate for the strict semantics):
\begin{itemize}
\item the last position $j$ in the interval $\tau_i + I$ such that $\varphi_2$ is true at $j$, and $\varphi_1$ holds at all positions between $i+1$ and $j-1$; the time to this position $j$ from $i$, i.e., $\tau_j - \tau_i$ is stored in a future $x_1$. 
\item the first position $j'$ after $\tau_i + I$ such that $\varphi_2$ is true at $j'$ and $\varphi_1$ is true at all positions between $i+1$ and $j'-1$; the time to $j'$, i.e., $\tau_{j'} - \tau_{i}$ is stored in a future $y_1$.  
\end{itemize}
If such a position does not exist, the corresponding clock is undefined. Let us restrict to the case when both positions exist. As time elapses, the absolute values of these future clocks decrease. As long as $y_1$ has not timed out, it is shown that no new obligation needs to be generated, since for each intermediate position, either $j$ or $j'$ acts as a witness, or it can be ascertained that $\varphi_1 \U_I \varphi_2$ is not true at that encountered intermediate position. By construction, $y_1$ is initially set to a value greater that $I$. Therefore, the time elapse between its initial origin until its timeout is at least $|I|$. Hence, for $|I|$ time units, no new obligation needs to be generated. After that, fresh guesses $x_2, y_2$ in the same spirit as above are made. Hence, at any point of time during the run, there could be future clocks $x_1, y_1, x_2, y_2, \dots, x_i, y_i$ maintaining the time remaining to suitably guessed witnesses. Moreover, the number of such guesses is bounded. 
When these future clocks time out, or between the timeouts, the appropriate guesses need to be verified. For instance, between the timeout of each $x_j$ and $y_j$, formula $\varphi_1$ needs to be true, but $\varphi_2$ needs to be false. Secondly, there are subtleties arising due to witnesses coinciding: for example, the witness corresponding to $y_i$ may be the same as the one maintained by $x_{i+1}$. The final GTA carefully incorporates the guesses, the verification and the various subtle cases (Algorithm 2 in~\cite{DBLP:journals/corr/abs-2510-14699}, Algorithm 1 in~\cite{DBLP:conf/concur/0001G0S24}). 

\subparagraph*{Our approach.}  We roll back in spirit to the 1-ATA approach that maintains obligations of the form $(\varphi, 0)$, but in addition, we make use of the ability of timers from the GTA approach. In order to check if $(w, i) \models \varphi_1 \U_I \varphi_2$, while reading $a_i$ we make a guess that $\varphi_1 \U_I \varphi_2$ is true at $i$ by generating an obligation as follows:
\begin{align*}
(\varphi_1 \U_I \varphi_2, 0, t)
\end{align*}
where the first component denotes the formula that needs to be true, the second component denotes the \emph{age} of the obligation, i.e., the time since the obligation was generated and finally the third and most important component gives a \emph{waiting time}  $t$ predicting the time left to witness a position $j$ that satisfies $\varphi_2$. As expected, we require $t$ to be in the interval $I$.  

On a time elapse of $\delta$ units, the obligation becomes:
\begin{align*}
(\varphi_1 \U_I \varphi_2, \delta, t - \delta)
\end{align*}
The time elapse $\delta$ gets added to the age, and gets subtracted from the waiting time, as long as $\delta \le t$. Suppose the next letter $a_{i+1}$ occurs at this point $t_i + \delta$, the obligation changes as follows: if $t - \delta > 0$ (guessed position has not arrived yet), then the obligation continues to remain and alongside it generates an obligation for $\varphi_1$; if $t - \delta = 0$ (that is, waiting time is $0$), then there is a non-deterministic choice, either the current obligation is discharged and an obligation for $\varphi_2$ is generated denoting that a witness for $\varphi_2$ has been identified, or it continues to remain as it is, and generate one for $\varphi_1$. In the latter case, there can be no time elapse until the original $\varphi_1 \U_I \varphi_2$ obligation is discharged.
 
Figure~\ref{fig:until-obligation-idea} (left) gives an illustration: position $i$ is represented by the red dot, and position $j$ by the green dot. At $i$ an obligation $\theta_{\varphi}$ is generated which continues to remain until point $j$ where it is discharged, and a new obligation $\theta_{\varphi_2}$ is generated. At all intermediate positions where $\theta_{\varphi}$ is alive, an obligation for $\varphi_1$ is generated. 

\begin{figure}
\centering 
\begin{tikzpicture}[scale=0.9]
\draw[thin] (-0.5,0) to (5.5, 0);
\node[circle, fill=red!50, inner sep=1pt] at (0,0) {};
\node[below] at (0,0) {\scriptsize $i$};

\node[circle, fill=green!50!black, inner sep=0.8pt] at (5,0) {};
\node[below] at (5,0) {\scriptsize $j$};

\draw[thin, gray] (0,0.2) -- node[midway, fill=white] {\scriptsize \textcolor{black}{$t$}} (5,0.2);

\foreach \x in {0.9, 1.6, 2.5, 3.4, 4, 4.5}
{\node[rectangle, fill=blue!50, inner sep=1pt] at (\x,0) {};}

\node [red!50] (0) at (0,-1) {\scriptsize $\theta_{\varphi}$};
\node [green!50!black] (1) at (5, -1) {\scriptsize $\theta_{\varphi_2}$};
\draw [gray, thin] (0) to (1);

\draw [very thin, gray, ->, >=stealth] (0,-0.4) -- (0, -0.8);
\draw [very thin, gray, ->, >=stealth] (5, -0.4) -- (5, -0.8);
\foreach \x in {0.9, 1.6, 2.5, 3.4, 4, 4.5}
{\node [blue!50] at (\x, -1.4) {\tiny $\theta_{\varphi_1}$};
\node[rectangle, fill=blue!50, inner sep=1pt] at (\x,-1) {};
\draw[very thin, gray, ->, >=stealth] (\x, -1) -- (\x, -1.3);}

\begin{scope}[xshift=7cm]
\draw[thin] (-0.5,0) to (4.5, 0);
\node[circle, fill=red!50, inner sep=1pt] at (0,0) {};
\node[below] at (0,0) {\scriptsize $i$};

\draw [very thin, gray] (2, 0.1) -- (2,-0.1);
\draw [very thin, gray] (3.5, 0.1) -- (3.5, -0.1);
\node [above] at (2, 0.1) {\tiny $\tau_i + l$};
\node [above] at (3.5, 0.1) {\tiny $\tau_i + u$};
\node [above] at (0, 0.1) {\tiny $\tau_i$};

\foreach \x in {0.3, 1, 1.4, 1.8}
{\node [rectangle, fill=blue!50, inner sep=1pt] at (\x,0) {};}

\foreach \x in {2.2, 2.5, 3.2, 3.4}
{\node [circle, fill=green!50!black, inner sep=1pt] at (\x,0) {};}

\node [red!50] at (0,-1) {\scriptsize $\theta_{\varphi}$};
\draw [very thin, gray, ->, >=stealth] (0,-0.4) -- (0, -0.8);
\end{scope}
\end{tikzpicture}
\caption{Illustrating the role of obligations for $\varphi = \varphi_1 ~\U_I~\varphi_2$ (left) and $\varphi = \varphi_1 ~\dual{\U}_I~ \varphi_2$ (right).}
\label{fig:until-obligation-idea}
\end{figure}

For Dual Until formulas, the role of obligations is different, mimicking the semantics. Consider a Dual Until formula $\varphi = \varphi_1 \dual{\U}_I \varphi_2$, with $I = [l, u]$ (say). To check if $(w, i) \models \varphi$, an obligation of the following form is generated:
\begin{align*}
 (\varphi_1\dual{\U}_I\varphi_2, 0, u)
\end{align*}
where the first two components are the same as before; and the third component simply maintains time until when this obligation needs to be alive. In Figure~\ref{fig:until-obligation-idea} (right), an obligation is shown to be generated at position $i$. The green dots inside the interval $[\tau_i + l, \tau_i + u]$ are the positions where the dual Until condition needs be checked. These positions can be identified by looking at the age of the obligation: when age is $\ge l$ we are inside the interval. 

On a time elapse of $\delta$, the Dual until obligation transforms to $(\varphi_1 \dual{\U}_I \varphi_2, \delta, u - \delta)$ if $\delta \le u$. When $\delta > u$, a time elapse of $\delta$ simply makes the obligation vanish. This is in sharp contrast with Until obligations, where time was not allowed to elapse beyond the waiting time that was initially guessed. For Until formulas, the obligation is guessing an event that should happen at $t$ time units and hence it is ``obliged'' to witness it, else it deadlocks. For Dual Until, the role is to ensure that upto a certain time, some events should happen. Beyond that, the obligation is useless and hence can be removed. 

In a nutshell: on each letter, an obligation is potentially generated, with a specific role, until it is discharged. When it is alive, the obligation may spawn more obligations depending on its role. In Section~\ref{sec:obligations} we will present a formal definition of a transition system made up of obligations, whose accepting runs are precisely the words where the formula is true. 
Notice that we do not predict special kinds of witnesses as in the GTA approach. Instead we just guess \emph{some witness}. The key question is how do we obtain an upper bound on the number of generated obligations with these mundane guesses. Section~\ref{sec:removing-redundant-obligations} discusses simple rules to reduce obligations that results in a uniform bound on the number of obligations that are stored.

\section{Obligations}
\label{sec:obligations}

Before we proceed further, we remark that in previous works on this subject, the word ``obligation'' has been used in its natural language sense to mean an assertion that should be checked. Moreover, obligations meant different things in different works. In our work, we consider another kind of obligations for which we provide a formal definition (Definition~\ref{def:obligation}).  
We distinguish between bounded and unbounded intervals: an interval is bounded if its right endpoint is in $\N$, and is unbounded when it is $\infty$. For formulas with unbounded intervals, we make the third component inactive in the obligations. This ensures that all waiting times that we use are globally bounded by a constant determined by the formula. This fact crucially helps in the symbolic computation explained in Section~\ref{sec:symbolic}. 

Fix an MTL formula $\varphi$ for the rest of this section. Indeed, all the results we develop in this section work for the full class of MTL.

\begin{definition}[Obligation]
\label{def:obligation} An \emph{obligation} for $\varphi$ is a triple $\theta = (\varphi_1 B_I \varphi_2,x,t)$ with $B \in \{ \U, \dual{\U} \}$ such that $\varphi_1 B_I \varphi_2$ is a subformula of $\varphi$, $x \in \Rpos$  and $t \in \Rpos$ when $I$ is bounded, and $t = \bot$ when $I$ is unbounded. Component $x$ stores the age of the obligation (the time since it was created) and $t$ stores the waiting time of the obligation -- the time until it gets removed; when $t = \bot$, it does not get removed. We write \emph{obligation-set} for a set of obligations, and use letter $\Theta$ to denote it.
\end{definition}

The initiation of an obligation depends on the form of $\varphi$. For instance consider $(a ~\U_{[1, 4]}~ b) \land ((\F_{[5, 10]} c)~\U_{[3,5]}~ d)$. There are two top-level Until operators $\U_{[1,4]}$ and $\U_{[3,5]}$. The initiation involves guessing witnesses for each of these two operators. Here is a possible guess $\{ (a~\U_{[1,4]}~b, 0, 3.7), ~((\F_{[5,10]} c)~\U_{[3,5]}~ d, 0, 4.8) \}$; another possible guess could have been $\{ (a~\U_{[1,4]}~b, 0, 1.5), ~((\F_{[5,10]} c)~\U_{[3,5]}~ d, 0, 3.3) \}$. 
The next definition formalizes this idea by associating to each formula $\varphi$ and a letter $a$, an initial set of possible obligation-sets, denoted as $\Ii_{\varphi, a}$. We require to make a distinction between a formula (like $\top$) which holds without any obligations, versus a formula (like $\bot$) which cannot be satisfied at all. When $\varphi$ cannot be satisfied, we leave $\Ii_{\varphi, a}$ undefined. When $\varphi$ is satisfied without any obligations, we define $\Ii_{\varphi,a}$ to be an empty set denoting that no obligations are needed. When we write $\Theta \in \Ii_{\varphi,a}$ we intrinsically assume that $\Ii_{\varphi,a}$ is defined.  

\begin{definition}[Initial set of obligation-sets]
    \label{def:initial-obligation-sets}
Initial set of obligation-sets $\Ii_{\varphi, a}$ of an MTL formula $\varphi$ w.r.t. a letter $a \in \Sigma$ is a set of obligation-sets, defined as follows:
\begin{itemize}
    \item when $\varphi = \top$, $\Ii_{\varphi, a} = \{ \}$; when $\varphi = \bot$, $\Ii_{\varphi, a}$ is undefined
\item when $\varphi = a$,  $\Ii_{\varphi,a} = \{\}$; when $\varphi = b \neq a$, $\Ii_{\varphi,b}$ is undefined
\item when $\varphi = \varphi_1 \land \varphi_2$, $\Ii_{\varphi, a} = \{\Theta_1 \cup \Theta_2 \mid \Theta_1 \in \Ii_{\varphi_1,a} \text{ and }\Theta_2\in \Ii_{\varphi_2,a}\}$
\item when $\varphi = \varphi_1 \lor \varphi_2$, then $\Ii_{\varphi,a} = \Ii_{\varphi_1,a} \cup \Ii_{\varphi_2,a}$
\item when $\varphi = \varphi_1 ~\U_{I}~ \varphi_2$, 
\[\Ii_{\varphi,a} = \begin{cases}
\{~\{(\varphi_1 \U_{I} \varphi_{2},0,t)\}\mid ~t \in I~\} & \text{ if }I \text{ is bounded } \\
\{~\{(\varphi_1 \U_{I} \varphi_{2},0,\bot)\}~\} & \text{ if }I \text{ is unbounded }
\end{cases}\]
\item when $\varphi = \varphi_1 ~\dual{\U}_{I}~ \varphi_2$, then 
\[\Ii_{\varphi,a} = \begin{cases}
\{~\{(\varphi_1 ~\dual{\U}_{I}~ \varphi_{2},0,u)\}~\} & \text{ if }I \text{ is bounded, $u$ is right end-point of $I$}  \\
\{~\{(\varphi_1 ~\dual{\U}_{I}~ \varphi_{2},0,\bot)\}~\} & \text{ if }I \text{ is unbounded }
\end{cases}\]
\end{itemize}
\end{definition}
Intuitively, each element of $\Ii_{\varphi,a}$ is a guess for why $\varphi$ is true at the position which reads $a$. 
As time evolves and more actions appear, the transitions are defined to verify this guess (or deadlock in case of wrong guesses). We now formalize how obligations evolve as time progresses and more letters are seen.
Each individual obligation may result in an obligation-set when a new letter is seen. 
Hence we first define transitions as a function from an obligation to an obligation-set, and then define how this extends as a function from an obligation-set to an obligation-set.

The delays behave differently for $\U_I$ and $\dual{\U}_I$ obligations: $\U_I$ obligations cannot elapse more than their waiting time, whereas $\dual{\U}_I$ obligations simply vanish when time elapses more than their waiting time, and hence do not block time from elapsing.

\begin{definition}[Delays on obligations]
    Let $\delta \in \Rpos$. 
    \begin{align*}
     (\varphi_1\U_I\varphi_2, x, t) & \xra{\delta} \{ (\varphi_1 \U_I \varphi_2, x + \delta, t - \delta) \} \qquad  \text{ when } \delta \le t \\
     (\varphi_1 \dual{\U}_I \varphi_2, x, t) & \xra{\delta} \begin{cases}
      \{ (\varphi_1 \dual{\U}_I \varphi_2, x + \delta, t - \delta) \} \qquad  & \text{ when } \delta \le t \\
      \{\} \qquad & \text{ when } \delta > t
     \end{cases} \\
     (\varphi, x, \bot) & \xra{\delta} \{ (\varphi, x + \delta, \bot)\}
    \end{align*}
\end{definition}
Notice the difference in the delay semantics between $\U$ and $\dual{\U}$. 

We move on to discrete transitions, starting with $\U$ obligations. As illustrated in Figure~\ref{fig:until-obligation-idea} (left), the role of an Until obligation for $\varphi = \varphi_1 \U_I \varphi_2$ is to keep generating obligations for $\varphi_1$, until its waiting time becomes $0$ when it non-deterministically decides to lance an obligation for $\varphi_2$ or keep continuing to check $\varphi_1$ without elapsing time. Lancing an obligation for a formula $\psi$ amounts to guessing an initial obligation-set for $\psi$.

When some $\Ii_{\psi,a}$ is not defined, the transition that involves obligation-sets from $\Ii_{\psi,a}$ is also not defined. Also notice that the transitions defined below are non-deterministic: not only because of the multiple transitions possible when the waiting time $t = 0$, but also because we are non-deterministically picking an obligation-set from initial obligation-sets when applicable.
\begin{definition}[Discrete transitions for Until obligations.] 
    \label{def:until-obligations-discrete}
Let $a \in \Sigma$. 
\[(\varphi_1 \U_{I} \varphi_{2}, x, t) \xra{a}\begin{cases}
    \{(\varphi_1 \U_{I} \varphi_{2}, x, t)\} \cup \Theta_{1} & \text{ for  } \Theta_{1} \in \Ii_{\varphi_1,a} \quad \text{ when }t \geq 0\\
    \Theta_{2} & \text{ for }\Theta_2 \in \Ii_{\varphi_2,a} \quad \text{ when }t = 0 \\
\end{cases}\]
\[(\varphi_1 \U_{I} \varphi_{2}, x, \bot) \xra{a} \begin{cases}
    \{(\varphi_1 \U_{I} \varphi_{2}, x, \bot)\} \cup \Theta_{1} & \text{ for  } \Theta_{1} \in \Ii_{\varphi_1,a} \\
    \Theta_{2} & \text{ for }\Theta_2 \in \Ii_{\varphi_2,a} \quad \text{ when }x \in I 
\end{cases}\]
\end{definition}

For the Dual Until, let us refer to Figure~\ref{fig:until-obligation-idea} (right). For all the green points inside the interval the obligation needs to check that either $\varphi_2$ is true, or there is an earlier point where $\varphi_1$ is true. We can decompose this reasoning as follows: 
\begin{itemize}
\item either $\varphi_1$ is guessed to be true already before the interval begins (the blue rectangles in the figure), in which case the obligation can be discharged after generating one for $\varphi_1$,
\item or the obligation has entered the interval and starts to check $\varphi_2$ at each position; if additionally, $\varphi_1$ is guessed to be true at some position inside the interval, then the obligation can be discharged.
\end{itemize}
We use the notation $x < I$ to mean the value of $x$ is strictly less than every point in the interval. Let $l$ be the left end-point of $I$. We will write $l \triangleleft_I x$ to indicate that $l \le x$ when $I$ is left-closed, and $l < x$ when $I$ is left open (essentially, the lower bound of $I$ is satisfied by $x$).

\begin{definition}[Discrete transitions for dual Until obligations]
    \label{def:dual-until-discrete}
Let $a \in \Sigma$, and $t \in \Rpos \cup \{\bot\}$. 
\[(\varphi_1 \dual{\U}_{I} \varphi_{2}, x, t) \xra{a} \begin{cases}
    \{(\varphi_1 \dual{\U}_{I} \varphi_{2}, x, t)\} & \text{ when }x < I \\
    \Theta_1 & \text{ for }\Theta_1 \in \Ii_{\varphi_1,a} \quad \text{ when }x < I \\
    \{(\varphi_1 \dual{\U}_{I} \varphi_{2},x,t)\} \cup \Theta_2 & \text{ for }\Theta_2 \in \Ii_{\varphi_2,a} \quad \text{ when }l \triangleleft_I x \\
    \Theta_1 \cup \Theta_2 & \text{ for }\Theta_1 \in \Ii_{\varphi_1,a},\Theta_2 \in \Ii_{\varphi_2,a} \quad \text{ when } l \triangleleft_I x
\end{cases}\]
\end{definition}

For an obligation $\theta$, we will write $\theta \xra{\delta, a} \Theta'$ in short for a sequence of a delay followed by action: $\theta \xra{\delta} \{\theta'\}$ and $\theta' \xra{a} \Theta'$. We now extend the definition of transitions to obligation-sets.

\begin{definition}[Transitions on obligation-sets] Let $\varphi$ be an MTL formula, and let $\Theta, \Theta'$ be obligation-sets for $\varphi$. For a delay $\delta \in \Rpos$ and a letter $a \in \Sigma$, we say that $\Theta'$ is a $(\delta, a)$-extension of $\Theta$ if for every $\theta \in \Theta$, there exists $\Theta'' \incl \Theta'$ such that $\theta \xra{\delta, a} \Theta''$. We say that $\Theta'$ is a \emph{minimal} $(\delta, a)$-extension if no strict subset $\Theta'' \subset \Theta'$ is a $(\delta,a)$-extension. Finally, we define $\Theta \xra{\delta, a} \Theta'$ if $\Theta'$ is a minimal $(\delta, a)$-extension of $\Theta$.
\end{definition}

Minimal extensions ensure that the resulting obligation-sets do not contain two different guesses at a position for the same formula. This motivates us the following definition which generalizes this idea.

\begin{definition}[Well-formed obligation-set]
An obligation-set $\Theta$ is \emph{well-formed} if no two obligations over the same formula have the same age: for every pair $(\psi, x_1, t_1), (\psi, x_2, t_2) \in \Theta$, $t_1 \neq t_2$ implies $x_1 \neq x_2$. 
\end{definition}

\begin{restatable}{lemma}{wellformed}\label{lem:minimal-extension-guarantess-well-formedness}
Let $\Theta$ be a well-formed obligation-set. Then, for every $\delta \in \Rpos$ and $a \in \Sigma$, every minimal $(\delta, a)$-extension of $\Theta$ is well-formed as well. 
\end{restatable}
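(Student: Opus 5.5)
The plan is to argue by contradiction against minimality. Fix a well-formed $\Theta$, a delay $\delta$, a letter $a$, and a minimal $(\delta,a)$-extension $\Theta'$. Suppose $\Theta'$ contains $\theta_1=(\psi,x,t_1)$ and $\theta_2=(\psi,x,t_2)$ with $t_1\neq t_2$. Each obligation of $\Theta'$ has one of two origins, read off from the delay and discrete transitions. It is either a \emph{survivor}, i.e.\ the obligation $(\psi,x_0+\delta,t_0-\delta)$ or $(\psi,x_0+\delta,\bot)$ produced from some $(\psi,x_0,t_0)\in\Theta$ by the first case of Definition~\ref{def:until-obligations-discrete} or the first or third case of Definition~\ref{def:dual-until-discrete}. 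Or it is \emph{fresh}, i.e.\ a member of some initial obligation-set from Definition~\ref{def:initial-obligation-sets}, in which case its age is $0$. I will split into cases on the origins of $\theta_1$ and $\theta_2$.

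\emph{Both survivors.} Their parents in $\Theta$ have the same formula $\psi$ and the same age $x-\delta$. By well-formedness of $\Theta$ they have the same waiting time, so after subtracting the same $\delta$ we get $t_1=t_2$, a contradiction. Delays never change the type of the third component, since boundedness of $I$ is fixed by $\psi$; so the $\bot$ case is immediate.

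\emph{At least one is fresh.} Then $x=0$, so if the other is a survivor we must have $\delta=0$ and a parent of age $0$. If $\psi$ is a dual Until, every obligation of age $0$ over $\psi$ carries $t=u$ (fresh ones by definition, survivors since $\delta=0$), so $t_1=t_2$. If $I$ is unbounded, both are $\bot$. The remaining case is $\psi=\varphi_1\U_I\varphi_2$ with $I$ bounded, and here I use minimality directly. Say $\theta_2$ is fresh. Every $\theta\in\Theta$ whose chosen $\Theta''\incl\Theta'$ uses $\theta_2$ obtained it from some initial obligation-set. Unfolding Definition~\ref{def:initial-obligation-sets} through $\land$ and $\lor$, that initial set is a union of leaf choices, and the leaf contributing $\theta_2$ is the singleton $\{(\psi,0,t_2)\}$. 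Replacing this leaf by $\{(\psi,0,t_1)\}$ gives another member of the same initial set, provided $t_1\in I$. With this substitution, $\Theta'\setminus\{\theta_2\}$ is still a $(\delta,a)$-extension, which contradicts minimality.

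The step I expect to be the main obstacle is justifying $t_1\in I$ when $\theta_1$ is an age-$0$ survivor with $\delta=0$. For an arbitrary well-formed $\Theta$ this need not hold. I will therefore first prove the easy invariant that every Until obligation of age $0$ appearing in an initial set or in a successor has its waiting time in $I$: fresh ones satisfy it by construction, and an age-$0$ survivor arises only with $\delta=0$, so its waiting time is unchanged. I would then either add this invariant to the hypothesis, as it holds for all reachable obligation-sets, or, if both $\theta_1,\theta_2$ are fresh, swap in either direction, since then both waiting times lie in $I$.
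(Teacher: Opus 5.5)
Your argument follows the paper's proof. Two same-age survivors contradict well-formedness of $\Theta$, and two age-$0$ obligations are collapsed by substituting one guess for the other, contradicting minimality. Your worry about an age-$0$ survivor when $\delta=0$ is justified, and it goes beyond a gap in your argument: the statement as written is false in that case. The paper's proof overlooks it by asserting that two age-$0$ obligations are necessarily newly created. Restricting to reachable obligation-sets through an invariant is the right repair.

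\textbf{Gap in the dual-Until subcase.} Your invariant is too narrow. The dual-Until subcase has exactly the same defect, but you treat it as settled. A survivor with $\delta=0$ keeps its parent's waiting time, and for an arbitrary well-formed $\Theta$ that need not be $u$. Take $\psi = c~\dual{\U}_{[1,2]}~b$, $\chi = \psi~\U_{[2,5]}~b$, $\Theta=\{(\psi,0,0.5),(\chi,0,3)\}$, $\delta=0$ and a letter $a\neq c$.
\begin{itemize}
\item The obligation $(\psi,0,0.5)$ must be kept, since its age satisfies $0<I$ and $\Ii_{c,a}$ is undefined.
\item The obligation $(\chi,0,3)$ must lance $(\psi,0,2)$.
\end{itemize}
So the unique minimal extension is $\{(\psi,0,0.5),(\chi,0,3),(\psi,0,2)\}$, which is not well-formed. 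The invariant must therefore also require that age-$0$ obligations over a bounded dual Until have waiting time exactly $u$. This is preserved by the same induction you sketch.

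\textbf{Two smaller points.}
\begin{itemize}
\item Your claim that every element of $\Theta'$ is either a survivor or fresh already uses minimality. The union of the chosen $\Theta''$ is itself an extension, so it must equal $\Theta'$.
\item In the substitution step, the obligation you delete must not also be a survivor; otherwise deleting it can break its own parent's transition. You can always arrange this once the both-survivors case has been excluded.
\end{itemize}
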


These definitions lead us to the final transition system of obligations. Nodes in this transition system contain well-formed obligation-sets.  Infinite runs in this transition system correspond to infinite words, and the node visited after position $i$ in the word represents the obligations at $i$. Figure~\ref{fig:obligation-runs} depicts two formulas and a run of the obligation graph for each of them. Example~\ref{ex:obligation-graph} in Appendix~\ref{sec:append-ob} gives a more detailed description. 

Accepting runs are those where every Until obligation that is generated is eventually discharged.  To implement this accepting condition, we add identifiers to every Until obligation and define a generalized B\"uchi acceptance condition based on these identifiers. In Figure~\ref{fig:obligation-runs}, obligations for formulas $\varphi$, $\varphi_1$, and $\psi$ are annotated below with the formula followed by a number. These are the identifiers of the respective obligations. The journey of an obligation can be conveniently tracked using its identifier. For example, we can see that $\varphi_1 \cdot 1$ is created at position $2$ of the word, and travels until it is discharged at position $6$.

\begin{definition}[Types and Identifiers] For an obligation $\theta = (\varphi, x, t)$, we define $\type(\theta) = \varphi$. For every Until formula $\psi$, we define a set  $\identifiers_\psi := \{ (\psi \cdot i) \mid i \in \N \}$. Given an obligation-set $\Theta$, a \emph{naming function} $\id$ for $\Theta$ is an injection that maps every Until obligation $\theta \in \Theta$ to an element in $\identifiers_{\type(\theta)}$. Given a naming function $\id$ for $\Theta$ and an Until formula $\psi$, we define $\range(\id, \psi) = \{ i \in \N \mid \text{ some obligation $\theta \in \Theta$ with type $\psi$ is mapped to } \psi \cdot i \text{ via } \id\}$.
\end{definition}

In the definition below, we make use of these notations: (1) given an obligation-set $\Theta$ and a $\delta \ge 0$ s.t. all waiting times of obligations in $\Theta$ are at least $\delta$, we define $\Theta + \delta = \{ \theta + \delta \mid \theta \in \Theta \}$; (2) given an obligation $\theta = (\psi, x, t)$ s.t. $x \ge \delta$, we define $\theta - \delta:= (\psi, x - \delta, t + \delta)$. In the obligation graph below, we choose a natural naming convention where each fresh obligation is assigned the smallest available identifier of its type. 

\begin{definition}[Obligation graph] 
    \label{def:obligation-graph}
    For an MTL formula $\varphi$, we define a transition system called $\ob(\varphi)$. Nodes of $\ob(\varphi)$ are pairs $(\Theta, \id)$ consisting of a \emph{well-formed} obligation-set $\Theta$ and a naming function $\id$ for $\Theta$. There is a special node $\init_\varphi$ which is the initial node, and a special node $(\Theta_{\emptyset}, \id_{\emptyset})$ with $\Theta_{\emptyset} = \{\}$ representing the empty set of obligations. 

    For all $\delta \in \Rpos$ and $a \in \Sigma$ we add $\init_{\varphi} \xra{\delta, a} (\Theta,\id)$ for every $\Theta \in \Ii_{\varphi, a}$, with $\id(\theta) = \psi \cdot 1$ for every $\psi \in \type(\Theta')$ s.t. $\psi$ is an Until formula (note that there is at most one obligation of each type, since $\Theta$ is well-formed). Similarly, we add $(\Theta_{\emptyset}, \id_{\emptyset}) \xra{\delta, a} (\Theta_{\emptyset}, \id_{\emptyset})$. For two nodes $(\Theta, \id), (\Theta', \id')$ we add $(\Theta, \id) \xra{\delta, a} (\Theta', \id')$ if $\Theta \xra{\delta, a} \Theta'$ and the naming function $\id'$ in the target satisfies the following. 

    For every $\psi \in \type(\Theta')$ s.t. $\psi$ is an Until formula, let $i_\psi \in \N$ be the smallest number that is not used in $\Theta$, that is, $i_\psi \notin \range(\id, \psi)$.
    \begin{itemize}
        \item For every $\theta$ of type $\psi$ present in both $(\Theta + \delta)$ and $\Theta'$, $\id'(\theta) = \id(\theta - \delta)$. 
        \item As $\Theta'$ is well-formed, there is atmost one new obligation $\theta \in \Theta' \setminus (\Theta + \delta)$ s.t.  $\type(\theta)= \psi$. Then $\id'(\theta) = \psi \cdot i_\psi$. 
    \end{itemize}
\end{definition}

The acceptance condition essentially mimicks the fact that every $\U_I$ obligation that is generated is eventually discharged. When a new Until obligation arrives, a new identifier, say $\psi \cdot i$ is assigned to it. When it is discharged, $\psi \cdot i$ becomes inactive (notice the way we have defined $\id'$). Therefore, the acceptance condition amounts to verifying whether each identifier becomes inactive infinitely often. For Dual Until obligations, there is nothing to check. 

\begin{definition}[Acceptance condition]\label{def:acc-condition}
The acceptance condition for $\ob(\varphi)$ is defined as follows. For every Until subformula $\psi$ of $\varphi$, and for every $i \in \N$,  define $F_\psi^i := \{ (\Theta, \id) \mid i \notin \range(\id, \psi)\}$: $F_\psi^i$ contains the set of all nodes where $\psi \cdot i$ is not an active identifier. The (infinite) set $\Acc(\varphi) = \{ F^i_\psi \mid i \ge 0, \text{ and } \psi \text{ is an Until subformula of $\varphi$} \}$ describes the generalized B\"uchi acceptance condition for the obligation graph $\ob(\varphi)$.   
\end{definition}

For a timed word $w = (a_1, \tau_1) (a_2, \tau_2) \cdots$, a run of $w$ on $\ob(\varphi)$ from a node $(\Theta, \id)$ is an infinite sequence $(\Theta, \id) \xra{\delta_1, a_1} (\Theta_1, \id_1) \xra{\delta_2, a_2} \cdots$ where $\delta_i = \tau_i - \tau_{i-1}$ for all $i \ge 1$, and taking $\tau_0 = 0$. Since $\ob(\varphi)$ is non-deterministic, there could be several runs for $w$. A run is said to be \emph{accepting} if it satisfies the generalized B\"uchi condition given by $\Acc(\varphi)$: for every Until subformula $\psi$ and every $i \ge 0$, $F_\psi^i$ should be visited infinitely often. In other words, every time there appears an Until obligation with identifier $\psi \cdot i$, there should be a later point in the run where it disappears. 
We say a word $w$ is accepting from a node if it has some accepting run in $\ob(\varphi)$ from this node, and define the language of a node as the set of accepting words from that node. For the graph $\ob(\varphi)$, we define the language of $\ob(\varphi)$ as $L(\ob(\varphi)) = L(\init_{\varphi})$.

\begin{restatable}{theorem}{oblcorr}\label{thm:oblcorr}
Given an MTL formula $\varphi$ and a timed word $w$, we have $(w, 1) \sat \varphi$ iff $w \in L(\ob(\varphi))$.
\end{restatable}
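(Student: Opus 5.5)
Both directions go through a notion of an obligation being \emph{true} at a position. Say an obligation $\theta=(\psi,x,t)$ sitting at position $i$ (that is, in the node reached after reading $a_i$) is \emph{true} in $w$ if the following hold.
\begin{itemize}
\item For an Until $\psi=\varphi_1\U_I\varphi_2$: if $t\neq\bot$, there is $j>i$ with $\tau_j=\tau_i+t$, $(w,j)\sat\varphi_2$ and $(w,k)\sat\varphi_1$ for all $i<k<j$. If $t=\bot$, the same holds with $\tau_j-\tau_i+x\in I$ instead.
\item For a dual Until $\psi=\varphi_1\dual{\U}_I\varphi_2$: the dual Until condition holds for all $j>i$ with $\tau_j-\tau_i+x\in I$, where the search for a $\varphi_1$-witness $k$ is restricted to $i<k<j$. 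When $x>0$ the obligation was generated earlier, so a $\varphi_1$-witness might already have been used; the transitions account for this.
\end{itemize}
An obligation-set is true at $i$ if all its members are.

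\textbf{Soundness (runs to satisfaction).} Let $\rho$ be an accepting run of $w$ from $\init_\varphi$. We show by induction on the structure of the formula, in a single induction, that every obligation appearing at a position $i$ of $\rho$ is true at $i$. Alongside this we prove an auxiliary lemma: if $\Theta\in\Ii_{\psi,a_i}$ appears (as a subset) at $i$ and all its members are true at $i$, then $(w,i)\sat\psi$. This lemma is a direct induction over Definition~\ref{def:initial-obligation-sets}. Atoms hold because $\Ii_{b,a}$ is defined only when $b=a$. Conjunction and disjunction are immediate. For Until and dual Until, truth of the generated obligation with $x=0$ is exactly the semantics.

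For an Until obligation at $i$, follow its identifier forward. Acceptance (Definition~\ref{def:acc-condition}) forces it to be discharged at some $j$, and the delay rule forces $\tau_j-\tau_i=t$; in the $\bot$ case the discharge guard gives $x\in I$. The rule $\Theta_2\in\Ii_{\varphi_2,a_j}$ fires at $j$. At every intermediate $k$ some $\Theta_1\in\Ii_{\varphi_1,a_k}$ is added. By induction on subformulas, those obligations are true, so the auxiliary lemma gives $\varphi_1$ at each $k$ and $\varphi_2$ at $j$. Here we must check that the identifier really tracks a single obligation. This uses well-formedness (Lemma~\ref{lem:minimal-extension-guarantess-well-formedness}) and the $\id'$ rule: identical copies of the same obligation merge in a set, so ``continuing'' is unambiguous. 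For dual Until there is no acceptance requirement. By case analysis on Definition~\ref{def:dual-until-discrete}: at each position in the window either $\varphi_2$ is lanced and the obligation continues, or $\varphi_1$ is lanced and it is discharged, covering all later points. Vanishing after the delay exceeds $t$ happens only beyond the window. Finally, the initial node gives $\Theta\in\Ii_{\varphi,a_1}$ true at $1$, hence $(w,1)\sat\varphi$.

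\textbf{Completeness (satisfaction to runs).} Given $(w,1)\sat\varphi$, build a run in which every obligation is true. First, by induction, for every $\psi$ with $(w,i)\sat\psi$, choose a canonical $\Theta^{w,i}_\psi\in\Ii_{\psi,a_i}$ of true obligations. For an Until, pick $t=\tau_j-\tau_i$ with $j$ the \emph{earliest} witness. For a disjunction, pick a satisfied disjunct.

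Then define the run step by step. Each true obligation chooses the transition dictated by its witness: it continues and lances $\Theta^{w,k}_{\varphi_1}$ before $j$, and lances $\Theta^{w,j}_{\varphi_2}$ at $j$. For dual Until, it discharges with $\varphi_1$ at the first $\varphi_1$ position $k$ if one exists before or inside the window, and otherwise lances $\varphi_2$. Take the union of the resulting sets and then pass to a minimal extension. We must check that the result is still true and well-formed. Canonicity ensures that two obligations lancing the same formula at the same position produce identical sets, so removing duplicates loses nothing.

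Acceptance needs care with identifiers. Each Until obligation is discharged after finite time $t$, or at the first witness in the $\bot$ case. However, identifier $\psi\cdot i$ might be reused infinitely often without ever being free. We handle this with non-Zenoness and the smallest-available-identifier convention. Every live obligation is discharged within a bounded time for bounded $I$; for unbounded $I$ we argue that continuing obligations with the same formula and the same earliest witness coincide after merging. From this, each fixed identifier is freed infinitely often.

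\textbf{Main obstacle.} I expect this last step, together with the merging subtleties of minimal extensions in soundness, to be the main obstacle.
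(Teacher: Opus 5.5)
Modulo the two caveats below, your argument is sound, and it is organised differently from the paper's. The paper proves by structural induction the single equivalence ``$(w,i)\sat\psi$ iff some $\Theta\in\Ii_{\psi,a_i}$ has an accepting run on the suffix''. It glues runs for subformulas together with the projection and join lemmas (Lemmas~\ref{lemma:proj} and~\ref{lemma:union}), whose proofs it leaves to the reader. You instead fix a semantic invariant for aged obligations. Soundness then follows by showing that every obligation of an accepting run is true, using induction on the type plus following the identifier to its discharge. Completeness follows from one global run built from canonical earliest-witness guesses. The paper's route is more modular. Yours makes explicit exactly what the join lemma hides: why the combined sets stay well-formed, and why identifiers get freed.

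On well-formedness, ``earliest'' is needed not only for two fresh obligations at one position. It is also needed when an age-$0$ obligation survives a zero delay and meets a fresh one. Suppose $j>k$ is the earliest witness from $k-1$ and $\tau_k=\tau_{k-1}$. Then $j$ is a witness from $k$. Since $\varphi_1$ holds at $k$, every witness from $k$ is also one from $k-1$. So $j$ is the earliest witness from $k$ as well, and the two obligations coincide.

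On identifiers, the danger is narrower than you describe. Your remark that unbounded-$I$ obligations ``coincide after merging'' is beside the point: obligations of different ages never coincide, and each simply disappears at its earliest witness. A fresh obligation receives an index unused in the source, so $\psi\cdot i$ is inactive right after its carrier is discharged. The one exception is when the carrier has age $0$, the delay is $0$, and an identical copy is lanced; then the copy inherits $\psi\cdot i$ through the $\id'$ rule. Non-Zenoness prevents infinitely many such hand-overs in a row (timestamps that are not eventually constant already suffice).

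That hypothesis is not in the statement, but it cannot be dropped. On $(a,0)(a,0)(a,0)\cdots$ the formula $\bot\,\dual{\U}_{[0,\infty)}\,(\top\,\U_{[0,1]}\,\top)$ holds at position $1$. Yet every run must lance some $(\top\,\U_{[0,1]}\,\top,0,t)$ at every later position. An obligation with positive $t$ is never discharged, and with $t=0$ the identical copies keep index $1$ active forever, so no run is accepting. The paper's proof misses this.

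A second caveat is shared with the paper. For a bounded right-open $I$, a dual Until obligation survives a delay equal to its waiting time, and then has age $u\notin I$. The guard $l\triangleleft_I x$ nevertheless still forces a $\varphi_2$-lance. For example, $\bot\,\dual{\U}_{[0,1)}\,a$ holds vacuously on $(b,0)(b,1)(b,2)\cdots$, but its obligation deadlocks at position $2$. Your ``otherwise lances $\varphi_2$'' step fails there, but the repair belongs in the transition rules, not in your argument.
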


\begin{figure}
    \centering
    \scalebox{0.8}{
    \begin{tikzpicture}
        \draw [gray] (6, 3.5) -- (6,-7.5);
        
    \begin{scope}[scale=0.8, yshift = 2cm,xshift=-1.5cm]
    \draw (0,0) to (8,0);
    \foreach \i in {0, 1, 2, 3, 4, 5, 6, 7}
    {
    \draw [gray] (\i, -0.1) -- (\i, 0.1);
    \node at (\i, -0.25) {\scriptsize \i};
    }
    \foreach \i in {0, 1.8, 2, 7}{
        \node [circle, inner sep=1pt, fill=red!50] at (\i, 0) {};
        \node at (\i, 0.25) {\scriptsize $b$};
    }
    \foreach \i in {0.3, 1.5, 2.4, 2.9, 3.5}{
        \node [circle, inner sep=1pt, fill=blue!50] at (\i, 0) {};
        \node at (\i, 0.25) {\scriptsize $a$};
    }
    \node at (3.8, 1) {\scriptsize $w = (b, 0) (a, 0.3) (a, 1.5) (b, 1.8) (b, 2) (a, 2.4) (a, 2.9) (a, 3.5) (b, 7) \cdots$};
    \node at (4, 1.7) {\scriptsize $\varphi = \varphi_1 \U_{[2,10]} \varphi_2$};
    \node at (4, 2.3) {\scriptsize $\varphi_1 = \F_{[1,3]} a$ \quad $\varphi_2 = b$};
    \end{scope} 

    \begin{scope}[yshift=-2cm,xshift=1.5cm,scale=0.8]
        \node (0) at (0.5, 3.5) {\scriptsize $\{\}$};
        \node (1) at (0.5, 2.5) {\scriptsize $\{(\varphi,0,2)\}$};        
        \node (2) at (0.5, 1) {\scriptsize $\{(\varphi,0.3,1.7),(\varphi_1,0,2.1)\}$};        
        \node (3) at (0.5, -0.25) {\scriptsize $\{ (\varphi, 1.5, 0.5), (\varphi_1, 1.2, 0.9), (\varphi_1, 0, 1.4)\}$}; 
        \node (4) at (0.5, -1.75) {\scriptsize $\{(\varphi, 1.8, 0.2), (\varphi_1, 1.5, 0.6), (\varphi_1, 0.3, 1.1), (\varphi_1, 0, 1.7) \}$};
        \node (5) at (0.5, -3) {\scriptsize $\{(\varphi_1, 1.7, 0.4), (\varphi_1, 0.5, 0.9), (\varphi_1, 0.2, 1.5) \}$};
        \node (6) at (0.5, -4.25) {\scriptsize $\{(\varphi_1, 0.9, 0.5), (\varphi_1, 0.6, 1.1)\}$};
        \node (7) at (0.5, -5.5) {\scriptsize $\{ (\varphi_1, 1.1, 0.6)\}$};
        \node (8) at (0.5, -6.75) {\scriptsize $\{\}$};

        \node (l1) [purple] at (0.5,2.2) {\tiny $\varphi.1$};
        \node (l21) [purple] at (-0.5,0.7) {\tiny $\varphi.1$}; 
        \node (l22) [purple] at (1.5,0.7) {\tiny $\varphi_1.1$};
        \node (l2) [] at (0.5,0.7) {};
        \node (l31) [purple] at (-1.5,-0.55) {\tiny $\varphi.1$};
        \node (l32) [purple] at (0.5,-0.55) {\tiny $\varphi_1.1$};
        \node (l33) [purple] at (2.5,-0.55) {\tiny $\varphi_1.2$};
        \node (l41) [purple] at (-2.5,-2.05) {\tiny $\varphi.1$};
        \node (l42) [purple] at (-0.5,-2.05) {\tiny $\varphi_1.1$};
        \node (l43) [purple] at (1.5,-2.05) {\tiny $\varphi_1.2$};
        \node (l44) [purple] at (3.5,-2.05) {\tiny $\varphi_1.3$};
        \node (l4) [] at (0.5,-2.05) {};
        \node (l51) [purple] at (-1.6,-3.3) {\tiny $\varphi_1.1$};
        \node (l52) [purple] at (0.5,-3.3) {\tiny $\varphi_1.2$};
        \node (l53) [purple] at (2.5,-3.3) {\tiny $\varphi_1.3$};
        \node (l61) [purple] at (-0.5,-4.55) {\tiny $\varphi_1.2$};
        \node (l62) [purple] at (1.5,-4.55) {\tiny $\varphi_1.3$};
        \node (l6) [] at (0.5,-4.55) {};
        \node (l7) [purple] at (0.5,-5.8) {\tiny $\varphi_1.3$};
    \end{scope}
    \begin{scope}[->, >=stealth]
        \draw (0) to node [right] {\tiny $0, b$} (1);
        \draw (l1) to node [right] {\tiny $0.3, a$} (2);
        \draw (l2) to node [right] {\tiny $1.2, a$} (3);
        \draw (l32) to node [right] {\tiny $0.3, b$} (4);
        \draw (l4) to node [right] {\tiny $0.2, b$} (5);
        \draw (l52) to node [right] {\tiny $0.4, a$} (6);
        \draw (l6) to node [right] {\tiny $0.5, a$} (7);
        \draw (l7) to node [right] {\tiny $0.6, a$} (8);
        
    \end{scope}
    \begin{scope}[scale=0.8, yshift = 2cm,xshift=9.5cm]
    \draw (0,0) to (6,0);
    \foreach \i in {0, 1, 2, 3, 4, 5}
    {
    \draw [gray] (\i, -0.1) -- (\i, 0.1);
    \node at (\i, -0.25) {\scriptsize \i};
    }
    \foreach \i in {0.2}{
        \node [circle, inner sep=1pt, fill=red!50] at (\i, 0) {};
        \node at (\i, 0.25) {\scriptsize $b$};
    }
    \foreach \i in {0, 1, 2.6, 3.1, 3.5, 5}{
        \node [circle, inner sep=1pt, fill=blue!50] at (\i, 0) {};
        \node at (\i, 0.25) {\scriptsize $a$};
    }
    \node at (3, 1) {\scriptsize $w' = (a, 0) (b, 0.2) (a, 1) (a, 2.6) (a, 3.1) (a, 3.5) (a, 3.5) (a, 5) \cdots$};
    \node at (3, 1.7) {\scriptsize $\psi = \psi_1 \U_{[2,6]} \psi_2$};
    \node at (3, 2.3) {\scriptsize $\psi_1 = b~\dual{\U}_{[2,3]} a$ \quad $\psi_2 = a$};
    \end{scope}
    \begin{scope}[yshift=-2cm,xshift=9cm,scale=0.8]
        \node (0') at (1, 3.5) {\scriptsize $\{\}$};
        \node (1') at (1, 2.5) {\scriptsize $\{(\psi,0,2.6)\}$};        
        \node (2') at (1, 1) {\scriptsize $\{(\psi,0.2,2.4),(\varphi_1,0,3)\}$};        
        \node (3') at (1, -0.25) {\scriptsize $\{ (\psi, 1, 1.6),(\psi_1, 0.8, 2.2), (\psi_1, 0, 3)\}$}; 
        \node (4') at (1, -1.45) {\scriptsize $\{(\psi_1, 2.4, 0.6), (\psi_1, 1.6, 1.4)\}$};
        \node (5') at (1, -2.65) {\scriptsize $\{(\psi_1, 2.9, 0.1), (\psi_1, 2.1, 0.9)\}$};
        \node (6') at (1, -3.65) {\scriptsize $\{{(\psi_1, 2.5, 0.5)}\}$};
        \node (7') at (1, -4.65) {\scriptsize $\{\}$};

        \node (l1') [purple] at (1,2.2) {\tiny $\psi.1$};
        \node (l2') [purple] at (0.2,0.7) {\tiny $\psi.1$};
        \node (l3') [purple] at (-0.7,-0.55) {\tiny $\psi.1$};
    \end{scope}
    \begin{scope}[->, >=stealth]
        \draw (0') to node [right] {\tiny $0, a$} (1');
        \draw (l1') to node [right] {\tiny $0.2, b$} (2');
        \draw (2') to node [right] {\tiny $0.8, a$} (3');
        \draw (3') to node [right] {\tiny $1.6, a$} (4');
        \draw (4') to node [right] {\tiny $0.5, a$} (5');
        \draw (5') to node [right] {\tiny $0.4, a$} (6');
        \draw (6') to node [right] {\tiny $1.5, a$} (7');
    \end{scope}

    \end{tikzpicture}
    }
    \caption{Examples on the obligations generated for two MITL formulas}
    \label{fig:obligation-runs}
\end{figure}

\section{Reduced obligation graph}
\label{sec:removing-redundant-obligations}

When there are multiple obligations of the same type in a node of $\ob(\varphi)$ we can eliminate or merge some of them.

In this section, we will provide optimization rules that will ensure that there is a bound $k_{\psi}$ such that every node has at most $k_\psi$ obligations for a given MITL formula $\psi$, while still preserving the language of the obligation transition system. The optimization rules depend on the type of the obligations, whether they are $\U_I$ or $\dual{\U}_I$, and within each category, whether $I$ is bounded or not. 

\subparagraph*{Until with a bounded interval.}  We start with the case of an Until formula with a bounded interval $I$ which is typically the most difficult in all existing constructions. Let $l, u$ be the left and right endpoints of $I$. Let $\Theta$ be a obligation-set containing the following two obligations, one of them having been freshly created with age $0$:
\begin{align*}
\theta_1 = (\varphi_1 \U_I \varphi_2, x_1, t_1) \qquad \theta_2 = (\varphi_1 \U_I \varphi_2, 0, t_2)
\end{align*}
From the semantics of obligations, there was a position $i_1$ earlier in the word which generated $\theta_1$. The time that has elapsed ever since is $x_1$, and the time left for the witness to appear is $t_1$. Let $i_2 > i_1$ be the position in the word when $\theta_2$ was generated. Firstly, note that $t_2 \in I$ since the initial guess made should lie in $I$. The next two simple rules exhibit when we can keep only one obligation.

\begin{description}
\item[\removenew] if $t_1 \in I$, remove $\theta_2$.
\item[\modifyold] else, if $x_1 + t_2 \in I$, update $\theta_1$ to $(\varphi_1 \U_I \varphi_2, x_1, t_2)$ (the waiting time has been modified from $t_1$ to $t_2$) and remove $\theta_2$.
\end{description}
Intuitively, when $t_1 \in I$, the witness guessed by $\theta_1$ at position $i_1$ is already a witness for position $i_2$, and hence the presence of obligation $\theta_1$ is sufficient for the satisfaction of $\varphi_1 \U_I \varphi_2$ at position $i_2$. In some sense, $\theta_1$ covers for $\theta_2$.
For the \modifyold\ case: when $x_1 + t_2 \in I$, observe that the witness guessed by $\theta_2$ is a witness for position $i_1$ as well. Therefore it is unnecessary to maintain two different guesses and hence we modify $\theta_1$ to a new $\theta_1'$ that points to the guess made by $\theta_2$. When $\theta'_1$ is finally discharged, we verify the guesses made both at $i_1$ and at $i_2$: both $\theta_1$ and $\theta_2$ are covered by $\theta'_1$. 

Here is a technical point crucial for the correctness of the approach. Suppose $\theta_1$ covers $\theta_2$ leading to \removenew\ rule applied to eliminate $\theta_2$. Further down the line, it may so happen that $\theta_1$ is modified to obligation $\theta'_1$. It turns out that this new $\theta'_1$ also covers the previously removed obligation $\theta_2$, and hence does not contradict our previous decision to remove $\theta_2$. We illustrate this idea in the figure below. Assume $\theta_1$ is generated at point $i_1$ and it guesses a witness at point $j_1$ inside the required interval. When $i_2$ is encountered, \removenew\ is applied. Hence $j_1$ continues to be a witness for $i_2$. Now, when $i_3$ is seen, \modifyold\ rule is applied to ascertain that $j_3$ is a witness for $i_1$ and hence $\theta_1$ is updated to point to $j_3$. We claim that $j_3$ is a witness for $i_2$ as well: since $j_3$ satisfies the upper bound timing constraint wrt $i_1$, it satisfies the upper bound constraint wrt $i_2$; since $j_3$ satisfies lower bound constraint wrt $i_3$, we deduce that it satisfies the lower bound constraint wrt $i_2$. 
\begin{center}
\begin{tikzpicture}[scale=0.8]
\draw[thick, gray!50!white] (0,0) to (11,0);
\node[fill, circle, inner sep=1pt, red!50] at (1,0) {};
\node[below] at (1,0) {\scriptsize $i_1$};
\node[fill, circle, inner sep=1pt, red!50] at (2,0) {};
\node[below] at (2,0) {\scriptsize $i_2$};
\node[fill, circle, inner sep=1pt, red!50] at (3,0) {};
\node[below] at (3,0) {\scriptsize $i_3$};

\node at (6.5, 0) {$[$};
\node at (10.5, 0) {$\mathbf ]$};

\node[fill, circle, inner sep=1pt, green!50!black] at (7.5, 0) {};
\node[below] at (7.5, 0) {\scriptsize $j_1$};
\node[fill, circle, inner sep=1pt, green!50!black] at (9.5, 0) {};
\node[below] at (9.5, 0) {\scriptsize $j_3$};
\end{tikzpicture}
\end{center}

\emph{Bound on the number of obligations when $I$ is non-singular.} Let us now see how the two rules ensure a bound on the number of $\U_I$ obligations. For simplicity, assume $I = [l,u]$, a closed interval, and let $\psi = \varphi_1 \U_I \varphi_2$. The key observation that helps in getting a bound for MITL is the following. When neither \removenew\ nor \modifyold\ is applicable on $\theta_1= (\psi,x_1,t_1)$ and $\theta_2 = (\psi,0,t_2)$, we will have:
\begin{align}\label{eq:until-elimination-keep-both}
t_1 < l \qquad \text{ and } \qquad x_1 + t_2 > u
\end{align}
This is because: initially all timers are initiated with a value in $[l,u]$. As $t_1 \notin I$ (due to \removenew\ not being applicable), sufficient time has elapsed since it was generated and its value satisfies $t_1 < l$. Similarly if $x_1 + t_2 \notin I$ (due to \modifyold\ not being applicable) it means $x_1 + t_2 > u$ since we know that $t_2 \in I$ and $x_1 \ge 0$. 
Now, consider an obligation-set that has three obligations of type $\psi$, say $\theta_1,\theta_2,\theta_3$ for $\theta_i = (\psi,x_i,t_i)$. Assuming $x_1 > x_2 > x_3$, we see that when the second obligation was newly created, the obligation-set had two obligations $\theta_1 - x_2 = (\psi,x_1-x_2,t_1+x_2)$ and $\theta_2 - x_2 = (\psi,0,t_2+x_2)$, and as they were both kept in the obligation-set, by \eqref{eq:until-elimination-keep-both} we know that  
\begin{align*}
t_1 + x_2 < l \quad \text{ and } \quad (x_1 - x_2) + (t_2 + x_2) = x_1 + t_2 > u
\end{align*}
Similarly, we see that when the third obligation was newly created, the obligation-set contained $\theta_2-x_3 = (\psi,x_2-x_3,t_2+x_3)$ and $\theta_3-x_3 = (\psi,0,t_3+x_3)$, and by \eqref{eq:until-elimination-keep-both}:
\begin{align*}
t_2 + x_3 < l \quad \text{ and } \quad x_2 + t_3 > u
\end{align*}
Combining $x_1 + t_2 > u$ and $t_2 + x_3 < l$ gives
\begin{align*}
x_1 - x_3 > u- l
\end{align*} 

In summary: \emph{after the first two obligations for $\varphi_1 \U_I \varphi_2$, a third one needs to be stored only after $|I|$ time units}.  By this time the waiting time for the first obligation has come down to $< I$. This allows us to deduce that while the first obligation is still alive, at most $2 \times \lceil \frac{u}{u-l} \rceil$ obligations will be stored. Rewriting $u = l + (u - l)$ and $\kappa = \lceil \frac{l}{u-l} \rceil + 1$, we see that atmost $2 \kappa$ obligations for $\varphi_1 \U_{I} \varphi_2$ need to be stored in any node. 

\subparagraph*{Until with an unbounded interval.}
Now consider the case when $I$ is unbounded, that is of the form $[l, \infty)$ or $(l, \infty)$. An Until obligation $(\varphi_1 \U_I \varphi_2, x, \bot)$ with $I$ unbounded gets discharged at a point where $l \le x$ (when $I$ is left-closed) and $l < x$ (when $I$ is left-open). Suppose we have two obligations active in some obligation-set:
\begin{align*}
\theta_1 = (\varphi_1 \U_I \varphi_2, x_1, \bot) \qquad \text{ and } \qquad \theta_2 = (\varphi_1 \U_I \varphi_2, x_2, \bot) \qquad \text{ s.t. } \qquad x_1 > x_2
\end{align*}
Clearly, if $x_2 \in I$, then $x_1 \in I$ too. Therefore a position where $\theta_2$ finds a witness is also a witness position for $\theta_1$. Hence we are tempted to  remove $\theta_1$. This way, we store only a single obligation for Until with an unbounded interval. However there is a problem if we do this repeatedly: if we continue to remove the older obligation and store only the newest one, there may never be a position where an obligation is actually discharged. We may continue to keep postponing the obligations. To eliminate this situation, we adopt a different strategy where we store two obligations.

Let $\psi = \varphi_1 \U_I \varphi_2$ with $I$ unbounded. Suppose $\theta_1$ is the first obligation of type $\psi$ that appears in the run. We will never discard it until it is discharged. When a second obligation appears later, say $\theta_2$, we maintain $\{ \theta_1, \theta_2 \}$. Now, when a third obligation appears, say $\theta_3$, we replace $\theta_2$ with $\theta_3$ to obtain $\{\theta_1, \theta_3\}$. Therefore, until $\theta_1$ gets discharged, the second obligation may keep getting replaced. Suppose $\theta_1$ gets discharged when the obligation-set is $\{\theta_1, \theta_k\}$. Now $\theta_k$ takes the role of $\theta_1$, and so on. Using this strategy, we can enforce every $\psi$-obligation generated is ultimately asserted.

The above idea can naturally be implemented by storing up to two obligations at each point: the earliest generated, and the latest generated, which can be determined by looking at the ages. In other words, if there are three or more obligations containing the same $\U_I$ formula with $I$ unbounded, we store two obligations, one with the largest value of the age, and the other with the smallest value of the age (since our obligations are well-formed, we would never have two obligations of the same type having the same age) and remove the rest. 

\subparagraph*{Dual Until.}
For Dual until, the optimization merges two obligations into one. We start with the case when $I$ is a bounded interval, with left end-point $l$ and right end-point $u$. From the semantics, a dual Until obligation $\theta_1 = (\varphi_1 \dual{\U}_I \varphi_2, x_1, t_1)$ refers to an interval with end-points $l - x_1$ and $u - x_1$. When a fresh obligation $(\varphi_1 \dual{\U}_I \varphi_2, 0, u)$ appears, we check whether its corresponding interval with endpoints $l$ and $u$ overlap with the former: this will happen when either $l < u - x_1$, or when $l = u - x_1$ and $I$ is either left-closed or right-closed (see figure below): 
\begin{center}  
\begin{tikzpicture}[scale=0.8]
\draw (0,0) -- (9,0);
\draw [dashed] (-2, 0) to (0,0);

\node[circle, fill=red!50, inner sep=1.5pt] at (-2, 0) {};
\node[circle, fill=blue!50, inner sep=1.5pt] at (0, 0) {};

\node at (-2, -0.3) {\scriptsize $i$};
\node at (0, -0.3) {\scriptsize $j$};

\draw[thick, red!50] (2, 0.2) -- (2, -0.2);
\draw[thick, red!50] (5, 0.2) -- (5, -0.2);
\draw[very thin, red!50] (0, 0.4) to node [above] {\scriptsize $l -x_1$} (2, 0.4);
\draw[very thin, red!50] (0, 0.9) to node [above] {\scriptsize $t_1 = u - x_1$} (5, 0.9);
\draw[very thick,red!50] (2,0.03) -- (5,0.03);

\draw[thick, blue!50] (4, 0.2) -- (4, -0.2);
\draw[thick, blue!50] (7, 0.2) -- (7, -0.2);
\draw[very thin, blue!50] (0, -0.6) to node [below] {\scriptsize $l$} (4, -0.6);
\draw[very thin, blue!50] (0, -1) to node [below] {\scriptsize $u$} (7, -1);
\draw[very thick, blue!50] (4, -0.03) -- (7,-0.03);
\draw[very thin, gray] (-2, 0.4) to node [above] {\scriptsize $x_1$} (0, 0.4);
\draw[thin] (0,0.6) -- (0,0.2);

\end{tikzpicture}
\end{center}
More succinctly put, here is the definition for the function:
\begin{description}
\item[\mmerge] if $t_1 \in I$, remove $\theta_1, \theta_2$ and add a merged obligation $\theta = (\varphi_1 \U_I \varphi_2, x_1, u)$
\end{description}

Essentially, the life of the merged obligation is prolonged to cover the intervals represented by both $\theta_1$ and $\theta_2$. This explains the correctness of the merge. For the bound, let us look at two obligations $\theta_1 = (\varphi_1 \dual{\U}_I \varphi_2,x_1,t_1)$ and $\theta_2 = (\varphi_1 \dual{\U}_I \varphi_2,0,u)$ (a freshly created obligation).
When both $\theta_1$ and $\theta_2$ remain even after applying the \mmerge\ rule, it means that $t_1 < I$, and since it was started at $u$, it implies that $x_1$ is at least $|I|$. Hence, the time between these two obligations is at least $|I|$. Moreover, since $t_1 < I$, this obligation is alive only upto $l$ time units, and the number of new obligations that can get generated when it is alive is at most $1 + \lceil \frac{l}{|I|} \rceil$.

Finally, we consider the case when $I$ is unbounded. Here the interval corresponding to every fresh obligation will overlap with the previous one. Hence we can always merge: or in other words, any fresh obligation can be removed. Therefore, there will be at most one dual until obligation stored for formulas with an unbounded interval. 

\subparagraph*{The $\optimize$ function and the reduced obligation graph.}
We formalize the rules mentioned above into an $\optimize$ function. 
Given a node $(\Theta, \id)$ of the obligation graph $\ob(\varphi)$, we define a function $\optimize(\Theta, \id)$ which returns a new node $(\Thetaopt, \idopt)$ with potentially fewer obligations. Optimization is performed for each type independently: for each $\psi \in \type(\Theta)$, the $\optimize$ function restricts $\Theta$ to obligations of type $\psi$ and applies the rules corresponding to $\psi$ (whether $\U_I, \dual{\U}_I$, $I$ bounded or unbounded) to this set.  Using $\optimize$ we are able to define a \emph{reduced obligation graph} which is shown to be correct, and also containing a bounded number of obligations (details in Appendix~\ref{sec:append-red}). 

\begin{definition}[Reduced obligation graph]
     For an MTL formula $\varphi$, we define a \emph{reduced obligation graph} $\obred(\varphi)$. Each node of $\obred(\varphi)$ is a pair $(\Theta, \id)$ as in $\ob(\varphi)$. There is a special initial node $\init_\varphi$. The initial transitions are the same as in $\ob(\varphi)$. The intermediate transitions are defined as follows. For $\delta \ge 0$ and $a \in \Sigma$ we add $(\Theta, \id) \xRightarrow{\delta, a} (\Theta', \id')$ if $(\Theta, \id) \xra{\delta, a} (\overline{\Theta}, \overline{\id})$ in $\ob(\varphi)$ and $(\Theta', \id') = \optimize(\overline{\Theta}, \overline{\id})$. The acceptance condition is defined in the same way as in $\ob(\varphi)$. 
\end{definition}

We can now state the two main theorems of this section: the first one says that the optimization rules are correct for any MTL formula, the second one states that for MITL this results in a graph where there are a bounded number of obligations in each node.

\begin{restatable}{theorem}{oblredcorr}\label{thm:oblredcorr}
  Let $\varphi$ be an MTL formula and $w$  be a timed word. Then $w$ has an accepting run in $\obred(\varphi)$ iff $w$ has an accepting run in $\ob(\varphi)$.
\end{restatable}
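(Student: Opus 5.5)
The plan is to route both directions through Theorem~\ref{thm:oblcorr}: since $L(\ob(\varphi))$ is exactly the set of words with $(w,1)\sat\varphi$, it suffices to show (a) every word with an accepting run in $\ob(\varphi)$ has one in $\obred(\varphi)$, and (b) every word with an accepting run in $\obred(\varphi)$ satisfies $\varphi$.

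\textbf{Direction (b).} Here I would prove a soundness invariant by induction along the run, in the style used for Theorem~\ref{thm:oblcorr}. Each obligation $(\psi,x,t)$ present at position $i$ should be read as a \emph{promise} about the suffix $w$ from $i$. For an Until obligation the promise is: there is a position $j$ with $\tau_j=\tau_i+t$ at which $\varphi_2$ holds, and $\varphi_1$ holds strictly in between. For a dual Until obligation it is the dual-until condition over the window of length $t$. In addition, I would attach to each surviving obligation the set of \emph{origins} it covers, namely the positions whose original guesses it stands for. The invariant is that if the promises of all obligations hold, then every origin $k$ covered by a type-$\psi$ obligation satisfies $(w,k)\sat\psi$. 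The rules preserve this invariant as follows.
\begin{itemize}
\item For \removenew, the new origin $i_2$ is added to $\theta_1$. This is justified because $t_1\in I$, and the $\varphi_1$ obligations spawned by $\theta_1$ cover the positions after $i_2$.
\item For \modifyold, the witness shifts to $j_3$. The key geometric check, as in the paper's figure, is the following. The upper bound for each older origin follows from the upper bound for $i_1$. The lower bound follows from $j_3$ lying at least $l$ after the newest origin. Meanwhile $\theta_1$ keeps spawning $\varphi_1$ obligations until $j_3$.
\item For the unbounded Until rule, dropping middle obligations is sound because the kept newest obligation's witness also serves the older ones, whose ages are larger, and the oldest one is kept.
\item For \mmerge, the merged window is the union of the two overlapping windows, and dual-until over a union of overlapping windows (with a common $\varphi_1$-escape) implies it for each part.
\end{itemize}
Acceptance ensures that Until promises are eventually discharged, which closes the induction as in Theorem~\ref{thm:oblcorr}.

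\textbf{Direction (a).} I would use a simulation argument. Given $(w,1)\sat\varphi$, I construct the $\obred$ run directly by making canonical guesses: each time a fresh Until obligation is created, its waiting time is taken to be the time to a true witness, for instance the earliest one in $\tau_i+I$. Each time the reduced run applies a rule, the resulting obligations still point to true witnesses. For \removenew this holds trivially. For \modifyold, the new pointer is $\theta_2$'s true witness. For \mmerge, the merged obligation's truth follows from the truth of both originals. The resulting run is therefore a legal run of $\obred(\varphi)$ in which all promises are true. For acceptance, each Until identifier must eventually disappear. This holds because a bounded-interval obligation's waiting time only decreases, or is reset by \modifyold to a value that is still finite. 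The delay semantics forbid time elapsing past it, so non-Zenoness forces its discharge. In the unbounded case the oldest obligation is never replaced, and discharging at its true witness happens.

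\textbf{Main obstacle.} The hardest step is the identifier/B\"uchi bookkeeping under \modifyold. The waiting time can be extended repeatedly, so a single identifier could potentially never be discharged. I would first try to show that each modification sets the waiting time so that the total age plus waiting time stays at most $u$. Then the obligation is discharged within $u$ time units of its creation, and since the run is non-Zeno every identifier becomes inactive infinitely often.
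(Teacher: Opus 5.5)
Your route is sound in substance but differs from the paper's. The paper never goes through the semantics. For $\obred\Rightarrow\ob$, it rebuilds an $\ob$ run in which each fresh bounded-Until obligation created at position $j$ gets the waiting time $\tau^\psi_j$. This time is read off the reduced run as the delay until the identifier of the unique obligation with waiting time in $I$ (Lemma~\ref{lem:single-witness-in-interval}) is freed. The paper shows $\tau^\psi_j\in I$ by following the \modifyold\ steps along that identifier, and transfers acceptance through maps $h_i$. For $\ob\Rightarrow\obred$, it simply applies $\optimize$ along the given run, guided by maps $g_i$.

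You instead use Theorem~\ref{thm:oblcorr} on both sides and show directly that $\obred(\varphi)$ is sound (origins plus structural induction) and complete (true-witness guesses). Your timing check for \removenew/\modifyold\ is the same geometric fact as the paper's lemma $\tau^\psi_j\in I$. However, it is local to each rule and needs neither the uniqueness lemma nor cross-run maps. It also explicitly handles the postponement of an identifier by repeated \modifyold, which the paper's $\ob\Rightarrow\obred$ direction passes over. Your ``main obstacle'' is in fact immediate: the \modifyold\ guard $x_1+t_2\in I$ keeps age plus waiting time in $I$, hence at most $u$. The price is dependence on Theorem~\ref{thm:oblcorr} and a second structural induction.

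Two points need care.
\begin{itemize}
\item State the soundness invariant per obligation. In an accepting reduced run, an obligation that is later modified may point to a non-witness, so ``all promises hold'' need not be true at any node. What is maintained, for bounded Until, is the following: for each origin $o$ covered by $(\psi,x,t)$ at node $i$, we have $\tau_i+t-\tau_o\in I$, and $\varphi_1$ was spawned at every position in $(o,i]$. You then conclude at the discharge node, using acceptance.
\item In (a), two things must hold for the promises to stay true. The guesses should depend only on formula and position, so that extensions stay well-formed; passing to a minimal sub-extension keeps promises true. Also, dual-Until obligations must be discharged greedily, at the first position where $\varphi_1$ holds.
\end{itemize}

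Your acceptance argument in (a) uses non-Zenoness, which the statement does not assume; (b) does not need it. This restriction is unavoidable, because the statement fails for Zeno words. Take $\psi=\top\,\U_{(0,1]}\,b$, $\varphi=\bot\,\dual{\U}_{[0,\infty)}\,\psi$, and $w=(a,s_1)(b,s_1)(a,s_2)(b,s_2)\cdots$ with $s_k=1-2^{-k}$.
\begin{itemize}
\item Every position after the first satisfies $\psi$ (the next $b$ is a witness), so $\ob(\varphi)$ accepts $w$.
\item In $\obred(\varphi)$, the oldest $\psi$-obligation can only reach waiting time $0$ on arriving at some $(a,s_k)$. There it cannot be discharged, and a fresh $\psi$-obligation is spawned.
\item \removenew\ does not apply, since $0\notin(0,1]$. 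A fresh guess avoiding \modifyold\ targets a time beyond $1.5$ and is never discharged.
\end{itemize}
Hence every run either blocks, keeps an undischargeable identifier, or postpones the same identifier forever. Make non-Zenoness an explicit hypothesis; it is all the satisfiability application needs.
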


\begin{restatable}{theorem}{mitlbdd}\label{thm:red-bdd-mitl}
Let $\varphi$ be an MITL formula. Then, each node in $\obred(\varphi)$ has at most $k_{\psi}$ obligations of type $\psi$, where (1) $k_{\psi} = 2 + 2 \times \lceil \frac{l}{u-l} \rceil$ for a bounded until subformula $\psi$ with endpoints $l,u$, (2) $k_\psi = 2$ for an unbounded until subformula $\psi$, (3) $k_\psi = 1 + \lceil \frac{l}{u-l} \rceil$ for a bounded dual until subformula $\psi$ with endpoints $l,u$, and (4) $k_\psi = 1$ for an unbounded dual until subformula $\psi$.
\end{restatable}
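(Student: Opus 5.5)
The bound is proved separately for each of the four kinds of subformula $\psi$. In every case I use one invariant: in any node of $\obred(\varphi)$, the obligations of type $\psi$ have pairwise distinct ages (well-formedness, Lemma~\ref{lem:minimal-extension-guarantess-well-formedness}, which $\optimize$ preserves), and at most one of them has age $0$. So at each step at most one fresh obligation of type $\psi$ is produced, and $\optimize$ is applied to the older obligations together with this fresh one. The rules depend only on the relative data of obligations, and an obligation's age and waiting time evolve linearly under delays. Hence I can reason about the history of the current obligations by subtracting ages, as in the informal argument of the paper: an obligation $\theta_i=(\psi,x_i,t_i)$ present now was created $x_i$ time units ago, and at that time it was fresh with waiting time $t_i+x_i$. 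For the until case, one caveat must be handled first. \modifyold\ changes the waiting time of the older obligation. So the history must be read as the current values shifted back, and I need that any obligation surviving at a later time had, at each earlier creation point, the shifted values that the rules saw. I will state this as a small lemma: obligations are never modified except via \modifyold\ at creation steps, and after a \modifyold\ the fresh obligation is removed.

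\emph{Bounded until.} Sort the current obligations of type $\psi$ by age, $x_1>x_2>\dots>x_m$. For consecutive $\theta_i,\theta_{i+1}$, consider the step at which $\theta_{i+1}$ was created and kept. At that moment $\theta_i$ was present (or an ancestor with the same identity that was later modified; this is handled by the lemma above). Neither rule applied, so by \eqref{eq:until-elimination-keep-both}, $t_i+x_{i+1}<l$ and $x_i+t_{i+1}>u$, where $t_{i+1}$ is the current waiting time. Combining the second inequality for pair $(i,i+1)$ with the first inequality for pair $(i+1,i+2)$ gives $x_i-x_{i+2}>u-l$. Moreover $t_1\ge 0$ while $t_1+x_1\le u$, so $x_1\le u$, and all ages lie in $[0,u]$. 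Partition the indices into odd and even ones. Within each class, consecutive ages differ by more than $u-l$, so each class has at most $\lceil u/(u-l)\rceil$ elements. Therefore $m\le 2\lceil u/(u-l)\rceil=2+2\lceil l/(u-l)\rceil$. For open or half-open intervals the same inequalities hold with strict or non-strict signs adjusted. I expect this adjustment, together with showing that modified waiting times do not break the history argument, to be the main technical obstacle. I would first try to prove that every kept obligation satisfies $t\le u-x$, i.e.\ its remaining waiting time is consistent with its age.

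\emph{Unbounded until and dual until.} Case (2) is immediate from the definition of $\optimize$: it keeps only the oldest and youngest obligation. Case (4) is also immediate: every fresh obligation is merged away when one already exists, so at most one remains. For case (3), sort the surviving obligations $x_1>\dots>x_m$. When $\theta_{i+1}$ was created, \mmerge\ did not apply to it and $\theta_i$, so $t_i+x_{i+1}<I$. Since $t_i+x_i=u$ (dual-until waiting times start at $u$ and only merges reset them, to $u$ relative to the kept age), this gives $x_i-x_{i+1}>u-l$. Also the oldest obligation, once it lies below $I$, satisfies $t_1\ge0$, so $x_1\le u$. The gaps then force $m\le 1+\lceil l/(u-l)\rceil$. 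The boundary case $t_1=l$ with $I$ open is handled by the ``$<I$'' convention.
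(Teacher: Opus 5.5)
Your overall scheme matches the paper's proof: look back to the step where each surviving obligation was created and kept, extract the \keepboth\ inequalities, and count. Your odd/even count for case (1) is fine. But case (3) rests on a false invariant. \mmerge\ replaces the pair by $(\psi,x_{\ell_\psi-1},u)$. It keeps the \emph{old age} and resets the waiting time to the full $u$, so a merge at age $y>0$ leaves $x+t=y+u>u$. Hence $t_i+x_i=u$ is false. Your gap $x_i-x_{i+1}>u-l$ survives, since it only needs $x_i+t_i\ge u$. The range bound $x_1\le u$ does not. Take $I=[2,3]$ with a fresh $\psi$-obligation every $0.1$ time units: each one merges into the surviving obligation, whose age grows without bound while its waiting time stays in $[2.9,3]$. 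So ages cannot be confined to $[0,u]$, and your count does not go through.

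The repair is to count waiting times instead. Every surviving obligation satisfies $x+t\ge u$, since creation gives $u$ and merges only increase it. When $\theta_{i+1}$ was created and kept, every older obligation had remaining time $<I$: the youngest by \keepboth, the rest inductively. Moreover, $\theta_i$ is never merged afterwards, because it is never again the youngest. So $t_i+x_{i+1}<l$, hence $t_{i+1}-t_i>u-l$. Then $0\le t_1<\dots<t_m\le u$ gives $(m-1)(u-l)<u$, i.e.\ $m\le 1+\lceil l/(u-l)\rceil$. Equivalently, $x_2\le t_1+x_2<l$, so all ages but the oldest lie in $[0,l)$. The paper's proof also writes $t_i+x_i=u$; only this waiting-time consequence is needed.

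In case (1), the \modifyold\ caveat is attached to the wrong obligation. Neither the lemma you plan nor $t\le u-x$ closes it; the latter only yields $x_1\le u$. Once $\theta_{i+1}$ is created and kept, $\theta_i$ is never again the youngest older obligation. So $\theta_i$ is never modified, and $t_i+x_{i+1}<l$ holds with its current value. It is $\theta_{i+1}$ that may be modified later. The \keepboth\ inequality at its creation therefore involves its original waiting time $t^0_{i+1}$, not $t_{i+1}+x_{i+1}$. The missing fact is that \modifyold\ replaces a remaining time $<I$ by one in $I$, so it only postpones deadlines. This gives $x_{i+1}+t_{i+1}\ge t^0_{i+1}$, hence $x_i+t_{i+1}\ge (x_i-x_{i+1})+t^0_{i+1}>u$. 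The same observation shows that deadlines strictly increase with creation order. So Until obligations are discharged in FIFO order, and $\theta_i$ really was $\theta_{i+1}$'s predecessor. The paper's proof glosses over both points too.

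Finally, the left-open boundary case of (3) is not ``handled by the $<I$ convention''. That convention gives only $t_i+x_{i+1}\le l$, so the gaps become non-strict. The count can then exceed $k_\psi$ by one when $l/(u-l)\in\N$. For instance, take $I=(0,1]$ with fresh obligations at times $0$ and $1$. The older one has $t=0\notin I$, so \keepboth\ applies and two obligations survive, against $k_\psi=1$. This edge case needs a different merge condition or a weaker bound, not a sign adjustment.
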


\section{Symbolic computation}
\label{sec:symbolic}

The goal of this section is to develop a finite abstraction of the reduced obligation graph $\obred(\varphi)$ corresponding to an MITL formula $\varphi$. The idea is to adapt the standard region equivalence studied for timed automata~\cite{alur1994theory} in this context of obligations. The region equivalence is known to be a (time-abstract) bisimulation\footnote{We recall that a time-abstract bisimulation is an equivalence relation on configurations, where delays are matched by possibly different delays while discrete labelled transitions are matched by similarly-labelled transitions.} on the space of configurations of a timed automaton, and its finite quotient called the region graph is a sound and complete abstraction of the timed automaton semantics. 

The main challenge in our context lies in handling timer variables which move in an opposite direction to clocks, and it is known that for generalized timed automata (GTA) which contain both kinds of clocks (standard clocks and timers), there is in general no finite (time-abstract) bisimulation~\cite{DBLP:conf/cav/AkshayGGJS23}.
Here is a brief idea why this is so. The region equivalence is parameterized by a constant $M \in \N$. For a clock, once it goes beyond $M$, its actual value does not matter and can be abstracted into one chunk. However, for a timer, consider two different values $M + 1$ and $M + 2$. On a time elapse of $1$, the former comes to $M$, whereas the latter is still larger than $M$. Hence equating $M+1$ and $M+2$ is incorrect for timers. In our case, this does not happen since our timer variables are always bounded by design and are never above the maximum constant. 
We develop an equivalence relation which intuitively corresponds to applying the standard region equivalence when timer values are replaced by their negation. Note nevertheless that we will keep the semantics of timers as we defined (non-negative guessed value and decrease).

Fix an MITL formula $\varphi$. Let $M$ be the maximum constant appearing in the formula. For a timer $t$ (assuming non-negative values), let $\integraltimer{t}$ be the smallest integer greater than or equal to $t$, and let $\fractional{t} = \integraltimer{t} - t$. For a clock $x$, let $\integralclock{x}$ be the largest integer lesser than or equal to $x$, and let $\fractional{x} = x - \integralclock{x}$.

\begin{definition}[Region equivalence on obligation-sets]
Let $\Theta, \Theta'$ be well-formed obligation-sets. We say $\Theta \regeq_M \Theta'$ if there exists a bijection $h: \Theta \to \Theta'$ such that:
\begin{itemize}
\item for all $\theta \in \Theta$, both $\theta$ and $h(\theta)$ are of the same type; when $\theta = (\psi, x, t)$ and $h(\theta) = (\psi, x', t')$ we denote $h(x) = x'$ and $h(t) = t'$;
\item for every $(\psi, x, t) \in \Theta$ 
\begin{itemize}
\item $t = \bot$ iff $h(t) = \bot$; if $t \neq \bot$, then $\integraltimer{t} = \integraltimer{h(t)} $ and $\fractional{t} = 0$ iff $\fractional{h(t)} = 0$
\item $x > M$ iff $h(x) > M$; when $x \le M$, $\integralclock{x} = \integralclock{h(x)}$ and $\fractional{x} = 0$ iff $\fractional{h(x)} = 0$ 
\end{itemize} 
\item for two values $y, z$ (clock or timer) appearing in $\Theta$, we have $\fractional{y} \le \fractional{z}$ iff $\fractional{h(y)} \le \fractional{h(z)}$.
\end{itemize}  
\end{definition} 

We can extend the equivalence to nodes: $(\Theta, \id) \regeq_M (\Theta', \id')$ if $\Theta \regeq_M \Theta'$ by a bijection $h$, and for any $\theta \in \Theta$, $\id(\theta) = \id'(h(\theta))$. We will write $[(\Theta, \id)]_M$ for the equivalence class of $(\Theta, \id)$ w.r.t. $\regeq_M$. 

\begin{restatable}{lemma}{regionbisimulation}\label{lem:region-correct}
Let $\varphi$ be an MITL formula. The relation $\regeq_M$ on the nodes of $\obred(\varphi)$ is a time-abstract bisimulation. Furthermore, the equivalence relation induces a finite index.
\end{restatable}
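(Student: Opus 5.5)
Finiteness comes first. By Theorem~\ref{thm:red-bdd-mitl}, every node of $\obred(\varphi)$ has at most $K = \sum_\psi k_\psi$ obligations. We may also assume identifiers are bounded: the naming always uses the smallest available index and at most $k_\psi$ obligations of type $\psi$ coexist, so $\range(\id,\psi) \subseteq \{1,\dots,k_\psi+1\}$. Every timer value lies in $[0,M]$, since initial waiting times are drawn from $I$ or equal $u \le M$, and they only decrease. Hence each timer has finitely many choices of $(\integraltimer{t}, [\fractional{t}=0])$. Each clock either exceeds $M$ or has finitely many choices of $(\integralclock{x}, [\fractional{x}=0])$. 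There are finitely many orderings of at most $2K$ fractional parts. So the number of equivalence classes of nodes is finite.

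For the bisimulation, take $(\Theta,\id) \regeq_M (\Theta',\id')$ via $h$. The strategy is the standard one for timed automata. We view a timer $t$ as the clock-like quantity $-t$, so that $\fractional{t}$ as defined is the usual fractional part of $-t$, and every value moves in the same direction under delay. The relation then becomes classical region equivalence over a finite set of variables, using the same bijection $h$, with clocks capped at $M$ and the $-t$ values confined to $[-M,0]$.

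\textbf{Delay step.} Given a delay $\delta$ from $\Theta$, pick $\delta'$ from $\Theta'$ so that $\Theta+\delta \regeq_M \Theta'+\delta'$. This uses the usual successor-region argument: sort the fractional parts and match the sequence of region boundaries crossed. Boundaries are when some clock reaches an integer or some timer reaches an integer, in particular $0$. We must also check that blocking and vanishing agree:
\begin{itemize}
\item An Until obligation blocks exactly when its timer would become negative, i.e.\ when it crosses $0$. Since $h$ preserves $\integraltimer{t}$ and whether $\fractional{t}=0$, the same obligation blocks in both.
\item A dual Until obligation vanishes exactly when $\delta > t$. This too is decided by the region, and it is applied to $h$-corresponding obligations.
\end{itemize}

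\textbf{Discrete step.} We compare the transitions enabled in the two nodes.
\begin{itemize}
\item Every guard in Definitions~\ref{def:until-obligations-discrete} and~\ref{def:dual-until-discrete} is region-invariant: $t=0$, $x\in I$, $x<I$, $l\triangleleft_I x$. Integer endpoints are at most $M$, and "$x>M$" settles all of them uniformly.
\item Newly created obligations have age $0$ and an initial timer $t\in I$. If $\Theta$ picks timer value $t$, $\Theta'$ can pick $t'$ with the same integer part and zero-ness, inserted at the corresponding position in the fractional order. Such a $t'$ exists because regions are dense in their fractional ordering, and the $\Ii$ choices are otherwise identical.
\item Minimality of the extension depends only on which target obligations coincide. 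Coincidence is region-determined through the bijection, because ages and timers are compared exactly only among equal-fractional-part, equal-integer values. The identifiers are assigned identically.
\end{itemize}

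\textbf{Main obstacle: the $\optimize$ rules.} We must show $\optimize$ commutes with $\regeq_M$. \removenew\ tests $t_1\in I$, which is region-determined. \mmerge\ likewise tests $t_1\in I$ and produces timer $u$ and age $x_1$. The delicate rule is \modifyold, whose test is $x_1+t_2\in I$. This is a sum of a clock and a timer, i.e.\ a difference $x_1-(-t_2)$ of two region variables. Comparing it with the integers $l,u$ is decided by the integer parts together with the relative order of $\fractional{x_1}$ and $\fractional{t_2}$. For the timer, $\fractional{t_2} = \integraltimer{t_2}-t_2$, so $x_1+t_2 = \integralclock{x_1}+\integraltimer{t_2} + (\fractional{x_1}-\fractional{t_2})$, and the sign of the last term is fixed by the region ordering. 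When $x_1>M$, \modifyold\ can only fail, since $x_1+t_2>u$. The unbounded-Until rule keeps the oldest and the newest obligation; this uses only age order, which is also region-determined even when ages exceed $M$, provided equal-integer-part comparisons go through fractional order.

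One point is still uncertain. Clocks above $M$ are lumped together, so the age order between two unbounded-Until obligations both older than $M$ may not be preserved. I would handle this by arguing that only the \emph{newest} age matters, or by noting that with at most two stored obligations the oldest is the one flagged by its identifier history. Failing that, I would refine the equivalence to record the age order explicitly, which keeps the index finite.
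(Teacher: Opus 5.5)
Your decomposition is the paper's: finite index from Theorem~\ref{thm:red-bdd-mitl} plus smallest-free identifiers, and the bisimulation split into delay, discrete step and $\optimize$. Your handling of the \modifyold{} test via $\integralclock{x_1}+\integraltimer{t_2}$ and the order of $\fractional{x_1},\fractional{t_2}$ is more careful than the paper's. But the point you leave open is a genuine gap, and it is exactly where the argument fails; the paper's proof also passes over it in one sentence.

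Remedy (a) misses the issue. The newest obligation is always the fresh one of age $0$. What the unbounded-Until rule of Definition~\ref{def:until_ubdd_opt} needs is which of the two \emph{old} obligations is older, and $\regeq_M$ forgets this once both ages exceed $M$. If $\optimize$ preserves identifiers and fresh obligations take the smallest free index, identifiers do not record the age order either. So (b) fails too, and $\regeq_M$ is then not a bisimulation. Here is a counterexample.
Take $\varphi=\G(\neg a\lor\psi)$ with $\psi=\F_{[1,\infty)}b$, so $M=1$.
The prefix $(c,0)(a,1)(a,3)(c,11)$ reaches a node with $\psi$-obligations $A$ (age $10$, identifier $\psi\cdot i$) and $B$ (age $8$, identifier $\psi\cdot j$).
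The prefix $(c,0)(a,1)(a,2)(b,3)(a,4)(c,11)$ discharges the first obligation at $(b,3)$ and keeps the second. It reaches a node with $D$ (age $9$, $\psi\cdot j$) and $C$ (age $7$, reusing the freed $\psi\cdot i$).
The outer obligation has age $11$ in both, and all ages are integers above $M$. So $A\mapsto C$, $B\mapsto D$ witnesses equivalence.
Now read $a$ after any delay. Neither old obligation can be discharged, and the fresh one gets the same $\psi\cdot k$ on both sides. Keeping the oldest yields identifier sets $\{i,k\}$ versus $\{j,k\}$, so these unique $a$-successors are not equivalent.

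Remedy (b) works only if you read the relabelling $\idopt(\theta_1)=1$, $\idopt(\theta_{\ell_\psi})=2$ in Definition~\ref{def:until_ubdd_opt} literally, since then the identifier-preserving bijection maps oldest to oldest. But then $\psi\cdot 1$ is active whenever any $\psi$-obligation is pending, so $F^1_\psi$ is visited only at nodes with no $\psi$-obligation at all. For $\G\F_{[1,\infty)}b$ on a word with $b$ at every integer and $c$ at every half-integer, this never happens, contradicting Theorem~\ref{thm:oblredcorr}. So that is no way out.

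What closes the gap is your remedy (c): also require $h$ to preserve, for each unbounded-Until type, which of its at most two obligations is older. This costs one extra bit per type, so the index stays finite. Delays preserve the bit, and fresh obligations are always youngest. You should also state two facts:
\begin{itemize}
\item bounded-Until ages stay at most $u\le M$, because the invariant $x+t\le u$ survives \modifyold;
\item every bounded dual-Until obligation other than the oldest has age at most $l$, because after \keepboth{} the older one has waiting time at most $l$ and is never merged again.
\end{itemize}
With these, every ``second youngest'' choice in \removenew, \modifyold{} and \mmerge{} becomes region-determined, and the rest of your argument goes through. Note, however, that this proves the bisimulation for the refined relation, not for $\regeq_M$ as stated.
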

This leads us to the definition of a region graph over obligations. We will call the equivalence classes as obligation-regions and its quotient as the obligation-region graph. 

\begin{definition}[Region graph of obligations for MITL formulas.]
Let $\varphi$ be an MITL formula. Nodes of the \emph{obligation-region graph} are the equivalence classes of $\regeq_M$ over nodes of $\obred(\varphi)$, along with a special node $\init_\varphi$ to denote the initial node. 
For every $a \in \Sigma$, there is a transition $\init_\varphi \xrightarrow{a} [(\Theta_0, \id_0)]_M$ if there exists a transition $\init_\varphi \xrightarrow{\delta, a} (\Theta_0, \id_0)$ for some $\delta \in \Rpos$ in $\obred(\varphi)$. Similarly, 
there is a transition $[(\Theta, \id)]_M \xrightarrow{a} [(\Theta', \id')]_M$ if there exists $\delta \in \Rpos$ s.t. $(\Theta, \id) \xRightarrow{\delta, a} (\Theta', \id')$. 
\end{definition}

The acceptance condition is adapted from $\obred(\varphi)$. For every bounded Until subformula $\psi$ of $\varphi$, the number of obligations containing $\psi$ in any obligation-region is atmost $k_\psi$ (the bound of Theorem~\ref{thm:red-bdd-mitl}). For $1 \le i \le k_\psi$, let $G_{\psi}^i$ be the set $[(\Theta, \id)]_M$ s.t. $(\Theta, \id) \in F_{\psi}^i$ (coming from Definition~\ref{def:acc-condition}). An infinite run of the obligation region-graph is said to be accepting if for every bounded Until formula $\psi$ and every $1 \le i \le k_\psi$, some node in $G_\psi^i$ appears infinitely often. This is a standard generalized B\"uchi condition. 

\begin{restatable}{theorem}{oblregcorr}\label{thm:obl-region-correct}
Let $\varphi$ be an MITL formula. The obligation-region graph is sound and complete: there exists an accepting run in $\obred(\varphi)$ iff there is an accepting run in the obligation-region graph of $\varphi$. 
\end{restatable}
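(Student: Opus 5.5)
The plan is to use Lemma~\ref{lem:region-correct}, which says $\regeq_M$ is a time-abstract bisimulation of finite index on the nodes of $\obred(\varphi)$. We also use the fact that the acceptance sets $F_\psi^i$ are unions of equivalence classes. This second fact is immediate: equivalent nodes are related by a bijection $h$ that preserves types and identifiers, so $\range(\id,\psi)=\range(\id',\psi)$, and hence $(\Theta,\id)\in F_\psi^i$ iff $(\Theta',\id')\in F_\psi^i$. As a consequence, $G_\psi^i$ is exactly the set of classes contained in $F_\psi^i$. In addition, Theorem~\ref{thm:red-bdd-mitl} shows that only the identifiers $\psi\cdot i$ with $i\le k_\psi$ can ever be active: the naming always picks the smallest unused identifier, and there are at most $k_\psi$ obligations of type $\psi$. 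Hence every $F_\psi^i$ with $i>k_\psi$ contains all nodes and is trivially visited infinitely often. (For unbounded Until and dual Until subformulas the same reasoning applies with the bounds $2$ and $1$. I would make sure the region-graph condition either includes these, or note that $F_\psi^i$ for them is handled identically.)

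\emph{Soundness (concrete to abstract).} Take an accepting run $\init_\varphi\xra{\delta_1,a_1}(\Theta_1,\id_1)\xRightarrow{\delta_2,a_2}\cdots$ of $\obred(\varphi)$. Projecting each node to its class gives, by definition of the obligation-region graph, a run $\init_\varphi\xra{a_1}[(\Theta_1,\id_1)]_M\xra{a_2}\cdots$. Each $F_\psi^i$ is visited infinitely often by the concrete run, so the corresponding class lies in $G_\psi^i$ infinitely often. Hence the abstract run is accepting.

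\emph{Completeness (abstract to concrete).} Take an accepting abstract run $\init_\varphi\xra{a_1}C_1\xra{a_2}C_2\cdots$. We build a concrete run inductively. Pick $(\Theta_1,\id_1)\in C_1$ witnessing the first edge. Given $(\Theta_n,\id_n)\in C_n$, the abstract edge $C_n\xra{a_{n+1}}C_{n+1}$ comes from some representative $(\Theta,\id)\in C_n$ with $(\Theta,\id)\xRightarrow{\delta,a_{n+1}}(\Theta',\id')\in C_{n+1}$. Since $\regeq_M$ is a time-abstract bisimulation and $(\Theta_n,\id_n)\regeq_M(\Theta,\id)$, there are $\delta'$ and $(\Theta_{n+1},\id_{n+1})\regeq_M(\Theta',\id')$ with $(\Theta_n,\id_n)\xRightarrow{\delta',a_{n+1}}(\Theta_{n+1},\id_{n+1})$. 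Because membership in $F_\psi^i$ is class-invariant, the concrete run visits each $F_\psi^i$ infinitely often, so it is accepting.

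\emph{The main obstacle.} Acceptance in $\obred(\varphi)$ is for non-Zeno timed words, whereas the constructed delays $\delta'_n$ might sum to a finite value. I would handle this in the standard way for region graphs. First, enrich the abstract run by interleaving the time-successor information, or equivalently require that it infinitely often passes through classes where all clock ages exceed integer thresholds. Then choose the bisimulation witnesses so that at least one unit of time elapses periodically; this is possible because from any class every delay successor class is reachable, and the Until waiting times are bounded by $M$. If the region graph as defined does not guarantee progress, I would add the usual progress condition: some age crosses an integer boundary infinitely often, or, since the formula can be conjoined with a tick proposition, use the standard fact that a region-graph lasso can be realized with divergent delays. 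This is the step where care is needed, including showing that Until obligations with waiting time $0$ do not block divergence along an accepting run. Acceptance itself guarantees this, since every Until obligation is eventually discharged.
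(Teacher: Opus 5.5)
Your proposal is correct and follows the paper's proof. Soundness is obtained by mapping each node of an accepting $\obred(\varphi)$-run to its $\regeq_M$-class. Completeness is obtained by lifting the abstract run step by step, using the time-abstract bisimulation of Lemma~\ref{lem:region-correct}.

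Your acceptance transfer is cleaner than the paper's. You observe that the bijection $h$ preserves types and identifiers, so $\range(\id,\psi)$ is class-invariant and each $F^i_\psi$ is a union of classes. The paper instead follows an individual Until obligation through $h$ until its identifier becomes inactive. You are also right that the sets $G^i_\psi$ must be included for unbounded Until subformulas as well.

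Two corrections.

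First, the claim that only identifiers with $i\le k_\psi$ can be live is off by one. A fresh identifier is the least one unused in the \emph{source} node, and that node may already hold $k_\psi$ obligations of type $\psi$. An old one can then be discharged in the same step, so $k_\psi+1$ identifiers can be live. The paper's own proof of Lemma~\ref{lem:region-correct} uses the range $\{1,\dots,k_\psi+1\}$. Your reduction of the infinite condition to a finite one must range over all of these identifiers.

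Second, your ``main obstacle'' is not one. Runs of $\obred(\varphi)$ are defined for arbitrary timed words, which need not be non-Zeno, and acceptance is purely the generalized B\"uchi condition. Non-Zenoness is enforced separately, by conjoining $\phi_{nz}$ to the formula before the graph is built. You can drop that paragraph entirely. In particular, do not add a progress condition to the region graph: that would change the object whose correctness the theorem asserts.
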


Recall that satisfiability of $\varphi$ is the question of checking whether there exists a non-Zeno word satisfying $\varphi$. We can encode non-Zenoness using an MITL formula $\phi_{nz} := \G (\top \implies \F_{[1, \infty)} \top)$. Hence satisfiability of $\varphi$ reduces to checking whether there is an accepting run in the obligation-region graph of $\varphi \land \phi_{nz}$. 
Thanks to Theorem~\ref{thm:red-bdd-mitl}, each node of the obligation-region graph requires at most exponential space and the size of the obligation-region graph is at most doubly exponential in the formula, when constants are encoded in binary. When constants are encoded in unary, each node requires polynomial space and the graph is at most singly-exponential in the formula.  
Hence, we get the following corollary.

\begin{restatable}{corollary}{mitlcomp}
  The satisfiability problem for MITL can be solved in EXPSPACE, and if constants are encoded in unary, it can be solved in PSPACE. For the fragment MITL$_{0,\infty}$ that uses only $0$ and $\infty$ as constants, the problem can also be solved in PSPACE.
\end{restatable}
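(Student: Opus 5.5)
The plan has three parts. First, I would use the theorems stated earlier to reduce satisfiability to emptiness of a generalized Büchi graph. Second, I would bound the size of that graph. Third, I would run the standard nondeterministic lasso search and apply Savitch's theorem.

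\textbf{Reduction.} Given $\varphi$, put $\varphi' = \varphi \land \phi_{nz}$, which is again an MITL formula. By construction, $\varphi$ is satisfiable iff some timed word $w$ satisfies $(w,1)\sat\varphi'$. Theorem~\ref{thm:oblcorr} turns this into $w\in L(\ob(\varphi'))$. Theorem~\ref{thm:oblredcorr} turns that into the existence of an accepting run in $\obred(\varphi')$. Theorem~\ref{thm:obl-region-correct} turns that into the existence of an accepting run in the obligation-region graph. The last theorem relies on Lemma~\ref{lem:region-correct}.

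The acceptance condition of the region graph has only finitely many sets $G^i_\psi$, because $i\le k_\psi$ for each $\psi$. One point needs checking here: unbounded Until subformulas also carry identifiers, and there are at most $2$ of them by Theorem~\ref{thm:red-bdd-mitl}. So I would include them in the family of sets $G^i_\psi$ too.

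\textbf{Size bound.} By Theorem~\ref{thm:red-bdd-mitl}, a node contains at most $K=\sum_\psi k_\psi$ obligations. Each $k_\psi$ is $O(1+l/(u-l))$. Since the interval is nonsingular, $u-l\ge 1$, so $k_\psi=O(M)$, where $M$ is the largest constant.

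A region class is described by the following data for each obligation:
\begin{itemize}
\item its type;
\item its identifier, an index at most $k_\psi$;
\item the integer part of its age, at most $M+1$ values counting ``$>M$'';
\item the integer part of its timer (in $\{0,\dots,M\}$) or $\bot$;
\item a zero/nonzero flag for each fractional part.
\end{itemize}
It also needs an ordering of the at most $2K$ fractional parts. All of this fits in $O(K\log(K M|\varphi|))$ bits. This is polynomial in $|\varphi|\cdot M$, hence exponential in binary encoding and polynomial in unary encoding.

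For MITL$_{0,\infty}$, every bounded interval has $l=0$, so $k_\psi\le 2$, resp.\ $1$. Moreover $M$ is $1$ after adding $\phi_{nz}$. So $K=O(|\varphi|)$ and a node needs polynomial space.

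\textbf{Algorithm.} I would check nonemptiness by nondeterministically guessing a lasso: a path from $\init_{\varphi'}$ to some node $v$, then a cycle through $v$ that meets every $G^i_\psi$. The algorithm stores only the current node, $v$, and one bit per acceptance set. To make this work, transitions between region classes must be decidable in space polynomial in the node description.

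This is the main obstacle. The successor relation is defined through concrete representatives: choose a delay, apply minimal extensions, take initial guesses, then apply $\optimize$. I would argue it symbolically. Pick a canonical representative with rational values whose denominators are bounded by $2K+2$. Enumerate the region-distinct delays, of which there are polynomially many time successors as in classical region constructions. For fresh timers $t\in I$, enumerate the finitely many possible placements relative to the existing fractional parts. Then compute the minimal extensions and $\optimize$ exactly, all within the space already available. By the bisimulation property of Lemma~\ref{lem:region-correct}, the result does not depend on which representative was chosen.

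Savitch's theorem then gives NEXPSPACE $=$ EXPSPACE in general and NPSPACE $=$ PSPACE in the unary and MITL$_{0,\infty}$ cases.
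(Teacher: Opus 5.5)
Your proposal is correct and follows essentially the paper's route. You reduce via Theorems~\ref{thm:oblcorr}, \ref{thm:oblredcorr} and \ref{thm:obl-region-correct} to the obligation-region graph of $\varphi\land\phi_{nz}$, bound a region-class description using the $k_\psi$ of Theorem~\ref{thm:red-bdd-mitl} (the paper counts classes rather than bits per node, which is the same estimate), and then search on the fly. Your explicit symbolic successor computation, the lasso search with Savitch's theorem, and the inclusion of unbounded Until identifiers among the acceptance sets fill in points the paper leaves implicit.

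One small remark: the claim $M=1$ for MITL$_{0,\infty}$ is not needed, and it fails under the usual definition, where one endpoint is $0$ or $\infty$ but the other may be an arbitrary binary-encoded constant. Your conclusion still holds, because $K=O(|\varphi|)$ together with your $O(K\log(KM|\varphi|))$ bound already gives polynomial space.
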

Note that the EXPSPACE upper bound coincides with the result of~\cite{alur1996benefits}, as well as the PSPACE upper bound for MITL$_{0,\infty}$. We did not find the other PSPACE upper bound mentioned in the previous literature. These complexity results are optimal, in the sense that the satisfiability problem for MITL is known to be EXPSPACE-hard, and the other logics encompass LTL, so PSPACE is optimal.

\section{Conclusion}
\label{sec:conclusion}

We have presented an alternate approach to MITL satisfiability. In spite of its long history, there has been a recent flurry of activity in MITL satisfiability as witnessed by two new orthogonal approaches leading to mature tools~\cite{DBLP:conf/tacas/HoKMMP26,DBLP:conf/tacas/AkshayCGGS26}. We believe our approach complements the existing approaches and we hope that some of the insights we develop here could benefit the other approaches too. Both the existing tools for MITL started with a ``seed'' laying the foundations: \cite{brihaye2013mitl,DBLP:conf/formats/BrihayeEG14} for the tool MightyL and its successor MightyPPL, and \cite{DBLP:conf/concur/0001G0S24} for TEMPORA. In our perspective, we view the current work as the seed for further investigations into this approach. Therefore, a natural future direction is the study of zone-based algorithms and tool development based on our simple obligations for MITL. 

\bibliography{ATA-zones}

\appendix 
\section{Appendix for Section \ref{sec:obligations}}\label{sec:append-ob}

\wellformed*
\begin{proof}
We take some $\delta \in \Rpos$, $a \in \Sigma$ and some arbitrary $\Theta'$ such that $\Theta'$ is a minimal $(\delta,a)$-extension of $\Theta$, and prove that $\Theta'$ is also well-formed by contradiction. Assume that $\Theta'$ is not well-formed. This means there exist obligations $(\psi,x,t_1),(\psi,x,t_2)$ for some subformula $\psi = \psi_1 B_I \psi_2$ such that $t_1 \neq t_2$. Now, there are two possible cases:
\begin{itemize}
    \item[-]If $x > 0$, it means that $(\psi,x-\delta,t_1 + \delta)$ and $(\psi,x-\delta,t_2+\delta)$ would also be present in $\Theta$, by definition of transitions. This is a contradiction as $\Theta$ is well-formed.
    \item[-]If $x = 0$, it means $(\psi,0,t_1)$ and $(\psi,0,t_2)$ are newly created obligations, and thus $t_1,t_2 \in I$. This in turn means that $\Theta''_{i} = \Theta' \setminus (\{(\psi,0,t_1),(\psi,0,t_2)\}) \cup (\psi,0,t_i)$ for $i = \{1,2\}$ are both valid $(\delta,a)$-extensions of $\Theta$. But $\Theta''_{i} \subset \Theta'$, which is a contradiction as $\Theta''$ is a minimal $(\delta,a)$-extension of $\Theta$.  
\end{itemize}
Hence, $\Theta'$ will also be well formed for any $\Theta'$ that is a minimal $(\delta,a)$-extension of a well-formed obligation-set $\Theta$.
\end{proof}

\begin{figure}
    \centering
    \begin{tikzpicture}
    \begin{scope}[scale=0.8]
        \begin{scope}[rounded corners]
            \fill[gray!50] (5.8,5.5) rectangle (6.2,4.9);
            \fill[gray!50] (3.8,4.5) rectangle (4.2,3.9);
            \fill[violet!30] (5.7,4.5) rectangle (6.3,3.9);
            \fill[blue!50] (5.2,3.5) rectangle (6.8,2.9);
            \fill[blue!20] (3,2.5) rectangle (4,1.9);
            \fill[blue!50] (5.2,2.5) rectangle (6.8,1.9);
            \fill[red!50] (7.8,2.5) rectangle (10.2,1.9);
            \fill[gray!50] (-0.2,1) rectangle (0.2,0.4);
            \fill[blue!20] (0.5,1) rectangle (1.5,0.4);
            \fill[blue!20] (1.7,1) rectangle (2.7,0.4);
            \fill[blue!50] (3,1) rectangle (4.6,0.4);
            \fill[red!50] (4.8,1) rectangle (7.2,0.4);
            \fill[red!20] (7.4,1) rectangle (9,0.4);
            \fill[red!50] (9.4,1) rectangle (11.6,0.4);
            \fill[orange!50] (11.8,1) rectangle (15.2,0.4);
            \foreach \i in {0.4, 1.9}{
                \fill[gray!50] (\i-0.2,-0.2) rectangle (\i+0.2,-0.8);
            }
            \foreach \i in {1, 2.5, 3.3, 7.9}{
                \fill[blue!20] (\i-0.2,-0.2) rectangle (\i+0.2,-0.8);
            }
            \foreach \i in {3.9}{
                \fill[blue!50] (\i-0.2,-0.2) rectangle (\i+0.2,-0.8);
            }
            \foreach \i in {4.6, 6.1, 10.6}{
                \fill[red!50] (\i-0.2,-0.2) rectangle (\i+0.2,-0.8);
            }
            \foreach \i in {5.5, 8.5, 10}{
                \fill[red!20] (\i-0.2,-0.2) rectangle (\i+0.2,-0.8);
            }
            \foreach \i in {6.7, 11.4, 13.5}{
                \fill[orange!50] (\i-0.2,-0.2) rectangle (\i+0.2,-0.8);
            }
            \foreach \i in {12.6}{
                \fill[orange!20] (\i-0.2,-0.2) rectangle (\i+0.2,-0.8);
            }
            \foreach \i in {14.4}{
                \fill[green!50] (\i-0.2,-0.2) rectangle (\i+0.2,-0.8);
            }
        \end{scope}
        \node (-1) at (6,5.2) {\scriptsize $\{\}$};
        \node (01) at (4,4.2) {\scriptsize $\{\}$};
        \node (02) at (6,4.2) {\scriptsize $(\varphi)$};
        \node (12) at (6,3.2) {\scriptsize $(\varphi),(\varphi_1)$}; 
        \node (21) at (3.5,2.2) {\scriptsize $(\varphi_1)$};
        \node (22) at (6,2.2) {\scriptsize $(\varphi),(\varphi_1)$};
        \node (23) at (9,2.2) {\scriptsize $(\varphi),(\varphi_1),(\varphi_1)$};
        \node (31) at (0,0.7) {\scriptsize $\{\}$};
        \node (32) at (1,0.7) {\scriptsize $(\varphi_1)$};
        \node (33) at (2.2,0.7) {\scriptsize $(\varphi_1)$};
        \node (34) at (3.8,0.7) {\scriptsize $(\varphi),(\varphi_1)$};
        \node (35) at (6,0.7) {\scriptsize $(\varphi),(\varphi_1),(\varphi_1)$};
        \node (36) at (8.2,0.7) {\scriptsize $(\varphi_1),(\varphi_1)$};
        \node (37) at (10.5,0.7) {\scriptsize $(\varphi),(\varphi_1),(\varphi_1)$};
        \node (38) at (13.5,0.7) {\scriptsize $(\varphi),(\varphi_1),(\varphi_1),(\varphi_1)$};
        \node (d) at (6,-1.2) {\scriptsize $\dots$};
        \begin{scope}[black!80,->,>=stealth]
            \draw (-1) to (01);
            \draw (-1) to (02);
            \draw (02) to (12);
            \draw (12) to (21);
            \draw (12) to (22);
            \draw (12) to (23);
            \draw (21) to (31);
            \draw (21) to (32);
            \draw (22) to (33);
            \draw (22) to (34);
            \draw (22) to (35);
            \draw (23) to (36);
            \draw (23) to (37);
            \draw (23) to (38);
            
            \draw (32) to (0.4,-0.1);
            \draw (32) to (1,-0.1);
            \draw (33) to (1.9,-0.1);
            \draw (33) to (2.5,-0.1);
            \draw (34) to (3.4,-0.1);
            \draw (34) to (3.9,-0.1);
            \draw (34) to (4.6,-0.1);
            \draw (35) to (5.5,-0.1);
            \draw (35) to (6.1,-0.1);
            \draw (35) to (6.7,-0.1);
            \draw (36) to (7.9,-0.1);
            \draw (36) to (8.5,-0.1);
            \draw (37) to (10,-0.1);
            \draw (37) to (10.6,-0.1);
            \draw (37) to (11.4,-0.1);
            \draw (38) to (12.6,-0.1);
            \draw (38) to (13.5,-0.1);
            \draw (38) to (14.4,-0.1);
        \end{scope}
    \end{scope}

    \end{tikzpicture}
    \caption{An abstract representation of $\ob(\varphi)$}\label{fig:transition-system}
\end{figure}
\begin{example}
    Figure \ref{fig:transition-system} gives an abstract picture of the obligation graph generated for the until formula $\varphi = \varphi_1 \U_{[2,10]} b$, where $\varphi_1 = (\F_{[1,3]} a)$. The figure shows the type of obligations generated for $\varphi$, and gives a rough idea of how $\ob(\varphi)$ looks like. 
\end{example}

Before turning to the proof of Theorem~\ref{thm:oblcorr}, we state three useful lemmas whose proofs follow from the definitions. We let $\varphi$ be an MTL formula, and consider its obligation graph $\ob(\varphi)$. In these lemmas, we use all notations that were previously defined.

\begin{lemma}
  \label{lemma:ids}
  Assume $(\Theta_0,\id_0) \xrightarrow{\delta_1,a_1} (\Theta_1,\id_1) \xrightarrow{\delta_2,a_2} \dots$ is an accepting run in $\ob(\varphi)$. Then for every $\id'_0$ that is a naming function for $\Theta_0$, the run $(\Theta_0,\id'_0) \xrightarrow{\delta_1,a_1} (\Theta_1,\id'_1) \xrightarrow{\delta_2,a_2} \dots$ is an accepting run.  \end{lemma}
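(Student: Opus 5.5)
The plan is to build the renamed run step by step alongside the original one, and then check that acceptance is independent of the naming. The obligation-sets $\Theta_0,\Theta_1,\dots$ and the delays and letters stay exactly the same. Only the naming functions change. Each transition $(\Theta_k,\id_k)\xrightarrow{\delta_{k+1},a_{k+1}}(\Theta_{k+1},\id_{k+1})$ in $\ob(\varphi)$ needs two things: $\Theta_k \xrightarrow{\delta_{k+1},a_{k+1}} \Theta_{k+1}$, which does not mention identifiers, and the two naming rules of Definition~\ref{def:obligation-graph}. Those rules determine $\id_{k+1}$ uniquely from $\id_k$ and the sets. Starting from $\id'_0$, I define $\id'_{k+1}$ inductively by the same rules, so each $(\Theta_k,\id'_k)\xrightarrow{\delta_{k+1},a_{k+1}}(\Theta_{k+1},\id'_{k+1})$ is a transition of $\ob(\varphi)$. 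I must check that $\id'_{k+1}$ is an injection into $\identifiers_\psi$ for each type $\psi$. Surviving obligations inherit distinct names from $\id'_k$. By well-formedness there is at most one fresh obligation per type, and it receives $i_\psi\notin\range(\id'_k,\psi)$. This value might collide with the name of a surviving obligation only if that name was in $\range(\id'_k,\psi)$, which it is not.

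Next I track individual obligations. Call $\theta\in\Theta_k$ and $\theta'\in\Theta_{k+1}$ \emph{linked} if $\theta'=\theta+\delta_{k+1}$ and both are present. Maximal chains of linked Until obligations of type $\psi$ are ``lives''. Along a life, both $\id$ and $\id'$ keep the name constant. A life that begins at step $k$ with a fresh obligation gets a name depending only on the set of names in use at step $k$. The key fact is that acceptance says every life is finite. A run is accepting iff every $F^i_\psi$ is visited infinitely often. I claim this holds iff every Until life eventually ends. If some life is infinite, its name $\psi\cdot i$ stays in the range from some point on, so $F^i_\psi$ is visited only finitely often. Conversely, suppose every life is finite. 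Fix $\psi$, $i$ and a position $k$. If $i$ is in use at $k$, the current life carrying it ends at some later step, and at that step $i$ is absent from the range unless a fresh obligation immediately reuses $i$. This reuse is the step I expect to need the most care. The remedy is that at any step at most one fresh obligation of type $\psi$ appears, and it takes the \emph{smallest} unused index. Hence $i$ is reused immediately only when $i$ is that smallest index. That new life is again finite, and the argument repeats.

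This repetition could in principle go on forever, with $i$ reused at every end, so that $F^i_\psi$ is never visited. To close this gap I will not characterize acceptance via names. Instead I will show that the set of positions where $i\notin\range(\id'_k,\psi)$ is infinite iff the corresponding statement holds for $\id$. I would compare the two namings through the number of simultaneously alive $\psi$-lives. Both namings always use a set of names of the same cardinality. Moreover, once all lives alive at step $0$ have ended, which happens at some finite step $K$ because the original run is accepting, the sets $\range(\id_k,\psi)$ and $\range(\id'_k,\psi)$ evolve by identical deterministic rules. These rules are ``remove the names of dying lives, add the smallest free index''. I will show by induction that after $K$ the two ranges coincide, or at least become equal after finitely many steps. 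The point is that the initial names are the only source of difference, and the smallest-free-index rule from the same current range produces the same result. So from step $K$ on, $F^i_\psi$ is visited infinitely often in the new run iff it is in the old one, and the old run is accepting.

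The main obstacle is to make this synchronization argument rigorous. While old lives from step $0$ are still alive, the two ranges differ, and the freshly chosen smallest indices may differ too. The lives created in this window carry different names in the two runs, and they may outlive step $K$. I would handle this by a potential argument. Let $K'$ be a step by which all lives alive at step $K$ have ended; it exists by finiteness. If no life is alive at some step, both ranges are empty and they agree from then on. Otherwise, I would argue directly with lives instead. Each name $i$ in the new run is attached to a sequence of lives. If that sequence is finite, $i$ is eventually unused forever. If it is infinite, there are infinitely many life-ends for $i$ and infinitely many starts. A name is immediately reused only at a step where a fresh life is created. Repeated immediate reuse of $i$ forever would mean that $i$ is the minimal free index at infinitely many steps while in use at every step. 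I would rule this out by comparing with the original run: there, the corresponding minimal-free-index behaviour forces a visit to $F^i_\psi$. If this comparison proves too delicate, I would fall back to proving the weaker statement actually needed later in the paper, namely that ``every life is finite'' is naming-independent.
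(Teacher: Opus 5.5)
Your setup is the right one: keep the same obligation-sets, propagate names by the rules of Definition~\ref{def:obligation-graph}, and define lives through the set-based linking relation. Your injectivity check and your treatment of an infinite life are also correct. The gap is the converse implication: if every Until life is finite, then every $F^i_\psi$ is visited infinitely often. You leave this open because you fear that a dying name $i$ could be reassigned in the very step its life ends. That cannot happen, and you already used the reason in your injectivity check. The fresh index $i_\psi$ is the least number outside $\range(\id_k,\psi)$, which is the range of the \emph{source} node. It is not the least number unused by the surviving obligations; your later summary ``remove the names of dying lives, add the smallest free index'' misreads the rule. So suppose the life named $\psi\cdot i$ is in $\Theta_k$ and has no linked successor in $\Theta_{k+1}$. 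The surviving obligations keep their names, and these differ from $i$ by injectivity of $\id_k$. The possible fresh $\psi$-obligation (at most one, by well-formedness) gets $i_\psi\neq i$, because $i\in\range(\id_k,\psi)$. Hence $i\notin\range(\id_{k+1},\psi)$, so every end of a life named $i$ gives a visit to $F^i_\psi$ at the next node. A run is therefore accepting iff every Until life is finite. That property depends only on $\Theta_0,\Theta_1,\dots$, so it transfers to any initial naming $\id'_0$. This is all the paper's ``follows from the definitions'' amounts to.

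The synchronization argument you propose instead is not just delicate; its key claim is false. The two ranges need not ever coincide, even eventually. Suppose each $\psi$-life lasts exactly two nodes and a new one is born at every step, starting from a single life named $1$ in the original run and $5$ in the renamed one. From step $1$ on, the original ranges are $\{1,2\},\{2,3\},\{1,3\},\{1,2\},\dots$ and the renamed ones are $\{1,5\},\{1,2\},\{2,3\},\{1,3\},\dots$. They stay shifted forever, although both runs are accepting. So the induction ``after $K$ the two ranges coincide'' cannot be carried out. Since the proposal ends with only a fallback plan, it does not prove the lemma as it stands. The single observation above closes the gap and makes the potential argument unnecessary.
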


\begin{lemma}
  \label{lemma:proj}
  Assume $(\Theta_0,\id_0) \xrightarrow{\delta_1,a_1} (\Theta_1,\id_1) \xrightarrow{\delta_2,a_2} \dots$ is an accepting run in $\ob(\varphi)$. Then for every $\Theta'_0 \subseteq \Theta_0$ and $\id'_0$ which restricts $\id_0$ to $\Theta'_0$, there is a unique accepting run $(\Theta'_0,\id'_0) \xrightarrow{\delta_1,a_1} (\Theta'_1,\id'_1) \xrightarrow{\delta_2,a_2} \dots$  in $\ob(\varphi)$, where $\Theta'_j \subseteq \Theta_j$ for every $j$. It makes the same (non-deterministic) choices along the run, and it is called the \emph{projection} of the original run on obligation-set $\Theta'_0$.
\end{lemma}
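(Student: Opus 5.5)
The proof goes by following the projected run along the same word. Suppose the run $(\Theta_0,\id_0) \xrightarrow{\delta_1,a_1} (\Theta_1,\id_1) \xrightarrow{\delta_2,a_2} \cdots$ is accepting and $\Theta'_0 \subseteq \Theta_0$. We define $\Theta'_j$ inductively as the set of obligations in $\Theta_j$ that descend from $\Theta'_0$.

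Every transition $\Theta_{j-1} \xrightarrow{\delta_j,a_j} \Theta_j$ is a minimal $(\delta_j,a_j)$-extension. So for each $\theta \in \Theta_{j-1}$ we can fix a witness $\Theta''_\theta \subseteq \Theta_j$ with $\theta \xrightarrow{\delta_j,a_j} \Theta''_\theta$. Minimality forces $\Theta_j = \bigcup_\theta \Theta''_\theta$: if some element lay outside this union, removing it would leave a smaller extension. We then set
\[
\Theta'_j = \bigcup_{\theta \in \Theta'_{j-1}} \Theta''_\theta .
\]
This is a $(\delta_j,a_j)$-extension of $\Theta'_{j-1}$ by construction, and it reuses exactly the non-deterministic choices of the original run.

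Next we check minimality and well-formedness.
\begin{itemize}
\item Well-formedness is inherited, since $\Theta'_j \subseteq \Theta_j$ and any subset of a well-formed set is well-formed.
\item For minimality, assume a strict subset $\Theta^\ast \subset \Theta'_j$ were already a $(\delta_j,a_j)$-extension of $\Theta'_{j-1}$. Then $\Theta^\ast \cup \bigcup_{\theta \in \Theta_{j-1}\setminus\Theta'_{j-1}} \Theta''_\theta$ is an extension of $\Theta_{j-1}$. If it is strictly contained in $\Theta_j$, we contradict the minimality of $\Theta_j$.
\end{itemize}
The gap is an element of $\Theta'_j \setminus \Theta^\ast$ that also lies in some $\Theta''_\theta$ with $\theta \notin \Theta'_{j-1}$, i.e. an obligation shared between two witnesses. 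This is the main obstacle.

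The first thing to try is a well-formedness argument. Obligations with age $>0$ in $\Theta_j$ are delayed copies $\theta+\delta_j$ of a unique parent, because no two obligations of one type share an age. Fresh obligations (age $0$) of a given type are unique in $\Theta_j$, so they may be shared among several parents. If sharing really breaks minimality, the fallback is to define $\Theta'_j$ as any minimal $(\delta_j,a_j)$-extension of $\Theta'_{j-1}$ contained in the union above; such a subset exists because the sets involved are finite. The cost is that "unique" then holds only for the canonical choice.

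For the identifiers, $\id'_j$ is defined by the rule of Definition~\ref{def:obligation-graph}. Surviving obligations keep their names, so $\id'_j$ agrees with $\id_j$ on survivors. For fresh obligations we cannot rely on the "smallest unused" rule matching $\id_j$, and here Lemma~\ref{lemma:ids} is not needed. Acceptance is argued semantically instead. Every Until obligation in $\Theta'_j$ is also in $\Theta_j$ and is tracked identically there. It is eventually discharged in the original run because that run is accepting, and the discharge step is the same choice, so it is discharged in the projection too. It remains to turn "each obligation is eventually discharged" into visiting every $F^i_\psi$ infinitely often. If some identifier $\psi\cdot i$ were active forever from some point on, then, since an obligation keeps its identifier while it lives, a single obligation would live forever, contradicting its discharge.
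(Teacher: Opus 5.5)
The paper gives no proof of this lemma; it only says the three lemmas ``follow from the definitions''. The argument it has in mind is the one you start with: copy the original run's witnesses for the obligations descending from $\Theta'_0$, and call the result unique. You follow that route, and your proof is correct once you commit to your fallback. The obstacle you flag is real, so the fallback is necessary rather than optional.

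Here is an instance. Let $\psi = \chi_1 \dual{\U}_{[1,2]} \chi_2$ with $\chi_1 = a \lor \F_{[1,2]} c$. Let $\theta_A$ be a $\psi$-obligation whose age after the delay is still below $1$, and let $\theta_B = ((\F_{[1,2]}c)\U_J\chi, y, s)$ with $s > \delta$. On letter $a$, $\theta_B$ must spawn a fresh $\gamma = (\F_{[1,2]}c, 0, t)$. So $\{\theta_B+\delta, \gamma\}$ is a minimal extension of $\{\theta_A,\theta_B\}$, and $\{\gamma\}$ is a legitimate witness for $\theta_A$ via the $\Theta_1$ branch. Yet $\{\gamma\}$ is not minimal for $\{\theta_A\}$ alone, because $\{\} \in \Ii_{\chi_1,a}$.

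Your repair works: take an inclusion-minimal extension of $\Theta'_{j-1}$ inside $\bigcup_{\theta\in\Theta'_{j-1}}\Theta''_\theta$. All its subsets also lie in that union, so it is minimal in the paper's sense, and both $\Theta'_j\subseteq\Theta_j$ and well-formedness are preserved.

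The price you name is also real. Take $\chi_1 = \F_{[1,2]}c \lor \F_{[1,2]}d$ and let $\theta_B$ spawn both a fresh $\gamma_c$ and a fresh $\gamma_d$. Then $\{\gamma_c\}$ and $\{\gamma_d\}$ are two incomparable minimal extensions of $\{\theta_A\}$ inside $\Theta_j$. So the literal ``unique'' in the statement fails in general, and uniqueness only makes sense relative to a fixed choice of witnesses. Only existence is used later, in the proof of Theorem~\ref{thm:oblcorr}, so your weaker version is the right one.

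In the acceptance argument, drop the phrase ``the discharge step is the same choice''. It no longer holds after the fallback, and it is not needed. Suppose an identifier stayed active forever in the projection. Fresh identifiers are taken outside the source's range, so a single obligation's continuation lies in $\Theta'_m\subseteq\Theta_m$ for all large $m$. In the original run that obligation then keeps its original identifier forever, contradicting acceptance.

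Relatedly, $\id'_j$ agrees with $\id_j$ only on continuations of obligations of $\Theta'_0$, not on all survivors. Your semantic argument rightly does not depend on that claim.
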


\begin{lemma}
  \label{lemma:union}
  Assume $(\Theta_0,\id_0) \xrightarrow{\delta_1,a_1} (\Theta_1,\id_1) \xrightarrow{\delta_2,a_2} \dots$  in $\ob(\varphi)$ and $(\Theta'_0,\id'_0) \xrightarrow{\delta_1,a_1} (\Theta'_1,\id'_1) \xrightarrow{\delta_2,a_2} \dots$ are accepting runs in $\ob(\varphi)$, and assume that the ranges of $\id_0$ and $\id'_0$ are disjoint (which we can assume wlog due to Lemma~\ref{lemma:ids}). Then for every maximally well-formed subset $\Theta''_0$ of $\Theta_0 \cup \Theta'_0$, there is an accepting run $(\Theta''_0,\id''_0) \xrightarrow{\delta_1,a_1} (\Theta''_1,\id''_1) \xrightarrow{\delta_2,a_2} \dots$ such that $\Theta''_j \subseteq \Theta_j \cup \Theta'_j$ for every $j$, and $\id''_0$ restricts $\id_0 \cup \id'_0$ to $\Theta''_0$. We call it a \emph{join} of the two initial accepting runs.
\end{lemma}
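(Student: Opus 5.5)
The plan is to build the joined run step by step by induction on $j$, maintaining the invariant $\Theta''_j \subseteq \Theta_j \cup \Theta'_j$. Each obligation $\theta \in \Theta''_j$ is additionally tagged with a \emph{source}: one of the two given runs in which $\theta$ (same type, age and waiting time) occurs at step $j$. If $\theta$ lies in both $\Theta_j$ and $\Theta'_j$, I fix once and for all the unprimed run as its source and keep that choice as long as the obligation survives. For $j=0$, $\Theta''_0$ is the given maximally well-formed subset and $\id''_0$ is the restriction of $\id_0 \cup \id'_0$. The latter is a naming function because the ranges of $\id_0$ and $\id'_0$ are disjoint.

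\emph{Inductive step.} Take $\theta \in \Theta''_j$ with source, say, the unprimed run. Since $\Theta_j \xra{\delta_{j+1},a_{j+1}} \Theta_{j+1}$, there is $\Theta^\theta \subseteq \Theta_{j+1}$ with $\theta \xra{\delta_{j+1},a_{j+1}} \Theta^\theta$, and I pick such a set. The union $E = \bigcup_{\theta \in \Theta''_j} \Theta^\theta$ is then a $(\delta_{j+1},a_{j+1})$-extension of $\Theta''_j$ contained in $\Theta_{j+1} \cup \Theta'_{j+1}$. Since $E$ is finite, it contains a minimal extension $\Theta''_{j+1} \subseteq E$. By Lemma~\ref{lem:minimal-extension-guarantess-well-formedness}, $\Theta''_{j+1}$ is well-formed, because $\Theta''_j$ is. The naming function $\id''_{j+1}$ is dictated by Definition~\ref{def:obligation-graph}, so this gives a legal transition of $\ob(\varphi)$. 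Sources propagate as follows: a surviving obligation $\theta+\delta_{j+1}$ keeps the source of $\theta$, and a freshly created obligation gets the source of the run whose $\Theta^\theta$ produced it.

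\emph{Acceptance.} This is where I expect the main work. First I show that every Until obligation $\theta$ appearing in $\Theta''_j$ disappears at some later step. In its source run, $\theta$ persists (aging) exactly until that run discharges it, because the successor sets we copy either keep $\theta+\delta$ or discharge it. Moreover, in the source run $\theta$ carries a fixed identifier for as long as it is alive. Since the source run is accepting, that identifier must become inactive there at some step, so $\theta$ is discharged there in finitely many steps. In the joined run, $\theta$ can only disappear earlier: either it is discharged at the same moment, or it is dropped by the minimisation step. A subtle point is that minimisation may drop the kept copy $\theta+\delta$ only when it is covered by an identical obligation. Well-formedness excludes two same-type obligations of equal age, so any such coverage involves the identical obligation and identifiers are inherited consistently; I would check this carefully.

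Then I conclude the generalized B\"uchi condition. Fix $\psi$ and $i$, and any position $j$. If $\psi\cdot i$ is active at $j$, it names a single obligation, which disappears at some later step $j' > j$. By the naming rule, at $j'$ the identifier is inactive, since a fresh obligation in the same step receives the smallest identifier unused in the previous set, and that set still contains the obligation being discharged. Hence $F^i_\psi$ is visited after every position, i.e.\ infinitely often. The delicate part of the whole argument is keeping sources and identifiers consistent through the minimisation step; the rest is bookkeeping from the definitions, in the spirit of Lemmas~\ref{lemma:ids} and~\ref{lemma:proj}.
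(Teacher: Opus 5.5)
\paragraph{Verdict.} Your construction matches the paper's informal description of a join, and the paper gives no further proof (it only says the lemma follows from the definitions). There is, however, a genuine gap, and it sits in the acceptance step.

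\paragraph{The failing step.} You claim that an obligation of the joined run ``can only disappear earlier'' than in its source run. Already in the construction you also let a survivor $\theta+\delta_{j+1}$ inherit the source of $\theta$. Both fail for the same reason. Obligations are identified only by their triple, and under the naming rule of Definition~\ref{def:obligation-graph} every element of $(\Theta''_j+\delta_{j+1})\cap\Theta''_{j+1}$ keeps its identifier. Suppose $\delta_{j+1}=0$ and $\theta$ has age $0$. Its source run may discharge it, so $\theta\notin\Theta_{j+1}$. Yet $\theta$ can coincide with a fresh obligation in $\Theta^{\theta'}$ for some $\theta'$ sourced in the other run. Then $\theta$ stays in $\Theta''_{j+1}$ with its old identifier although its source no longer contains it. So your invariant breaks, and the ``no later than in the source'' argument breaks with it. The danger is this re-justification. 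Minimisation dropping $\theta+\delta$ is not the problem; that only helps.

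\paragraph{The statement fails for time-constant words.} If time stays constant, the re-justification can repeat forever, and then the lemma itself is false. Take
\begin{itemize}
\item $\psi=\top\U_{[0,1]}b$, $\pi=\top\U_{[0,2]}b$, $\kappa=\top\U_{[0,3]}b$;
\item $\chi_1=\bot\,\dual{\U}_{[0,\infty)}(\psi\vee\pi)$ and $\chi_2=\bot\,\dual{\U}_{[0,\infty)}(\psi\vee\kappa)$, working in $\ob(\chi_1\wedge\chi_2)$;
\item every step equal to $(0,b)$;
\item $c_k=(\chi_k,0,\bot)$, $o=(\psi,0,0)$, $p=(\pi,0,0)$, $q=(\kappa,0,0)$.
\end{itemize}
The runs $\{c_1\},\{c_1,o\},\{c_1,p\},\{c_1,o\},\dots$ and $\{c_2\},\{c_2,q\},\{c_2,o\},\{c_2,q\},\dots$ are both accepting. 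Any run from $\{c_1,c_2\}$ inside the unions keeps $c_1$ and $c_2$ forever. At odd steps $p$ is not in the union, so $c_1$ must use $o$; at even steps $q$ is not in the union, so $c_2$ must use $o$. Hence $o$ is present at every step and never loses its identifier, so no join is accepting.

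\paragraph{How to repair it.} The lemma therefore needs the word not to be eventually time-constant, for instance non-Zeno. That is all the paper uses, since it tests $\varphi\wedge\phi_{nz}$. For that case the missing idea is the following.
\begin{itemize}
\item By minimality, every element of a successor set lies in some chosen witness set $\Theta^\eta$. So it is either $\eta+\delta$ or fresh with age $0$.
\item Consequently, once the obligation carrying $\psi\cdot i$ in the joined run has positive age, the set of given runs containing its current version can only shrink from step to step.
\item In each such run that version carries an inherited identifier, so it disappears there by acceptance and cannot reappear.
\item Since $\Theta''_m\subseteq\Theta_m\cup\Theta'_m$, it then disappears from the joined run after finitely many steps.
\end{itemize}
You should also re-tag each survivor with a given run that actually contains it, so that $\Theta^\theta$ always exists. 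With these changes, your closing argument completes the proof. That argument says that $\psi\cdot i$ becomes inactive at the step where its obligation disappears, because a fresh obligation gets an identifier unused in the previous set, and it is correct.
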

Basically, a join of two runs is a union of runs, where some obligations are removed to only have well-formed obligation-sets. Identifiers need to be adapted. 
\oblcorr*

\begin{proof}
  Assume $w = (a_1,\tau_1) (a_2,\tau_2) \dots$
    To prove this, we first make the claim that for any subformula $\psi$ of $\varphi$, $w,i \sat \psi$ iff there is an accepting run of the form $(\Theta,\id) \xra{\delta_{i+1},a_{i+1}} (\Theta_{i+1},\id_{i+1}) \dots$ in $\ob(\varphi)$  (where $\delta_j = \tau_j-\tau_{j-1}$ for every $j$) for some $\Theta \in \Ii_{\psi,a_{i}}$ and any identifier $\id$ for $\Theta$ (the initial choice of $\id$ does not matter, acceptance is independent of it).
    Using this claim, it is straightforward to see that the theorem holds: there is an accepting run for $w$ in $\ob(\varphi)$ iff there is an accepting run for $(a_{2},\tau_2) (a_3,\tau_3) \dots$ from $(\Theta_1,\id_1)$ for some $\Theta_1 \in \Ii_{\varphi,a_{1}}$ and identifier $\id_1$ for $\Theta_1$, and this holds, using the claim, iff $w,1 \sat \varphi$.
    We now prove this claim using an induction on the structure of the subformula $\psi$. 
    \begin{description}
        \item[Base cases] When $\psi = a$, $w,i \sat \psi$ if $a_{i} = a$. As $\Ii_{\psi,a_{i}}$ is defined only if $a_{i} = a$, in which case there is a run with no obligations generated ($\Ii_{\psi,a_{i}} = \Theta_\emptyset$), hence an accepting run from $(\Theta_\emptyset,\id_\emptyset)$, and so the statement holds. We can similarly prove the statement for the case when $\psi = \lnot a$.
        \item[I.H.] We assume that the claim holds for subformulas of $\psi$, and look at the possible cases:
          \begin{itemize}
        \item for $\psi = \psi_1 \lor \psi_2$ and $\psi = \psi_1 \land \psi_2$, the proof is straightforward because of the way we have defined the initial obligation-sets, and thanks to the technical Lemmas~\ref{lemma:union} and~\ref{lemma:proj}. 
            \item if $\psi = \psi_1 \U_I \psi_2$ and $I$ is bounded, then $(w,i) \sat \psi$ if there is some $i < j$ such that $(w,j) \sat \psi_{2}$, $\tau_j - \tau_i \in I$ and for all $i < k < j$, $(w,k) \sat \psi_1$. Now, as $I$ is bounded, $\Theta \in \Ii_{\psi,a_{i}}$ will be of the form $\{(\psi , 0, t)\}$ for $t \in I$. There is an accepting run from $(\Theta,\id)$ on $(a_{i+1},\tau_{i+1})\dots$ only if this initially generated obligation is removed in a later obligation-set, along with every other generated until obligation during the run. The obligation in $\Theta$ is removed at some transition only if there is an event that occurs at $t$ time units from $i$, where obligations for $\psi_2$ will be generated. Also, at every transition until that point, additional obligations for $\psi_1$ will be generated. When new obligations are generated, accepting runs are merged by joining runs (Lemma~\ref{lemma:union}). Assuming that the induction hypothesis for $\psi_1,\psi_2$, this means that there is an accepting run from $\Theta$ if there is a point $i < j$ such that $\tau_j - \tau_i = t$ (where $t \in I$), $w,j \sat \psi_2$, and for every point $i < k < j$, $w,k \sat \psi_1$. Hence the statement holds for this case.
            \item if $ \psi = \psi_1 \U_I \psi_2$ and $I$ is unbounded, then $w,i \sat \psi$ if there is some $i < j$ such that $(w,j) \sat \psi_{2}$, $\tau_j - \tau_i \in I$ and for all $i < k < j$, $(w,k) \sat \psi_1$. Now, as $I$ is an unbounded interval, $\Theta$ will be exactly $\{(\psi,0,\bot)\}$, and this obligation is removed, similar to before, if there is some event that occurs after time units $x \in I$ from $i$, where obligations for $\psi_2$ are generated, and every event until then generates obligation for $\psi_1$ additionally. The reasoning then will follow similarly.
            
            \item if $\psi = \psi_1 \dual{\U} \psi_2$ and $I$ is bounded, $w,i \sat \psi$ if for all $j$ s.t. $i < j$ and $\tau_j - \tau_i \in I$, either $(w,j) \sat \psi_{2}$, or there is some $k$ with $i < k < j$ with $(w,k) \sat \psi_{1}$. Here, $\Theta$ will be $\{(\psi,0,u)\}$, and as the obligation present is not an until obligation, we only need to ensure that the additional until obligations generated in the course of the run are eventually removed. Notice that once an event occurs after $c$ time units from $i$, this obligation will get removed, and so we only need to worry about the obligations generated by transitions up until such an event occurs, or this obligation gets removed. Now, during the run on $(a_{i+1},\tau_{i+1})\dots$, when the time elapsed is less than the interval $I$, either no new obligation is generated, or an obligation for $\psi_1$ is generated, and the original obligation is removed. When the time elapsed is within the interval, transitions generate additional obligations for $\psi_2$, or obligations for $\psi_1$ and $\psi_2$ and remove the original obligation. Using the induction hypothesis, this means that there is an accepting run from $\Theta$ if (1) there is some position $i<j$ such that $w,j \sat \psi_{1}$ and $\tau_{j} - \tau_{i} < I$, or (2) there is some point $i < k'$ such that $w,k' \sat \psi_1$, $w,k' \sat \psi_2$, and $\tau_{k'} - \tau_i \in I$, or (3) for every point $i < k$ such that $\tau_{k} - \tau_i \in I$, $w,k \sat \psi_2$. To see why this is equivalent to saying $w,i \sat \psi$, we see that if there is some position $l$ with $\tau_l - \tau_i \in I$ such that $w,l \not\sat \psi_2$, its corresponding previous position $l'$ such that $w,l' \sat \psi_1$ either occurs before the interval $I$(checked in case 1), or during the interval $I$, where either $w,l' \sat \psi_2$ also, or we need to again check some previous position for this position that will satisfy $\psi_1$, meaning eventually, we will reach a position before the interval that satisfies $\psi_1$, or a position within the interval that satisfies both $\psi_1$ and $\psi_2$. Hence statement holds.
            
            \item if $\psi = \psi_1 \dual{\U} \psi_2$ and $I$ is an unbounded interval, $w,i \sat \psi$ if for all $j$ s.t. $i < j$ and $\tau_j - \tau_i \in I$, either $(w,j) \sat \psi_{2}$, or there is some $k$ with $i < k < j$ with $(w,k) \sat \psi_{1}$. Here, $\Theta$ will be $\{(\psi,0,\bot)\}$, and as before we need to ensure that the additional until obligations generated in the course of the run are eventually removed. The argument follows similarly to the previous case here, the only difference being that the obligation can only be removed due to a discrete transition rather than a time elapse. Now, in the run, when the time elapsed is less than the interval $I$, either no new obligation is generated, or an obligation for $\psi_1$ is generated, and the original obligation is removed. When the time elapsed is within the interval, transitions generate additional obligations for $\psi_2$, or obligations for $\psi_1$ and $\psi_2$ and remove the original obligation. Using the induction hypothesis, this means that there is an accepting run from $\Theta$ if (1) there is some position $i<j$ such that $w,j \sat \psi_{1}$ and $\tau_{j} - \tau_{i} < I$, or (2) there is some point $i < k'$ such that $w,k' \sat \psi_1$, $w,k' \sat \psi_2$, and $\tau_{k'} - \tau_i \in I$, or (3) for every point $i < k$ such that $\tau_{k} - \tau_i \in I$, $w,k \sat \psi_2$. Hence, as before, the statement holds.
        \end{itemize} 
    \end{description}
\end{proof}

\begin{example}\label{ex:obligation-graph}
    First, we look at $\varphi = (\F_{[1,3]} a) \U_{[2,10]} b$ and the run in $\ob(\varphi)$ for a word $w$, where the $\id$ for each obligation-set is shown below the respective obligations. We observe that $(w,1) \sat \varphi$. One reason why the formula holds at position $1$ is that there is  $b$ at position $5$ of $w$ (hence $(w, 5) \sat \varphi_2$) such that the time elapsed between positions $1$ and $5$ falls in the interval $[2, 10]$, and for each position $(w, i)$ for $i \in \{2, 3, 4\}$, we have $(w, i) \models \F_{[1,3]} a$. Notice that, there are multiple witnesses for $(w, 2) \models \varphi_1$: $(a, 1.5)$, $(a, 2.4)$ and $(a, 2.9)$. One such accepting run is depicted in the figure. Here, on reading $(b,0)$, we make a guess that $\varphi_2$ is satisfied $2$ time units from now, thus generating the obligation $(\varphi, 0, 2)$. After a time elapse of $0.3$ an $a$ is seen. The obligation $(\varphi, 0, 2)$ now has evolved to $(\varphi, 0.3, 1.7)$. In order to satisfy $\varphi_1$ at this position, we generate a new obligation: we predict that $a$ will hold $2.1$ time units from now, and thus add $(\varphi_1,0,2.1)$ to the current pool of obligations. In the next transition: a $1.2$ time elapse after which an $a$ is seen, the existing obligations evolve to $(\varphi,1.5,0.5)$ and $(\varphi_1,1.2,0.9)$ respectively, and because we need to satisfy $\varphi_1$ at this new position, a new obligation $(\varphi_1,0,1.4)$ is generated and added to the pool. The run progresses this way, until we read position $5$ (i.e. the fifth transition in the figure), where on time elapse of $0.2$, the first generated obligation would have evolved to $(\varphi,2,0)$, and we will release this obligation while generating an obligation for $\varphi_2$, which is $\{\}$ in this case. The various $\varphi_1$ obligations generated along the run are released in a similar way, on reading an $a$ after the waiting time has evolved to $0$.
Another possible run for this word could have been guessing the $(b, 7)$ to be the witness for $\varphi_2$ and generating an obligation $(\varphi, 0, 7)$ on the first transition. Here, when we would have read $(a, 2.9)$, we still would not have seen this $b$ yet, and so would need to make $\F_{[1,3]} a$ true. Hence, we would have had to make some guess for an $a$ to occur between $3.9$ and $5.9$ time units. For this particular word $w$, this would lead to a deadlock: after $(a, 3.5)$, it would not be possible to read $(b, 7)$ since this time elapse would have been more than the guess that we have made. 

Looking at the second example, an MITL formula $\psi = (b~\dual{\U}_{[2,3]}a)~\U_{[2,6]}a$ and its run for the word $w'$, we see that $w',1 \sat \psi$. One reason for this is that there is an $a$ at position $4$ of $w'$ and the time distance between the positions falls in $I = [2,5]$, and $w,i \sat \varphi_1$ for positions $i \in \{2,3\}$. The latter is because for both positions, the events between time elapse $[2,3]$ satisfy $a$. This is the accepting run depicted in the figure (right). The outermost transition is still an until, so the way the obligations are generated is similar to the first example. The difference is the dual until obligations are also generated (in transitions two and three). Notice that both the created obligations  remain in the following transitions until enough time elapses making the waiting time go below $0$, at which time they get removed (in transitions six and seven respectively). 
\end{example}

\section{Appendix for Section \ref{sec:removing-redundant-obligations}}\label{sec:append-red}
\subsection{Reduced Obligation Graph}

Consider an obligation graph $\ob(\varphi)$ for an MTL formula $\varphi$. 
Given a node $(\Theta, \id)$ we define a function $\optimize(\Theta, \id)$ which returns a new node $(\Thetaopt, \idopt)$ with potentially fewer obligations than $\Theta$, and a possibly changed naming function. For each subformula $\psi$ of $\varphi$ containing an $\U_I$ or a $\dual{\U}_I$ operator, assume there are $\ell_\psi \ge 0$ obligations of type $\psi$ in $\Theta$. The number $\ell_\psi$ can well be $0$. When $\ell_\psi > 0$, let the corresponding obligations be called: $\theta^\psi_i = (\psi, x^\psi_i, t^\psi_i)$ for $1 \le i \le \ell_{\psi}$. Since $\Theta$ is well-formed, we can assume that the ages are distinct, and furthermore, for convenience, assume: $x^\psi_{\ell_\psi} < x^{\psi}_{\ell_\psi - 1} < \cdots < x^\psi_2 < x^\psi_1$: the obligations were produced in increasing order of their indices, and hence the one with the largest index has the smallest age. The reduction rules depend on the type of $\psi$. We make use of this notation: for an obligation-set $\Theta$ and a formula $\psi$, we write $\Theta^\psi$ for the set of obligations of type $\psi$ present in $\Theta$. 

\begin{definition}[Until with bounded interval]\label{def:until_bdd_opt}
Let $\psi = \varphi_1 \U_I \varphi_2$ where $I$ is a bounded interval with $l, u$ being the left and right end-points respectively. When $x^\psi_{\ell_\psi} = 0$, apply the following reduction rules. 
\begin{description}
\item[\removenew] If $t^\psi_{\ell_\psi-1} \in I$, then remove $\theta_{\ell_\psi}$, i.e., $\Thetaopt^\psi = \Theta^\psi \setminus \theta_{\ell_\psi}$, and $\idopt^\psi = \id^\psi$. 
\item[\modifyold] Else, if $x^\psi_{\ell_\psi-1} + t^\psi_{\ell_\psi} \in I$, then $\Thetaopt^\psi = \Theta^\psi \setminus \{\theta_{\ell_\psi - 1},\theta_{\ell_\psi}\} \cup \theta$ for $\theta = (\psi,x^\psi_{\ell_\psi - 1},t^\psi_{\ell_\psi})$; in this case, define $\idopt(\theta) = \id(\theta_{\ell_{\psi} - 1})$, and $\idopt(\theta_m) = \id(\theta_m)$ for all $1 \le m \le \ell_{\psi - 2}$. 
\item[\keepboth]If $t^\psi_{\ell_{\psi}-1} \not\in I$ and $x^\psi_{\ell_{\psi}-1} + t^\psi_{\ell_{\psi}} \not\in I$, then $\Thetaopt^{\psi} = \Theta^\psi$ and $\idopt^\psi = \id^\psi$.
\end{description}
\end{definition}

\begin{figure}
    \centering
    \scalebox{0.9}{
    \begin{tikzpicture}
    \begin{scope}[xshift=1.5cm,scale=0.8]
        \node (0) at (0, 3.5) {\scriptsize $\{\}$};
        \node (1) at (0, 2.5) {\scriptsize $\{(\varphi,0,2)\}$};        
        \node (2) at (0, 1.5) {\scriptsize $\{(\varphi,0.3,1.7),(\varphi_1,0,2.1)\}$};        
        \node (3) at (0, 0.5) {\scriptsize $\{ (\varphi, 1.5, 0.5),\textcolor{gray}{(\varphi_1, 1.2, 0.9)}, (\varphi_1, 0, 1.4)\}$}; 
        \node (4) at (0, -0.5) {\scriptsize $\{(\varphi, 1.8, 0.2), \textcolor{gray}{(\varphi_1, 1.5, 0.6)}, (\varphi_1, 0.3, 1.1), \textcolor{gray}{(\varphi_1, 0, 1.7)} \}$};
        \node (5) at (0, -1.5) {\scriptsize $\{\textcolor{gray}{(\varphi_1, 1.7, 0.4)}, (\varphi_1, 0.5, 0.9), \textcolor{gray}{(\varphi_1, 0.2, 1.5)} \}$};
        \node (6) at (0, -2.5) {\scriptsize $\{{(\varphi_1, 0.9, 0.5)}, \textcolor{gray}{(\varphi_1, 0.6, 1.1)}\}$};
        \node (7) at (0, -3.5) {\scriptsize $\{ \textcolor{gray}{(\varphi_1, 1.1, 0.6)}\}$};
        \node (8) at (0, -4.5) {\scriptsize $\{\}$};
    \end{scope}
    \begin{scope}[->, >=stealth]
        \draw (0) to node [right] {\tiny $0, b$} (1);
        \draw (1) to node [right] {\tiny $0.3, a$} (2);
        \draw (2) to node [right] {\tiny $1.2, a$} (3);
        \draw (3) to node [right] {\tiny $0.3, b$} (4);
        \draw (4) to node [right] {\tiny $0.2, b$} (5);
        \draw (5) to node [right] {\tiny $0.4, a$} (6);
        \draw (6) to node [right] {\tiny $0.5, a$} (7);
        \draw (7) to node [right] {\tiny $0.6, a$} (8);
    \end{scope}
    \begin{scope}[xshift=9cm,scale=0.8]
        \node (0) at (0, 3.5) {\scriptsize $\{\}$};
        \node (1) at (0, 2.5) {\scriptsize $\{(\varphi,0,2)\}$};        
        \node (2) at (0, 1.5) {\scriptsize $\{(\varphi,0.3,1.7),(\varphi_1,0,2.1)\}$};        
        \node (3) at (0, 0.5) {\scriptsize $\{ (\varphi, 1.5, 0.5), (\varphi_1, 1.2, 1.4)\}$}; 
        \node (4) at (0, -0.5) {\scriptsize $\{(\varphi, 1.8, 0.2), (\varphi_1, 1.5, 1.1)\}$};
        \node (5) at (0, -1.5) {\scriptsize $\{ (\varphi_1, 1.7, 0.9)\}$};
        \node (6) at (0, -2.5) {\scriptsize $\{{(\varphi_1, 2.1, 0.5)}\}$};
        \node (7) at (0, -3.5) {\scriptsize $\{\}$};
        \node (8) at (0, -4.5) {\scriptsize $\{\}$};
    \end{scope}
    \begin{scope}[->, >=stealth]
        \draw (0) to node [right] {\tiny $0, b$} (1);
        \draw (1) to node [right] {\tiny $0.3, a$} (2);
        \draw (2) to node [right] {\tiny $1.2, a$} (3);
        \draw (3) to node [right] {\tiny $0.3, b$} (4);
        \draw (4) to node [right] {\tiny $0.2, b$} (5);
        \draw (5) to node [right] {\tiny $0.4, a$} (6);
        \draw (6) to node [right] {\tiny $0.5, a$} (7);
        \draw (7) to node [right] {\tiny $0.6, a$} (8);
    \end{scope}

    \end{tikzpicture}
    }
    \caption{Eliminating obligations for bounded until}
    \label{fig:oblred-until}
\end{figure}

\begin{example}
    The Figure \ref{fig:oblred-until} depicts the eliminate rules being applied to bounded until formulas. The formula $\varphi$ and the word $w$ are same as the example in Figure \ref{fig:obligation-runs}. The right shows the run for the word on $\ob(\varphi)$ as before, with the obligations that will be eliminated in gray. The first such obligation-set when an elimination rule can be applied is the one reached after the third transition. Here, there is a newly generated obligation for $\varphi_1$ along with an existing old one for the same formula. We look at the values and see that we can apply the \modifyold\ case. In the next transition, we again have a newly generated obligation for $\varphi_1$, and we see that the \removenew\ case is applicable and hence the newly generated obligation can be removed. The left shows the final run in $\obred(\varphi)$ for this word.
\end{example}

\begin{definition}[Until with unbounded interval]\label{def:until_ubdd_opt}
Let $\psi = \varphi_1 \U_I \varphi_2$ with $I$ an unbounded interval with left endpoint $b$. Keep only the earliest and the latest witnesses: in other words, $\Thetaopt^\psi = \{ \theta_1, \theta_{\ell_\psi}\}$; $\idopt(\theta_1) = 1$, $\idopt(\theta_{\ell_\psi}) = 2$. 
\end{definition}

\begin{definition}[Dual until]\label{def:dual_until_opt}
Let $\psi = \varphi_1 \dual{\U}_I \varphi_2$. When $I$ is bounded, $x^\psi_{\ell_\psi} = 0$, and $\ell_\psi \ge 2$, apply the following rules to merge the last two obligations. 
\begin{description}
\item[\mmerge] if $t^\psi_{\ell_\psi-1} \in I$, then $\Thetaopt^\psi = \Theta^\psi \setminus \{\theta_{\ell_\psi - 1}, \theta_{\ell_\psi}\} \cup \{(\psi, x^\psi_{\ell_{\psi-1}}, t^\psi_{\ell_\psi})\}$. 
\item[\keepboth] if $t^\psi_{\ell_\psi-1} < I$, then $\Thetaopt^\psi = \Theta^\psi$. 
\end{description}
When $I$ is unbounded, simply keep the earliest obligation: $\Thetaopt^\psi = \{ \theta_1\}$. 
\end{definition}

\subsection{Proof of Theorem \ref{thm:oblredcorr}}
We will first prove that if $w = (a_1,\tau_1) (a_2,\tau_2) \dots$ has an accepting run in $\obred(\varphi)$, then it has an accepting run in $\ob(\varphi)$. 

Consider an accepting run for $w$ in $\obred(\varphi)$ of the following form, with $\delta_i = \tau_i - \tau_{i-1}$ (taking $\tau_0 = 0$):
\begin{align*}
\rho:= \init_\varphi \xRightarrow{(\delta_1, a_1)} (\Theta_1, \id_1) \xRightarrow{\delta_2, a_2} (\Theta_2, \id_2) \cdots
\end{align*}
Firstly, we know that for every transition $(\Theta_{j-1}, \id_{j-1}) \xRightarrow{\delta_j, a_j} (\Theta_j, \id_j)$ of $\rho$ there is a transition $(\Theta_{j-1}, \id_{j-1}) \xra{\delta_j, a_j} (\overline{\Theta}_{j}, \overline{\id}_j)$ in $\ob(\varphi)$ that was optimized to get the resulting node in the reduced graph.  We call this transition:

\begin{align}\label{eq:original-transition}
e_j: (\Theta_{j-1}, \id_{j-1}) \xra{\delta_j, a_j} (\overline{\Theta}_{j}, \overline{\id}_j) 
\end{align}

Another thing to observe is as follows:  
\begin{lemma}\label{lem:single-witness-in-interval}
For a subformula $\psi$ of $\varphi$ of the form $\psi_1 \U_I \psi_2$ where $I$ is bounded, and some node node $(\Theta_j, \id_j)$ in the run $\rho$, there is at most one obligation $(\psi, x, t) \in \Theta_j$ such that $t \in I$.
\end{lemma}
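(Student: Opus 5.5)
We argue by induction along the run $\rho$: every node $(\Theta_j,\id_j)$ of $\rho$ contains at most one obligation of type $\psi$ whose waiting time lies in $I$. The base case is $\init_\varphi$, which has no obligations. For the node reached by the first transition, well-formedness applies: all obligations of $\Theta_1$ have age $0$, so there is at most one of type $\psi$. For the inductive step, assume $\Theta_{j-1}$ has at most one $\psi$-obligation with waiting time in $I$. We then follow the transition $e_j$ of \eqref{eq:original-transition} and the subsequent application of $\optimize$, and track how the set $S_j=\{\theta\in\Theta_j^\psi \mid t_\theta\in I\}$ can change.

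First consider obligations inherited from $\Theta_{j-1}$, i.e., those in $(\Theta_{j-1}+\delta_j)\cap\overline{\Theta}_j$. Each one only has its waiting time decreased by $\delta_j$. Since $I$ is an interval with upper bound $u$, a waiting time that was in $I$ either stays in $I$ or drops below $I$. A waiting time that was below $I$ can never re-enter $I$, because it can only decrease. So among the inherited obligations, at most one still has $t\in I$. The only other way to obtain a $\psi$-obligation with $t\in I$ is a freshly created one of age $0$. By the definition of minimal extensions (Lemma~\ref{lem:minimal-extension-guarantess-well-formedness}), there is at most one such fresh obligation, and its waiting time $t_{\ell_\psi}$ lies in $I$. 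The remaining task is to show that $\optimize$ eliminates the possible second element of $S_j$. We do this by case analysis on the rule of Definition~\ref{def:until_bdd_opt} that is applied to the two youngest obligations $\theta_{\ell_\psi-1},\theta_{\ell_\psi}$.

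In case \removenew, the fresh obligation is deleted and the claim follows immediately. In case \keepboth, we have $t_{\ell_\psi-1}\notin I$. Together with the observation that $t_{\ell_\psi-1}\le u$ always holds (waiting times start in $I$ and only decrease), this gives $t_{\ell_\psi-1}<I$. The crux is then to show that every older obligation $\theta_m$ with $m<\ell_\psi-1$ also has $t_m<I$. For this I would strengthen the invariant: the waiting times of $\psi$-obligations in $\Theta_j$ are non-decreasing in the index, i.e. $t_1\le t_2\le\dots\le t_{\ell_\psi}$. Equivalently, older obligations point to earlier or equal witnesses. Given this monotonicity, any $\theta_m$ with $t_m\in I$ and $m<\ell_\psi-1$ would force $t_{\ell_\psi-1}\ge t_m\ge l$. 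Since $t_{\ell_\psi-1}\le u$ as well, this would put $t_{\ell_\psi-1}$ in $I$, at least for closed endpoints, contradicting \keepboth. In case \modifyold, the two youngest obligations are replaced by one with waiting time $t_{\ell_\psi}$, and again all older ones have waiting time $\le t_{\ell_\psi-1}<I$ by monotonicity. So at most one element remains in $S_j$.

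The main obstacle is therefore establishing and maintaining the monotonicity invariant. Delays preserve it, since all waiting times shift by the same $\delta_j$. Discharges preserve it, since they only remove obligations. The fresh obligation needs checking. Under \keepboth we have $t_{\ell_\psi}\ge l>t_{\ell_\psi-1}$, so the order is preserved. Under \modifyold the replacement waiting time $t_{\ell_\psi}$ is at least the older values. Open or half-open endpoints of $I$ must be handled carefully in the step ``$t\notin I$ and $t\le u$ implies $t<I$''. This holds whenever a waiting time never exceeds $u$, which follows because initial guesses lie in $I$.
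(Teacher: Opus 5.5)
Your argument is correct, but it is organised differently from the paper's. The paper argues directly by contradiction. Given $\theta=(\psi,x,t)$ and $\theta'=(\psi,x',t')$ in $\Theta_j$ with $t,t'\in I$ and $x>x'$, it goes back to the position where $\theta'$ was created and observes that neither \removenew\ nor \modifyold\ fired there, so $t+x'\notin I$. Since $t+x'\ge t\in I$, this value lies above $I$, hence so does $x+t$, contradicting that $\theta$ was created with waiting time $x+t\in I$.

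You instead run a forward induction along $\rho$ with a strengthened invariant: waiting times are non-decreasing from the oldest to the youngest $\psi$-obligation. The obligations with $t\in I$ then always form a suffix, and you only ever need to look at the pair consisting of the youngest old obligation and the fresh one. That is exactly the pair the rules of Definition~\ref{def:until_bdd_opt} inspect.

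The paper's route is shorter and purely local, but it rests on two unstated assumptions. It assumes that $\theta$ is the obligation the rules compared with $\theta'$ at its creation, whereas the rules only compare the fresh obligation with the youngest existing one. It also assumes that $\theta$'s waiting time evolved only by delays since its creation; a \modifyold\ step may have reset it, although the invariant $x+t\in I$ survives this. Your monotonicity invariant makes both points explicit, at the cost of one extra induction.

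One small correction to your write-up: what you need is that a waiting time is never \emph{above} $I$, not merely $t\le u$; the two differ when $I$ is right-open. This holds because every waiting time is set to a value in $I$ and afterwards only decreases. The value is set either at creation or by \modifyold, which can raise a waiting time, so ``only decreases'' should be stated from the last such assignment.
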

\begin{proof}
    We will prove this by contradiction, assuming there are two obligations $\theta = (\psi,x,t)$ and $\theta' = (\psi,x',t')$ such that $\theta,\theta' \in \Theta_j$ and $t,t' \in I$. We can assume $x > x'$ here without loss of generality. Now, as $\theta,\theta' \in \Theta_j$, it means there is some $\Theta_k$ in $\rho$ such that $k < j$, $\tau_j - \tau_k = x'$ and there are obligations $\theta_k = (\psi, x - x', t + x')$ and $\theta'_k = (\psi,0,t'+x')$ in $\Theta_k$. Because both obligations were kept in $\Theta_k$, it means that \removenew\ and \modifyold\ were not applicable, hence $t + x' \not\in I$ and $t' + x' + x - x' \not \in I$. As $x' \leq x$ and $t \in I$, this means that $x + t > I$. Now, going further back to position $l$ such that $l < k$, $\tau_k - \tau_l = x - x'$, we know there will be an obligation $\theta_l = (\psi,0,t+x'+x-x') = (\psi,0,t+x)$ present in $\Theta_l$. This means $x+t \in I$, which is a contradiction.
\end{proof}
We can now define the notion of a guess being a good witness for a bounded until subformula and some position, according to the run $\rho$:
\begin{description} 
\item[Good witnesses for bounded Until.] From Lemma~\ref{lem:single-witness-in-interval}, there is at most one obligation $(\psi, x, t) \in \Theta_j$ such that $t \in I$. Let its identifier be $\psi \cdot k$, i.e., $\id_j(\psi, x, t) = \psi \cdot k$. Since $\rho$ is an accepting run, there is a least index $m > j$ s.t. $k \notin \range(\psi, \id_m)$. Define $\tau^\psi_j = \tau_m - \tau_j$. 
\end{description}

\begin{figure}
    \centering
    \begin{tikzpicture}
        \node (lgj) at (1,6.5) {\scriptsize $\Theta_j$:};
        \node (lgi1) at (1,5) {\scriptsize $\Theta_{i_1}$:};
        \node (lgik) at (1,2.5) {\scriptsize $\Theta_{i_k}$:};
        \node (lgm) at (1,1) {\scriptsize $\Theta_m$:};

        \node (gj) at (3.5,6.5) {$(\psi,x,t)$};
        \node (gi11) at (3.5,5) [gray] {$(\psi,x',t')$};
        \node (gi12) at (5.5,5) {$(\psi,x',t_{i_1})$};
        \node (gik1) at (3,2.5) [gray] {$(\psi,x'_{i_{k-1}},t'_{i_{k-1}})$};
        \node (gik2) at (5.5,2.5) {$(\psi,x'_{i_{k-1}},t_{i_k})$};
        \node (gm) at (5.5,1) {};
        \node (d) at (4.5,3.75) {$\dots$};

        \begin{scope}[->,dashed,>=stealth]
            \draw (gj) to (gi11);
            \draw (gik2) to (gm);
        \end{scope}
        \draw[<->,red] (7,6.5) to node [red,right] {$\tau^\psi_j$} (7,1);
    \end{tikzpicture}
    \caption{Tracking obligations with the same id in run $\rho$ of $\obred(\varphi)$}\label{fig:good-witness-in-interval}
\end{figure}
\begin{lemma}
    For some subformula $\psi$ and a position $j$ in the run $\rho$, $\tau^\psi_j \in I$.
\end{lemma}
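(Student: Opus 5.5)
The plan is to follow the single identifier $\psi\cdot k$ forward from node $j$ to the first node $m$ where it disappears, as in Figure~\ref{fig:good-witness-in-interval}, and to show the invariant that along this stretch the obligation carrying $\psi\cdot k$ always has waiting time $t''$ with $(\tau_{j'}-\tau_j)+t'' \in I$ at every node $j'$. Here $j \le j' < m$. At $j'=j$ this is just $t\in I$, where $(\psi,x,t)$ is the unique obligation given by Lemma~\ref{lem:single-witness-in-interval}. Since a bounded Until obligation can only leave the set through the discrete rule with waiting time $0$ (time elapse beyond $t$ is blocked, and \removenew\ only deletes fresh obligations of age $0$, never an older identifier), at $m$ the waiting time just before discharge is $0$. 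The invariant then gives $\tau_m-\tau_j \in I$, i.e.\ $\tau^\psi_j\in I$.

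To prove the invariant I would do an induction over the transitions $e_{j'+1}$ followed by $\optimize$. A delay of $\delta$ adds $\delta$ to $\tau_{j'}-\tau_j$ and subtracts it from the waiting time, so the sum is unchanged. The discrete step that keeps the obligation leaves it unchanged. Among the optimization rules, \removenew\ and \keepboth\ leave the obligation with identifier $\psi\cdot k$ untouched, so only \modifyold\ matters. This is the case where our obligation is the older one $\theta_{\ell_\psi-1}=(\psi,x_1,t_1)$, and it gets waiting time $t_2$ from a fresh $(\psi,0,t_2)$. \modifyold\ guarantees $x_1+t_2\in I$ and $t_2\in I$. I must show $(\tau_{j'}-\tau_j)+t_2\in I$. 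The elapsed time $\tau_{j'}-\tau_j$ lies in $[0,x_1]$, because the obligation at $j$ already had age $x$ with $x \le x_1$ (age never decreases under identity-preserving updates, and \modifyold\ keeps the older age). Since $I$ is an interval and both endpoints $0+t_2$ and $x_1+t_2$ lie in it, convexity gives the claim. This is exactly the ``sandwich'' argument sketched informally in the text with $i_1,i_2,i_3$.

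The subtle point I expect to be the main obstacle is the bookkeeping of identities. First, \modifyold\ might be applied with our obligation as $\theta_{\ell_\psi-1}$ even though the fresh obligation's identifier is reused. Second, identifiers might be recycled, so that after $\psi\cdot k$ disappears a new obligation reuses $k$. The latter is harmless because $m$ is defined as the first index where $k$ is absent. For the former, I need that \modifyold\ always assigns the old identifier to the modified obligation, which follows from the definition of $\idopt$. I also need that ages along the identifier's trajectory equal $x+(\tau_{j'}-\tau_j)$, which holds since neither rule changes the age of the surviving older obligation.

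Finally, the case where $\psi\cdot k$ is not at a $t\in I$ obligation does not arise by the definition of good witnesses. For an unbounded $I$, the same argument works with ages and $x\in I$ at discharge instead of waiting times.
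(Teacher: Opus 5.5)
Your proposal is correct and follows essentially the same route as the paper. You track the identifier $\psi\cdot k$ from $j$ until it disappears at $m$, observe that only \modifyold\ can change its waiting time, and at each such step use convexity of $I$ via $t_2 \le (\tau_{j'}-\tau_j)+t_2 \le x_1+t_2$; you then conclude because discharge requires waiting time $0$. Your invariant formulation only makes explicit what the paper leaves implicit: the delay and keep cases, and the fact that the identifier can vanish only through discharge. Your closing remark about unbounded $I$ is unnecessary, since $\tau^\psi_j$ is only defined for bounded Until subformulas.
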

\begin{proof}
    Firstly, if in the run $\rho$, $(\psi,x,t)$ was not removed due to any optimization rules, then $\tau^\psi_j$ will simply be $t$, and so $\tau^\psi_j \in I$. If not, let $i_{1},\dots,i_{k}$ be positions such that some optimization rule was applied to replace the current obligation with id $\psi \cdot k$. This is shown in Figure \ref{fig:good-witness-in-interval}, where the obligations removed are in gray. We will prove that at each such position $\ell$, if $(\psi,x'_{i_{\ell-1}},t'_{i_{\ell-1}})$ was replaced by $(\psi,x_{i_\ell},t_{i_\ell})$, then $\tau_{i_\ell} - \tau_j + t_{i_\ell} \in I$. This is enough because in the final position $i_k$, from the statement above, we have $\tau_{i_k} - \tau_j + t_{i_k} \in I$, and as $\rho$ is an accepting run, we know that $(\psi,x_{i_k},t_{i_k})$ will be eliminated for a position appearing after $t_{i_k}$ time elapse, meaning the defined $\tau^\psi_j \in I$ too. Now, to prove the required statement for a position $\ell$, we see that the only optimization rule that would have removed $(\psi,x'_{\ell-1},t'_{\ell-1})$ in the run would be \modifyold\, meaning a new obligation $(\psi,x_{\ell},t_{\ell})$ was generated with $t_{\ell} \in I$ and $x'_{\ell-1} + t_{\ell} \in I$, due to which $(\psi,x'_{\ell-1},t_{\ell})$ was added to $\Theta_{\ell}$. Because $x'_{\ell-1} = x + (\tau_\ell - \tau_j)$, $x'_{\ell-1} + t_\ell \in I$, and $t_\ell \in I$, we see that $(\tau_\ell - \tau_j) + t_\ell \in I$ too.
\end{proof}

Now, using $\rho$ and the observations above, we will construct $(\Theta'_i, \id'_i)$ for all $i \ge 1$ and claim that the following is an accepting run of $\ob(\varphi)$:
\begin{align*}
\rho':= \init_\varphi \xra{(\delta_1, a_1)} (\Theta'_1, \id'_1) \xra{\delta_2, a_2} (\Theta'_2, \id'_2) \cdots
\end{align*}
We will construct this run such that for every node $(\Theta'_i, \id'_i)$ and subformula $\psi$ where $\psi$ is a bounded until subformula, if $(\psi,0,t) \in \Theta'_i$, then $t = \tau^\psi_i$. 
In addition, for each $i \ge 1$, we will define a mapping $h_{i}: \Theta'_{i} \mapsto \Theta_{i}$ such that for every $\theta' \in \Theta'_i = (\psi',x',t')$, if $h_i(\theta') = (\psi,x,t)$, then $\psi=\psi'$. Intuitively, $h_i(\theta')$ gives the obligation $\theta$ in the reduced graph such that: if $\theta$ is satisfied, then $\theta'$ will also be satisfied. 
As we inductively build the run $\rho'$, the mapping $h_i$ will  help us define the required successor $(\Theta'_{i+1}, \id'_{i+1})$.   

\subparagraph*{Initial construction.} Firstly, we see that both $\rho$ and $\rho'$ will have $\init_{\varphi}$ as the initial node. Also, because initial obligation-sets for a subformula have at most one obligation of that subformula, and $\overline{\Theta}_{1}$ is well-formed, $\optimize(\overline{\Theta}_1,\overline{\id}_1) = (\overline{\Theta}_1,\overline{\id}_1)$. Thus, we can define $(\Theta'_1,\id'_1) = (\overline{\Theta}_1,\overline{\id}_1)$ itself, with $h_1$ being the identity function. 

Now, assuming we have constructed $(\Theta'_i, \id'_i)$ and we have the function $h_i$, satisfying the required properties, we will see how to define $(\Theta'_{i+1},\id'_{i+1})$ and $h_{i+1}$. 
\subparagraph*{Delay transition.} We need to first prove that the delay $\delta_{i+1}$ is possible from $(\Theta'_i,\id'_i)$. To do so, we see that for every bounded until $\psi$ type obligation present in $\Theta'_i$, at the time it was created, its waiting time was set as $\tau^\psi_j$ for some $j \leq i$, where $\tau^\psi_j = \delta_{j+1} + \dots + \delta_k$, $k > j$. This means that if such an obligation is present in $\Theta'_i$, its waiting time is either exactly $\delta_{i+1}$, or is greater than $\delta_{i+1}$. This means the node $(\Theta'_i,\id'_i) + \delta_{i+1}$ is defined.  

\subparagraph*{Constructing $(\Theta'_{i+1},\id'_{i+1})$.}
We will use the transition in Equation $\eqref{eq:original-transition}$ and $h_{i}$ to decide the transitions picked for the obligations of $\Theta'_{i} + \delta_{i+1}$. 
Looking at some $\theta' \in \Theta'_i$, where $h_{i}(\theta') = \theta$ for $\theta = (\psi,x,t)$, we will look at the $(\delta_{i+1},a_{i+1})$-extension of $\theta$ in transition $e_{i+1}\eqref{eq:original-transition}$. 
\begin{itemize}
    \item If $\theta+\delta_{i+1}$ is present in $\Theta''_{i+1}$, we will add $\theta'+\delta_{i+1}$ to $\Theta'_{i+1}$ and define $\id'_{i+1}(\theta'+\delta_{i+1}) = \id'_{i}(\theta')$.
    \item If a 'new' obligation of type $\psi'$, i.e. of the form $(\psi',0,t')$ is present in $\overline{\Theta}_{i+1}$:
    \begin{itemize}
        \item If $\psi'$ is a bounded until subformula, we will add $(\psi',0,t'')$ to $\Theta'_{i+1}$, where $t'' = \tau^{\psi'}_{i+1}$. We will set $\id'_{i+1}$ of this new obligation by picking the next available id for type $\psi'$.
        \item If $\psi'$ is not a bounded until subformula, we will simply add $(\psi',0,t')$ to $\Theta'_{i+1}$ and set its id to the next available one of type $\psi'$. 
    \end{itemize}
\end{itemize}
Notice that by our construction, as we add only one new obligation for each subformula type, we are getting a successor $\Theta'_{i+1}$ that is well-formed provided $\Theta'_i$ was well-formed. 
\subparagraph*{Defining $h_{i+1}$.} 
Given some $\theta' \in \Theta'_{i+1}$, we define $h_{i+1}(\theta')$ as follows:
\begin{itemize}
    \item If $\theta'$ is also present in $\Theta_{i+1}$, we just define $h_{i+1}(\theta') = \theta'$.
    \item If $\theta'$ is not present in $\Theta_{i+1}$, we look at the type of obligation $\theta'$, say $\psi'$:
    \begin{itemize}
        \item If $\psi'$ is a bounded until formula, we know from Lemma \ref{lem:single-witness-in-interval} that there will be a simple $\theta \in \Theta_{i+1}$ such that the waiting time of $\theta$ will be in the interval of $\psi'$. We will define $h_{i+1}(\theta') = \theta$ in this case.
        \item If $\psi'$ is an unbounded until subformula or a dual until subformula, we look at $\theta = (\psi',x,t) \in \Theta_{i+1}$ such that $x$ has the least value among other obligations of type $\psi'$ and define $h_{i+1}(\theta') = \theta$.
    \end{itemize}
\end{itemize}
Now, the constructed run $\rho'$ is an accepting run. This is because we define $h_i$ for each position $i$ which ensures that every obligation of type $\psi$ in $(\Theta'_i,\id'_i)$ is mapped to a formula of type $\psi$ in $(\Theta_i,\id_i)$. Also, the $id$ functions are defined so that if $\id'_i(\theta'_1) = \id'_{i+1}(\theta'_2)$ in $\rho'$ and $h_i(\theta'_1) = \theta_1$, $h_{i+1}(\theta'_2) = \theta_2$, then $\id_i(\theta_1) = \id_{i+1}(\theta_2)$. Finally, we know that $\rho$ is an accepting run, meaning every until obligation with a certain id has some node later in the run without that id. Hence $\rho'$ is also an accepting run. 

\begin{figure}
    \centering
    \scalebox{0.9}{
    \begin{tikzpicture}
    \begin{scope}[scale=0.8]
        \begin{scope}[rounded corners]
            \fill[red!30] (0.4,2.75) rectangle (1.6,2.25);
            \fill[blue!30] (1.2,1.25) rectangle (2.7,0.75);
        \end{scope}
        \draw[red,->] (-2,2.5) -- (-2,-3);
        \draw[red] (-2,2.5) to node[above] {\small $\tau^\varphi_1 = 2$} (-1.5,2.5);
        \draw[blue,->] (4,1) to (4,-5.5);
        \draw[blue,-] (3.5,1) -- (4,1); 
        \draw[blue,-] (3.5,-0.25) -- (4,-0.25); 
        \draw[blue,-] (3.5,-1.75) -- (4,-1.75); 
        \node[blue] (t1) at (5,1) {\small $\tau^{\varphi_1}_2 = 2.6$};
        \node[blue] (t1) at (5,-0.25) {\small $\tau^{\varphi_1}_3 = 1.4$};
        \node[blue] (t1) at (5,-1.75) {\small $\tau^{\varphi_1}_4 = 1.1$};
        
        \node (0) at (1, 3.5) {\scriptsize $\{\}$};
        \node (1) at (1, 2.5) {\scriptsize $\{(\varphi,0,2)\}$};        
        \node (2) at (1, 1) {\scriptsize $\{(\varphi,0.3,1.7),(\varphi_1,0,2.1)\}$};        
        \node (3) at (1, -0.25) {\scriptsize $\{ (\varphi, 1.5, 0.5), (\varphi_1, 1.2, 1.4)\}$}; 
        \node (4) at (1, -1.75) {\scriptsize $\{(\varphi, 1.8, 0.2), (\varphi_1, 1.5, 1.1)\}$};
        \node (5) at (1, -3) {\scriptsize $\{(\varphi_1, 1.7, 0.9)\}$};
        \node (6) at (1, -4.25) {\scriptsize $\{(\varphi_1, 2.1, 0.5)\}$};
        \node (7) at (1, -5.5) {\scriptsize $\{\}$};

        \node (l1) [purple] at (1,2.2) {\tiny $\varphi.1$};
        \node (l21) [purple] at (0,0.7) {\tiny $\varphi.1$}; 
        \node (l22) [purple] at (2,0.7) {\tiny $\varphi_1.1$};
        \node (l2) [] at (1,0.7) {};
        \node (l31) [purple] at (0,-0.55) {\tiny $\varphi.1$};
        \node (l32) [purple] at (2,-0.55) {\tiny $\varphi_1.1$};
        \node (l3) at (1,-0.55) {};
        \node (l41) [purple] at (0,-2.05) {\tiny $\varphi.1$};
        \node (l42) [purple] at (2,-2.05) {\tiny $\varphi_1.1$};
        \node (l4) [] at (1,-2.05) {};
        \node (l5) [purple] at (1,-3.3) {\tiny $\varphi_1.1$};
        \node (l6) [purple] at (1,-4.55) {\tiny $\varphi_1.2$};
    \end{scope}
    \begin{scope}[->, >=stealth]
        \draw (0) to node [right] {\tiny $0, b$} (1);
        \draw (l1) to node [right] {\tiny $0.3, a$} (2);
        \draw (l2) to node [right] {\tiny $1.2, a$} (3);
        \draw (l3) to node [right] {\tiny $0.3, b$} (4);
        \draw (l4) to node [right] {\tiny $0.2, b$} (5);
        \draw (l5) to node [right] {\tiny $0.4, a$} (6);
        \draw (l6) to node [right] {\tiny $0.5, a$} (7);
    \end{scope}
    \begin{scope}[xshift=9cm,scale=0.8]
        \node (0) at (0.5, 3.5) {\scriptsize $\{\}$};
        \node (1) at (0.5, 2.5) {\scriptsize $\{\textcolor{red!80}{(\varphi,0,2)}\}$};        
        \node (2) at (0.5, 1) {\scriptsize $\{(\varphi,0.3,1.7),\textcolor{blue}{(\varphi_1,0,2.6)}\}$};        
        \node (3) at (0.5, -0.25) {\scriptsize $\{ (\varphi, 1.5, 0.5), (\varphi_1, 1.2, 1.4), \textcolor{blue}{(\varphi_1, 0, 1.4)}\}$}; 
        \node (4) at (0.5, -1.75) {\scriptsize $\{(\varphi, 1.8, 0.2), (\varphi_1, 1.5, 1.1), (\varphi_1, 0.3, 1.1), \textcolor{blue}{(\varphi_1, 0, 1.1)} \}$};
        \node (5) at (0.5, -3) {\scriptsize $\{(\varphi_1, 1.7, 0.9), (\varphi_1, 0.5, 0.9), (\varphi_1, 0.2, 0.9) \}$};
        \node (6) at (0.5, -4.25) {\scriptsize $\{(\varphi_1, 0.9, 0.5), (\varphi_1, 0.6, 0.5)\}$};
        \node (7) at (0.5, -5.5) {\scriptsize $\{\}$};

        \node (l1) [purple] at (0.5,2.2) {\tiny $\varphi.1$};
        \node (l21) [purple] at (-0.5,0.7) {\tiny $\varphi.1$}; 
        \node (l22) [purple] at (1.5,0.7) {\tiny $\varphi_1.1$};
        \node (l2) [] at (0.5,0.7) {};
        \node (l31) [purple] at (-1.5,-0.55) {\tiny $\varphi.1$};
        \node (l32) [purple] at (0.5,-0.55) {\tiny $\varphi_1.1$};
        \node (l33) [purple] at (2.5,-0.55) {\tiny $\varphi_1.2$};
        \node (l41) [purple] at (-2.5,-2.05) {\tiny $\varphi.1$};
        \node (l42) [purple] at (-0.5,-2.05) {\tiny $\varphi_1.1$};
        \node (l43) [purple] at (1.5,-2.05) {\tiny $\varphi_1.2$};
        \node (l44) [purple] at (3.5,-2.05) {\tiny $\varphi_1.3$};
        \node (l4) [] at (0.5,-2.05) {};
        \node (l51) [purple] at (-1.6,-3.3) {\tiny $\varphi_1.1$};
        \node (l52) [purple] at (0.5,-3.3) {\tiny $\varphi_1.2$};
        \node (l53) [purple] at (2.5,-3.3) {\tiny $\varphi_1.3$};
        \node (l61) [purple] at (-0.5,-4.55) {\tiny $\varphi_1.2$};
        \node (l62) [purple] at (1.5,-4.55) {\tiny $\varphi_1.3$};
        \node (l6) [] at (0.5,-4.55) {};
    \end{scope}
    \begin{scope}[->, >=stealth]
        \draw (0) to node [right] {\tiny $0, b$} (1);
        \draw (l1) to node [right] {\tiny $0.3, a$} (2);
        \draw (l2) to node [right] {\tiny $1.2, a$} (3);
        \draw (l32) to node [right] {\tiny $0.3, b$} (4);
        \draw (l4) to node [right] {\tiny $0.2, b$} (5);
        \draw (l52) to node [right] {\tiny $0.4, a$} (6);
        \draw (l6) to node [right] {\tiny $0.5, a$} (7);
    \end{scope}
    \end{tikzpicture}
    }
    \caption{Getting a run in $\ob(\varphi)$ from a run in $\obred(\varphi)$}
    \label{fig:obred-until}
\end{figure}

\begin{example}
    Figure \ref{fig:obred-until} illustrates the construction of $\rho'$ for a given $\rho$. Let us look at a couple of the constructed obligation-sets for $\rho'$ to get a better picture of the construction. We omit the details of the id functions as they are straightforward. 
    Firstly, for $\Theta_2 = \{(\varphi,0.3,1.7),(\varphi_1,0,2.1)\}$, the constructed $\Theta'_2$ contains obligation $(\varphi,0.3,1.7)$ because the obligation $(\varphi,0,2) + 0.3$ is present in $\overline{\Theta}_2$ (and in $\Theta_2$), and the newly generated obligation $(\varphi_1,0,2.6)$ because $\tau^{\varphi_1}_2 = 2.6$ (illustrated in the Figure in blue). Also, the mapping $h_1$ is defined as $h_2((\varphi,0.3,1.7)) = (\varphi,0.3,1.7)$ and $h_2((\varphi_1,0,2.6)) = (\varphi_1,0,2.1)$. Looking further, at $\Theta_3 = \{(\varphi,1.5,0.5),(\varphi_1,0,1.4)\}$ and $\Theta_4 = \{(\varphi,1.8,0.2),(\varphi_1,0.3,1.1)\}$, firstly we note that one possible transition $e_4$ is $\Theta_3 \xra{0.3,b} \{(\varphi,1.8,0.2),(\varphi_1,0.3,1.1),(\varphi_1,0,1.7)\}$. 
    Now, looking at the constructed $\rho'$ upto $\Theta'_3$, the constructed $h_3$ here is $h_3((\varphi,1.5,0.5)) = (\varphi,1.5,0.5)$ and $h_3((\varphi_1,1.2,1.4)) = h_3((\varphi_1,0,1.4)) = (\varphi_1,0,1.4)$. Now, to get $\Theta'_4$, we see that firstly $(\varphi,1.8,0.2)$ will get added to $\Theta'_4$ as before due to $h_3((\varphi,1.5,0.5)) + 0.3 \in \overline{\Theta}_4$. Similarly, $(\varphi_1,1.5,1.1)$ and $(\varphi_1,0.3,1.1)$ will be added to $\Theta_4$ too. Coming to the newly generated obligation, because $\tau^{\varphi_1}_4 = 1.1$, the obligation $(\varphi_1,0,1.1)$ will be added to $\Theta'_4$.
\end{example}

We now prove the other direction, that if $w$ has an accepting run in $\ob(\varphi)$, it has an accepting run in $\obred(\varphi)$. 

Consider an accepting run for $w$ in $\ob(\varphi)$:
\begin{align*}
\pi':= \init_\varphi \xra{(\delta_1, a_1)} (\Theta'_1, \id'_1) \xra{\delta_2, a_2} (\Theta'_2, \id'_2) \cdots
\end{align*}
Similar to earlier, we will use this run to construct an accepting run $\pi'$ for $w$ on $\obred(\varphi)$ of the form:
\begin{align*}
\pi:= \init_\varphi \xRightarrow{(\delta_1, a_1)} (\Theta_1, \id_1) \xRightarrow{\delta_2, a_2} (\Theta_2, \id_2) \cdots
\end{align*}
We will also require a mapping $g_i$ from $\Theta_i$ to $\Theta'_i$ to guide the successor computation. As mentioned before, $\optimize(\Theta'_1,\id'_1) = (\Theta'_1,\id'_1)$, and so the run $\init_\varphi \xRightarrow{\delta_1,a_1} (\Theta'_1,\id'_1)$ will remain the same in $\pi'$. To get $(\Theta_{i+1},\id_{i+1})$ and $g_{i+1}$ from $(\Theta_i,\id_i)$ and $g_i$, we will simply apply the optimize function and use $g_i$ to decide successors: for every obligation $\theta \in \optimize(\Theta'_i)$, the next transition for $\theta$ will be the transition for $g_i(\theta)$ in $(\Theta'_i,\id'_i) \xra{\delta_{i+1},a_{i+1}} (\Theta'_{i+1},\id'_{i+1})$, and so we will get $\Theta_{i+1}$ and accordingly get $\id_{i+1}$. 
We define $g_{i+1}$ as follows, where $\theta = (\psi,x,t)$ is some obligation in $\optimize(\Theta'_i,\id'_i)$:
\begin{itemize}
    \item[-] If $\psi$ is an unbounded until subformula, we know that the optimization rules either eliminate formulas, and $\theta$ will be present in $\Theta'_{i+1}$ too, so we will define $g_{i+1}(\theta)$ as that obligation.
    \item[-] If $\psi$ is an until subformula or a dual until subformula, we know from the optimization rules that either $\theta$ will be present in $\Theta'_{i+1}$, or there will at least one of the two obligations $\theta_1 = (\psi,x,t')$ or $\theta_2 = (\psi,x',t)$ for some $x',t'$. We will define $g_{i+1}(\theta)$ as $\theta$ if it is present, or $\theta_1$ if it is present, else as $\theta_2$.
\end{itemize}
We can now see that the $\pi'$ we construct is an accepting run, the proof being similar to our earlier reasoning: we define $g_i$ for each position $i$ which ensures that every obligation of type $\psi$ in $(\Theta_i,\id_i)$ is mapped to a formula of type $\psi$ in $(\Theta'_i,\id'_i)$.

\subsection{Proof of Theorem \ref{thm:red-bdd-mitl}}
\mitlbdd*
\begin{proof}
    We take an arbitrary node $(\Theta,\id)$ in $\obred(\varphi)$ such that there is a run from $\init_\varphi$ to $(\Theta,\id)$, say $\rho$, and show that there are at most $k_{\psi}$ obligations of type $\psi$ in $\Theta$. 
    \begin{itemize}
        \item If $\psi$ is a bounded until formula of the form $\psi_1 \U_I \psi_2$, we prove the statement using contradiction. We assume there are $K = k_{\psi}+1$ obligations of type $\psi$ in $\Theta$, say $\theta_1,\theta_2,\dots,\theta_{K}$ with $\theta_i = (\psi,x_i,t_i)$ for $1 \leq i \leq k_{\psi}+1$ and $x_1 > \dots > x_{K}$. Now, looking at two obligations $\theta_i, \theta_{i+1}$, as they are both present in $\Theta$, it means there was an obligation-set occuring before $\Theta$ in $\rho$ such that it contained obligations $\theta'_i=(\psi,x_i - x_{i+1},t_i + x_{i+1})$ and $\theta'_{i+1} = (\psi,0,t_{i+1}+x_{i+1})$, and the rule \keepboth\ was applicable on that position, meaning $t_i + x_{i+1} < I$ and $t_{i+1}+x_{i+1} + x_i - x_{i+1} = t_{i+1} + x_i > I$. As this will hold for each $\theta_i,\theta_{i+1}$ for $1 \leq i \leq K-1$, we see that for $1 \leq j \leq K-2$, $x_{j} - x_{j+2} > u-l$. Now, as $\varphi$ is an MITL formula, $u-l \geq 1$. Also, each $x_i \geq 0$ and $K \geq 2 + 2 \times \lceil \frac{l}{u-l} \rceil$, this means that $x_1 > u$. This is a contradiction because the waiting time for any obligation of type $\psi$ would be in $I$, and so $\theta_1$ cannot be active for a time longer than $u$.
        \item If $\psi$ is a bounded dual until formula of the form $\psi_1 \dual{\U}_I \psi_2$, we again prove the statement using contradiction. We assume there are $K = k_{\psi}+1$ obligations of type $\psi$ in $\Theta$, say $\theta_1,\theta_2,\dots,\theta_{K}$ with $\theta_i = (\psi,x_i,t_i)$ for $1 \leq i \leq K$ and $x_1 > \dots > x_{K}$. For some $\theta_i$ and $\theta_{i+1}$ in $\Theta$, as they are both present in $\Theta$, it means there was an obligation-set occuring before $\Theta$ in $\rho$ such that it contained obligations $\theta'_i=(\psi,x_i - x_{i+1},t_i + x_{i+1})$ and $\theta'_{i+1} = (\psi,0,t_{i+1}+x_{i+1})$. As the \keepboth\ rule was applicable here, $t_i + x_{i+1} < I$. Note that because the waiting time for a newly created $\psi$ obligation is always $u$, we have $t_i + x_i = u$. Using the two, we see that $t_i - t_{i+1} \geq u - l$. As this will hold for each $\theta_i,\theta_{i+1}$ for $1 \leq i \leq K-1$, and as previously stated, $u-l \geq 1$, this means that $t_1 > l$, which is a contradiction as the waiting time for any obligation of type $\psi$ would be exactly $u$, so $\theta_1$ cannot have a waiting time longer than $u$.
        \item If $\psi$ is an unbounded until or dual until formula or, by Definitions \ref{def:until_ubdd_opt} and \ref{def:dual_until_opt}, we see that at most 2 obligations or 1 obligation respevtively of type $\psi$ will be kept in any obligation-set, which is within the bound $K$, and so the statement holds in these case.
    \end{itemize}
    Hence, $\Theta$ has at most $k_{\psi}$ obligations of type $\psi$. 
\end{proof}

\section{Appendix for Section \ref{sec:symbolic}}
\begin{example}
    Looking at the example of the MITL formula $\varphi = \varphi_1 \U_{[2,10]} b$ for $\varphi_1 = \F_{[1,3]} a$ and with $M = 10$, looking at obligation-sets $\Theta = \{(\varphi,1.5,0.5),(\varphi_1,1.2,1.4)\}$ and $\Theta' = \{(\varphi,1.8,0.2),(\varphi_1,1.5,1.1)\}$, we see that $\Theta \regeq_M \Theta'$ because of the bijection $h$ s.t. $h((\varphi,1.5,0.5)) = (\varphi,1.8,0.2)$ and $h((\varphi_1,1.2,1.4)) = (\varphi_1,1.5,1.1)$.
\end{example}
\regionbisimulation*
\begin{proof}
Firstly, we show that $\regeq_{M}$ has a finite index. We see that we assign the first available id to newly created obligations, and there at at most $k_{\psi}$ obligations of type $\psi$ in any node. This along with the fact that the obligation-sets are well formed means that for any $(\Theta,id)$ in $\obred(\varphi)$ and any type $\psi$, $\range(\id,\psi) \subseteq \{1,\dots,k_{\psi}+1\}$, which gives us the finiteness of the number of equivalence classes of labelling functions. On the side of obligation-sets, finiteness of the number of equivalence classes simply follows from the fact that the timers are bounded and each equivalence class is given by: for each variable $y$ whether it is equal to $c$ or between $(c, c+1)$ for $0 \le c \le M$, and for all variables within the bound, an ordering of the fractional parts among them. 

Now, let $(\Theta, \id), (\Theta', \id')$ be two nodes such that $(\Theta,\id) \regeq_M (\Theta', \id')$. Consider a transition of $\obred(\varphi)$, $(\Theta, \id) \xRightarrow{\delta, a} (\Theta_2, \id_2)$. This can be broken down into three steps:
\begin{align*}
 (\Theta, \id) \xrightarrow{\delta} (\Theta + \delta, \id) \xrightarrow{a} (\Theta_1, \id_1) \xrightarrow{\optimize} (\Theta_2, \id_2)
\end{align*}
We will now show that each of these steps can be performed from $(\Theta', \id')$ with the analogous node being region equivalent. In other words we will construct the following sequence:
 \begin{align*}
 (\Theta', \id') \xrightarrow{\delta'} (\Theta' + \delta', \id') \xrightarrow{a} (\Theta'_1, \id'_1) \xrightarrow{\optimize} (\Theta'_2, \id'_2)
\end{align*}
such that  the corresponding intermediate nodes are region equivalent. 
\subparagraph*{Constructing $\delta'$.} Firstly, we see from the definition of region equivalence that we need a $\delta'$ s.t. $\integralclock{\delta'} = \integralclock{\delta}$, and a suitable value for the fractional value that we explain below. Let $Y$ be a set of values constructed as follows: for every $(\psi, x, t) \in \Theta$, if $x \le M$, then add $x$ to $Y$, if $t \neq \bot$, add $t$ to $Y$. Assume $Y = \{y_1, y_2, \dots, y_k\}$ and the ordering of fractional parts to be: $\fractional{y_1} \triangleleft_1 \fractional{y_2} \triangleleft_2 y_3 \cdots \triangleleft_{k-1} \fractional{y_k}$, with each $\triangleleft_j \in \{<, \le\}$. On a time elapse of $\delta$, each fractional value increases (even for timers, as we have defined) until they hit the next integer (when the fractional value becomes $0$ again) and then increase again. This results in a cyclic shift in the above ordering. Let the order of fractional values in $\Theta + \delta$ among the variables in $Y$ be $\fractional{y_i} \triangleleft_i \fractional{y_{i+1}} \triangleleft_{i+1} \cdots \fractional{y_k} < \fractional{y_1} \triangleleft_{1} \cdots \triangleleft_{i-1} \fractional{y_i}$. 

Now let us form a similar set of variables $Y'$ from $\Theta'$: for every $(\psi, x', t')\in \Theta'$, add $x'$ to $Y'$ if $\integralclock{x'} \le M$, and $t'$ to $Y'$ if $t' \neq \bot$. Since $\Theta' \regeq_M \Theta$, we have $Y' = \{ h(y) \mid y \in Y\}$ (where $h$ is the bijection witnessing the equivalence) and the ordering of fractional parts in $Y'$ is $\fractional{h(y_1)} \triangleleft_1 \fractional{h(y_2)} \triangleleft_2 \fractional{h(y_3)} \cdots \triangleleft_{k-1} \fractional{h(y_k)}$. We then choose such a $\delta'$ such that the ordering of fractional values in $\Theta' + \delta'$ is $\fractional{h(y_i)} \triangleleft_i \fractional{h(y_{i+1})} \triangleleft_{i+1} \cdots \fractional{h(y_k)} < \fractional{h(y_1)} \triangleleft_{1} \cdots \triangleleft_{i-1} \fractional{h(y_i)}$. This ensures $\Theta+\delta \regeq_M \Theta + \delta'$ by the same $h$. And because $(\Theta,\id) \regeq_M (\Theta',\id')$ by the bijection $h$, $(\Theta+\delta,\id) \regeq_{M} (\Theta'+\delta',\id')$ too.

\subparagraph*{Constructing $\Theta'_1$.} By definition, $\Theta_1$ is a minimal $a$-extension of $\Theta + \delta$, where we have written $a$-extension in short for $(0, a)$- extension. The node $\Theta_1$ retains some of the obligations from $\Theta + \delta$ and there are some fresh obligations. Let $R$ be the set of obligations in $\Theta + \delta$ that appear in $\Theta_1$ too. To construct $\Theta'_1$, we will retain the set $h(R) = \{ h(\theta) \mid \theta \in R\}$ and for the fresh obligations, we will choose values for the variables that enforce the region equivalence: for all clock variables, fresh obligations simply contain $0$; for timer variables appearing in Dual Until formulas, the value is given by the upper bound appearing in the interval and is therefore fixed without any choice; for timer variables appearing in Until formulas, we can choose a value that respects the integral values and the ordering of fractional parts, analogous to the way we chose $\delta'$. Computing $\id'_1$ the usual way, we see that because the ids for the obligations of $R$ will match because of the bijection $h$, and because we will add fresh obligations uniformly to $\Theta_1$ and $\Theta'_1$, we can see that $(\Theta_1,\id_1) \regeq_M (\Theta'_1,\id'_1)$.

\subparagraph*{Optimizing $\Theta'_1$ to get $\Theta'_2$.} Note that $\Theta_2$ is obtained from $\Theta_1$ by applying the optimization rules. Assume once again that we have an $h_1: \Theta_1 \to \Theta'_1$ witnessing the region equivalence. 

\begin{itemize}
\item Suppose $\theta$ is removed from $\Theta_1$ due to \removenew\ applied between $\theta', \theta \in \Theta$, then $h_1(\theta)$ can be removed from $\Theta'_1$ due to \removenew\ applied on $h_1(\theta'), h_1(\theta)$.
\item If \modifyold\ rule is applied to $\theta$ due to a fresh obligation $\theta'$, note that the same rule can be applied on $h_1(\theta)$ due to $h_1(\theta')$.
\item If \mmerge\ is applied between $\theta_1, \theta_2 \in \Theta_1$, then similarly, \mmerge\ can be applied between $h_1(\theta_1), h_1(\theta_2)$ in $\Theta'_1$ resulting in the associated variables being retained or removed. 
\item Finally, the rules applied on obligations with unbounded intervals can be analogously applied on $\Theta_1$ and $\Theta'_1$.
\end{itemize}
In summary, the $\optimize$ function retains or modifies the associated variables, with the resulting $\Theta'_2$ being a subset of the original values in $\Theta_2$. Since the original set of values was region equivalent, the retained subset in both $\Theta_2$ and $\Theta'_2$ continue to be region equivalent. After the function $\id'_2$ is computed the usual way, we see that we can use $h_1$ to get a corresponding bijection such that $(\Theta_2,\id_2) \regeq_M (\Theta'_2,\id'_2)$.
\end{proof}

\oblregcorr*
\begin{proof}
The proof of this fact is analagous to the correctness of the standard region equivalence for timed automata. Consider a run of $\obred(\varphi)$:
\begin{align*}
\rho: \init_\varphi \xRightarrow{\delta_0, a_0} (\Theta_0, \id_0) \xRightarrow{\delta_1, a_1} (\Theta_1, \id_1) \cdots
\end{align*}
By definition, there is a transition $\init_\varphi \xra{a_0} [\Theta_0, \id_0]_M$ and $[(\Theta_i, \id_i)]_M \xrightarrow{a} [\Theta_{i+1}, \id_{i+1}]$ for all $i \ge 1$ in the obligation-region graph. Morever, a node $(\Theta_i, \id_i)$ is accepting iff $[(\Theta_i, \id_i)]_M$ is accepting. Hence we get an infinite run of the obligation-region graph which is accepting iff $\rho$ is accepting.

For the other direction, assume we have a run in the obligation-region graph:
\begin{align*}
\rho': \init_\varphi \xRightarrow{a_0} [(\Theta_0, \id_0)]_M \xRightarrow{a_1} [(\Theta_1, \id_1)]_M \cdots
\end{align*}
For the initial transition, by definition we have $\init_\varphi \xra{\delta_0, a_0} (\Theta'_0, \id_0)$ in $\obred(\varphi)$ such that $(\Theta'_0, \id'_0) \regeq_M (\Theta_0, \id_0)$. Secondly, as $\regeq_M$ is a bisimulation, there is a transition $(\Theta'_0, \id'_0) \xra{\delta'_1, a_1} (\Theta'_1, \id'_1)$ such that $(\Theta'_1, \id'_1) \regeq_M (\Theta_1, \id_1)$. This way, we can continue to build a run, leading to the following run in $\obred(\varphi)$:
\begin{align*}
\rho: \init_\varphi \xra{\delta'_0, a_0} (\Theta'_0, \id'_0) \xra{\delta'_1, a_1} (\Theta'_1, \id'_1) \cdots
\end{align*}
Suppose $\rho$ is accepting. We need to show that $\rho'$ is accepting. Consider an Until obligation $\theta = (\psi, x, t)$ appearing in some $(\Theta_j, \id_j)$ (the obligation used to identify equivalence classes in $\rho'$). Now, consider $h(\theta)$ which appears in $(\Theta'_j, \id'_j)$. Since $\rho'$ is accepting there is an index $k > j$ where $\id_j(\theta)$ becomes inactive. By the bisimulation property, $h(\theta)$ has a sequence of transitions in $\rho$ until the index $k$ where it disappears. Hence $\id'_j(h(\theta))$ is inactive at $k$. This shows that $\rho'$ is accepting. A symmetric argument shows that if $\rho'$ is accepting, then $\rho$ is accepting as well. 
\end{proof}

\mitlcomp*
\begin{proof}
    Let $N$ be the number of until and dual until subformulas $\psi$ of some formula $\varphi$, $K$ be the maximum among the values $k_{\psi}$, and $M$ be the maximum value among the end points of intervals of $\varphi$ that are in $\N$. Then, looking at the region graph for $\varphi$, we see that the number of possible nodes is the number of equivalence classes of obligation-sets of $\obred(\varphi)$ with respect to $\regeq_M$. The possible number of equivalence classes for the naming function is in $\Oo(2^{K.N})$, because the ids for any $\psi$ are taken from the range $\{1,\dots,k_{\psi}+1\}$. The possible number of obligations in any node are at most $K\times N$, and so the possible number of variables present(clocks and timers) are at most $2 \times K \times N$. The equivalence classes of obligation-sets can be described by indicating whether each variable has value equal to an integral value $c$, or between $(c,c+1)$, or above $M$, meaning at most $2\times M + 2$ different values. Along with this, the equivalence classes also indicate an ordering of the fractional values of each variable, meaning a possible of $(N\times K)!$ different orderings. This means the number of equivalence classes for obligation-sets is in $\Oo(2^{N\cdot K}\times(N\cdot K)!\times (M)^{N \cdot K})$. As $(N \cdot K)!$ is bounded by $(N \cdot K)^(N \cdot K)$, we can rewrite $(N \cdot K)!$ by $2^{(N \cdot K) \log (N \cdot K)}$ inside the $\Oo$. Secondly $M^{(N \cdot K)}$ can be written as $2^{N \cdot K \cdot \log M}$. Therefore, we get the size to be bounded by $Oo(2^{N \cdot K \cdot (1 + \log N \cdot K + \log M)})$. When constants are encoded in binary, the term $K$ is exponential in the encoding: hence the term is at most doubly exponential whereas when the constants are in unary, the complexity is singly-exponential.

    For the MITL$_{0, \infty}$ fragment, $K = 2$, a constant which does not depend on the formula. This gives the PSPACE bound. 
\end{proof}
\end{document}